\numberwithin{equation}{section} 
\numberwithin{table}{section} 
\numberwithin{figure}{section} 
\theoremstyle{plain}
\newtheorem{thm}{Theorem}[section]
\newtheorem{defn}[thm]{Definition}
\newtheorem{lem}[thm]{Lemma}
\newtheorem{cor}[thm]{Corrolary}
\newtheorem{prop}[thm]{Proposition}
\newtheorem{assumption}[thm]{Assumption}
\newtheorem{remark}[thm]{Remark}
\theoremstyle{nonumberplain}
\newtheorem{proof}{Proof}
\SetMathAlphabet{\mathcal}{normal}{OMS}{cmsy}{m}{n} % fixes ugly \mathcals
\SetMathAlphabet{\mathcal}{bold}{OMS}{cmsy}{m}{n} % fixes ugly \mathcals
\providecommand{\ie}{i.~e.~}
\providecommand{\eg}{e.~g.~}
\providecommand{\cf}{cf.~}
\providecommand{\R}{\mathbb{R}}
\providecommand{\C}{\mathbb{C}}
\renewcommand{\C}{\mathbb{C}}
\providecommand{\T}{\mathbb{T}}
\renewcommand{\T}{\mathbb{T}}
\providecommand{\N}{\mathbb{N}}
\providecommand{\Z}{\mathbb{Z}}
\providecommand{\ii}{\mathrm{i}}
\providecommand{\e}{\mathrm{e}}
\providecommand{\Hil}{\mathcal{H}}
\providecommand{\eps}{\varepsilon}
\providecommand{\Cont}{\mathcal{C}}
\providecommand{\ker}{\mathrm{ker} \, }
\providecommand{\ran}{\mathrm{ran} \, }
\providecommand{\ker}{\mathrm{ker} \,}
\providecommand{\dd}{\mathrm{d}}
\providecommand{\id}{\mathrm{id}}
\providecommand{\order}{\mathcal{O}}
\providecommand{\Fourier}{\mathcal{F}}
\providecommand{\abs}[1]{\left \lvert #1 \right \rvert}
\providecommand{\sabs}[1]{\lvert #1 \vert}
\providecommand{\babs}[1]{\bigl \lvert #1 \bigr \rvert}
\providecommand{\norm}[1]{\left \lVert #1 \right \rVert}
\providecommand{\snorm}[1]{\lVert #1 \rVert}
\providecommand{\bnorm}[1]{\bigl \lVert #1 \bigr \rVert}
\providecommand{\Bnorm}[1]{\Bigl \lVert #1 \Bigr \rVert}
\providecommand{\scpro}[2]{\left \langle #1 , #2 \right \rangle}
\providecommand{\bscpro}[2]{\bigl \langle #1 , #2 \bigr \rangle}
\providecommand{\sopro}[2]{\vert #1 \rangle \langle #2 \vert}
\providecommand{\s}[1]{\mathcal{#1}}
\providecommand{\ncint}{\mathrel{{\ooalign{$\int$\cr\kern+.07em\raise.15ex\hbox{$\pmb{\scriptstyle-}$}\cr}}}}           \providecommand{\ncpartial}{\mathrel{{\ooalign{$\partial$\cr\kern+.29em\raise.79ex\hbox{$\pmb{\scriptstyle-}$}\cr}}}}
\providecommand{\e}{\mathrm{e}}
\providecommand{\ii}{\mathrm{i}}
\providecommand{\Hil}{\mathfrak{H}}
\providecommand{\Cont}{\mathcal{C}}
\providecommand{\Schwartz}{\mathcal{S}}
\providecommand{\Zak}{\mathcal{Z}}
\providecommand{\Msymb}{\mathcal{M}}
\providecommand{\Op}{\mathfrak{Op}}
\providecommand{\directsum}{\oplus}
\providecommand{\orthsum}{\oplus_{\perp}}
\providecommand{\Maxwell}{\mathbf{M}}
\providecommand{\Hper}{\Hil_0}
\providecommand{\Jper}{J_0}
\providecommand{\Gper}{G_0}
\providecommand{\Pper}{P_0}
\providecommand{\Qper}{Q_0}
\providecommand{\Mper}{\Maxwell_0}
\providecommand{\HperT}{\mathfrak{h}_0}
\providecommand{\BZ}{\mathbb{B}}
\providecommand{\WS}{\mathbb{W}}
\providecommand{\Jrot}{J_{\Rot}}
\providecommand{\Jreg}{\Jper^{\mathrm{reg}}}
\providecommand{\Greg}{\Gper^{\mathrm{reg}}}
\providecommand{\Qreg}{\Qper^{\mathrm{reg}}}
\providecommand{\Preg}{\Pper^{\mathrm{reg}}}
\providecommand{\Jphys}{\mathbf{J}}
\providecommand{\Gphys}{\mathbf{G}}
\providecommand{\Mphys}{\mathbf{M}}
\providecommand{\Pphys}{\mathbf{P}}
\providecommand{\Qphys}{\mathbf{Q}}
\providecommand{\Hoer}[1]{S^{#1}_{\rho}}
\providecommand{\Hoerm}[2]{S^{#1}_{#2}}
\providecommand{\Hoereq}[1]{S^{#1}_{\rho,\mathrm{eq}}}
\providecommand{\Hoermeq}[2]{S^{#1}_{#2,\mathrm{eq}}}
\providecommand{\SemiHoereq}[1]{A S^{#1}_{\rho,\mathrm{eq}}}
\providecommand{\SemiHoermeq}[2]{A S^{#1}_{#2,\mathrm{eq}}}
\providecommand{\eff}{\mathrm{eff}}
\providecommand{\eff}{\mathrm{eff}}
\providecommand{\Div}{\mathbf{Div}}
\renewcommand{\div}{\mathbf{div}}
\providecommand{\Rot}{\mathbf{Rot}}
\providecommand{\curl}{\nabla_x^{\times}}
\providecommand{\Grad}{\mathbf{Grad}}
\providecommand{\div}{\mathbf{div}}
\providecommand{\grad}{\nabla_x}
\providecommand{\Index}{\mathcal{I}}
\providecommand{\derln}{\Upsilon}
\providecommand{\Schwartz}{\mathcal{S}}
\providecommand{\domain}{\mathfrak{D}}
\providecommand{\domainT}{\mathfrak{d}}
\title{The Perturbed Maxwell Operator as Pseudodifferential Operator}
\author{Giuseppe De Nittis${}^{\ast}$ \& Max Lein${}^{\star}$}
\begin{document}

\maketitle
\vspace{-9mm}
\begin{center}
	$^{\ast}$ Department Mathematik, Universität Erlangen-Nürnberg \linebreak
	Cauerstrasse 11, D-91058 Erlangen, Germany \linebreak
	{\footnotesize \href{mailto:denittis@math.fau.de}{\texttt{denittis@math.fau.de}}}
	\medskip
	\\
	$^{\star}$ Kyushu University, Faculty of Mathematics \linebreak
	744 Motooka, Nishiku, Fukuoka, 819-0395, Japan \linebreak
	{\footnotesize \href{mailto:max.lein@me.com}{\texttt{max.lein@me.com}}}
\end{center}
\begin{abstract}
	As a first step to deriving effective dynamics and ray optics, we prove that the perturbed periodic Maxwell operator in $d = 3$ can be seen as a pseudo\-differential operator. This necessitates a better understanding of the periodic Maxwell operator $\Mper$. In particular, we characterize the behavior of $\Mper$ and the physical initial states at small crystal momenta $k$ and small frequencies. Among other things, we prove that generically the band spectrum is symmetric with respect to inversions at $k = 0$ and that there are exactly $4$ ground state bands with approximately linear dispersion near $k = 0$. 
\end{abstract}
\noindent{\scriptsize \textbf{Key words:} Maxwell equations, Maxwell operator, Bloch-Floquet theory, pseudodifferential operators}\\ 
{\scriptsize \textbf{MSC 2010:} 35S05, 35P99, 35Q60, 35Q61, 78A48}

\newpage
\tableofcontents

\pagestyle{headings}

\section{Introduction} % (fold)
\label{intro}
Photonic crystals are to the transport of light (electromagnetic waves) what crystalline solids are to the transport of electrons \cite{Joannopoulos_Johnson_Winn_Meade:photonic_crystals:2008}. Progress in the manufacturing techniques have allowed physicists to engineer photonic crystals with specific properties -- which in turn has stimulated even more theoretical studies. One topic which has seen relatively little attention, though, is the derivation of \emph{effective dynamics} in perturbed photonic crystals for states from a narrow range of intermediate frequencies (\eg \cite{Onoda_Murakami_Nagaosa:geometrics_optical_wave-packets:2006,Raghu_Haldane:quantum_Hall_effect_photonic_crystals:2008,Allaire_Palombaro_Rauch:diffractive_Bloch_wave_packets_Maxwell:2012,Esposito_Gerace:photonic_crystals_broken_TR_symmetry:2013}). Mathematically rigorous results are even more scarce: apart from \cite{Markowich_Poupaud:Maxwell_homogenization_energy_density:1996} concerning only the unperturbed case, the only rigorous work covering \emph{second}-order perturbations is by Allaire, Palombaro and Rauch \cite{Allaire_Palombaro_Rauch:diffractive_Bloch_wave_packets_Maxwell:2012}. Hence, the correct form of the subleading-order terms has not yet been established -- rigorously or non-rigorously.

This paucity of results motivated the two authors to apply a perturbation scheme developed by Panati, Spohn and Teufel \cite{PST:sapt:2002,PST:effective_dynamics_Bloch:2003}, \emph{space-adiabatic perturbation theory}, to derive effective dynamics and ray optics equations for adiabatically perturbed Maxwell operators. Among other things, we settle the important question about the correct form of the next-to-leading order terms in the ray optics equations; these terms are necessary to explain topological effects in photonic crystals. The current paper is a preliminary, but necessary step to implement space-adiabatic perturbation theory \cite{DeNittis_Lein:sapt_photonic_crystals:2013}: we establish that the Maxwell operator can be seen as a \emph{semiclassical pseudodifferential operator} ($\Psi$DO) with band structure defined over the cotangent bundle over the Brillouin torus.

This is not just the content of an innocent lemma, it turns out there are quite a few technical and conceptual hurdles to overcome. To mention but one, we need a better understanding of the band structure of the periodic Maxwell operator. Despite the body of work on periodic Maxwell operators (see \eg \cite{Kuchment:math_photonic_crystals:2001} for a review), proofs of rather fundamental results are either scattered throughout the literature or, in some cases, seem to have not been published at all. 
\medskip

\noindent
Before we expound on this point in more detail, let us recall the $L^2$-theory of electromagnetism first established in \cite{Birman_Solomyak:L2_theory_Maxwell_operator:1987}. The two dynamical equations 
\begin{align}
	\partial_t \mathbf{E} &= + \eps^{-1} \nabla_x \times \mathbf{H} 
	, 
	&&
	\partial_t \mathbf{H} = - \mu^{-1} \nabla_x \times \mathbf{E}
	, 
	\label{intro:eqn:traditional_Maxwell_equations_dynamics} 
\end{align}
can be recast as a time-dependent Schrödinger equation
\begin{align}
	\ii \partial_t \Psi = \Mphys_w \Psi 
	\label{intro:eqn:Maxwell_Schroedinger_type}
\end{align}
where $\Psi = (\mathbf{E},\mathbf{H})$ consists of the electric field $\mathbf{E} = (E_1,E_2,E_3)$ and the magnetic field $\mathbf{H} = (H_1 , H_2 , H_3)$, and 
\begin{align}
	\Mphys_w &:= \left (
	\begin{matrix}
		0 & + \ii \, \eps^{-1} \, \nabla_x^{\times} \\
		- \ii \, \mu^{-1} \, \nabla_x^{\times} & 0 \\
	\end{matrix}
	\right )
	\label{intro:eqn:Maxwell_operator}
\end{align}
is the \emph{Maxwell operator}. Here we used $\nabla_x^{\times}$ as shorthand for the curl (\cf Appendix~\ref{appendix:curl}). The second set of Maxwell equation which imposes the absence of sources, 
\begin{align}
	\nabla_x \cdot \eps \mathbf{E} &= 0 
	, 
	&&
	\nabla_x \cdot \mu \mathbf{H} = 0 
	, 
	\label{intro:eqn:traditional_Maxwell_equations_no_source}
\end{align}
enter as a constraint on the initial conditions for equation~\eqref{intro:eqn:Maxwell_Schroedinger_type} or, equivalently, one can restrict the domain to the physical states of $\Mphys_w$ (see Section~\ref{Maxwell:generic}). We shall always make the following assumptions on the material weights $w = (\eps,\mu)$: 
\begin{assumption}[Material weights]\label{intro:assumption:eps_mu_generic}
	Assume $\eps , \mu \in L^{\infty} \bigl ( \R^3 , \mathrm{Mat}_{\C}(3) \bigr )$ are hermitian-matrix-valued functions which are bounded away from $0$ and $+\infty$, \ie $0 < c \, \id_{\R^3} \leq \eps , \mu \leq C \, \id_{\R^3}$ for some $0 < c \leq C < \infty$. We say the material weights $(\eps,\mu)$ are \emph{real} iff their entries are all real-valued functions. 
\end{assumption}
These assumptions are rather natural in the setting we are interested in: First of all, asking for boundedness of $\eps$ and $\mu$ only instead of continuity is necessary to include the most common cases, because many photonic crystals are made by alternating two different materials, \eg a dielectric and air, in a periodic fashion. The selfadjointness of the multiplication operator defined by the \emph{electric permittivity tensor} $\eps^* = \eps$ and the \emph{magnetic permeability tensor} $\mu^* = \mu$ ensure that the medium neither absorbs nor amplifies electromagnetic waves. The positivity of $\eps$ and $\mu$ excludes the case of metamaterials with negative refraction indices (see \eg \cite{Smith_Padilla_et_al:metamaterials:2000}); moreover, combined with the boundedness away from $0$ and $+\infty$, it implies that $\eps^{-1}$ and $\mu^{-1}$ exist as bounded operators which again satisfy Assumption~\ref{intro:assumption:eps_mu_generic}. Lastly, our assumptions also include the interesting case of \emph{gyrotropic photonic crystals} where the offdiagonal entries of $\eps = \eps^*$ and $\mu = \mu^*$ are complex-valued functions.

Under these assumptions, we can proceed with a rigorous definition of the Maxwell operator \eqref{intro:eqn:Maxwell_operator}: it can be conveniently factored into
\begin{align}
	\Mphys_w = W \, \Rot 
	\, .
	\label{Maxwell:eqn:Maxwell_physical} 
\end{align}
where the first term is the bounded operator involving the weights 
\begin{align}
	W(\hat{x}) 
	:= \left (
	\begin{matrix}
		\eps^{-1}(\hat{x}) & 0 \\ 
		0 & \mu^{-1}(\hat{x}) \\
	\end{matrix}
	\right )
	\label{Maxwell:eqn:max1}
\end{align}
and the free Maxwell operator
\begin{align}\label{intro:eqn:Rot}
	\Rot := \left (
	\begin{matrix}
		0 & + \ii \, \curl \\
		- \ii \, \curl & 0 \\
	\end{matrix}
	\right ) 
	= \left (
		\begin{matrix}
			0 & + \ii \, \mathbf{curl} \\
			- \ii \, \mathbf{curl} & 0 \\
		\end{matrix}
		\right )
	\, .
\end{align}
$\Rot$ equipped with the domain $\domain := \domain(\Rot) \subset L^2(\R^3,\C^6)$ is selfadjoint (see Appendix~\ref{appendix:curl} for a precise characterization of $\domain$). For reasons that will be clear in the following, we refer to \eqref{Maxwell:eqn:Maxwell_physical} as the \emph{physical representation} of the Maxwell operator. From the representation \eqref{Maxwell:eqn:Maxwell_physical} one gets two immediate consequences: first, $\domain(\Maxwell_w) = \domain$ since $W$ is bounded and second, $\Maxwell_w$ is \emph{not} self-adjoint on $L^2(\R^3,\C^6)$. In order to cure the lack of selfadjointness one introduces the \emph{weighted} scalar product
\begin{align}
	\bscpro{\Psi}{\Phi}_w := \bscpro{\Psi}{W^{-1} \Phi}_{L^2(\R^3,\C^6)} 
	= \bscpro{W^{-1} \Psi}{\Phi}_{L^2(\R^3,\C^6)} 
	\, . 
	\label{intro:eqn:relation_weighted_unweighted_scalar_product}
\end{align}
on the Banach space $L^2(\R^3,\C^6)$, and we will denote this Hilbert space with $\Hil_w$. Then, one can show that the Maxwell operator $\Mphys_w$ is self-adjoint on $\domain \subset \Hil_w$ (\cf Theorem~\ref{Maxwell:thm:selfadjointness}). Only with respect to the correctly weighted scalar product, the evolutionary semigroup $\e^{- \ii t \Mphys_w}$ is unitary -- which physically corresponds to conservation of field energy $\mathcal{E} \bigl ( \mathbf{E}(t),\mathbf{H}(t) \bigr ) = \mathcal{E}(\mathbf{E},\mathbf{H})$,
\begin{align*}
	\mathcal{E}(\mathbf{E},\mathbf{H}) &= \frac{1}{2} \int_{\R^3} \dd x \, \mathbf{E}(x) \cdot \eps(x) \mathbf{E}(x) + \frac{1}{2} \int_{\R^3} \dd x \, \mathbf{H}(x) \cdot \mu(x) \mathbf{H}(x)
	\\
	&
	= \frac{1}{2} \, \bnorm{(\mathbf{E},\mathbf{H})}_w^2 
	\, . 
\end{align*}
\emph{Periodic Maxwell operators} describe photonic crystals; here, the material weights $\eps$ and $\mu$ are periodic with respect to some lattice $\Gamma$. As the analog of periodic Schrödinger operators, one can use Bloch-Floquet theory to analyze the properties of $\Mphys_w$ (\cf Section~\ref{periodic}). Hence, many properties of photonic crystals mimic those of crystalline solids (both physically and mathematically). However, the rapidly increasing interest for  photonic crystals resides in the fact that, as they are artificially created by patterning several materials, they can be engineered to have certain desired properties. To name one example, one of the early successes was to design a \emph{photonic semiconductor} with a band gap in the frequency spectrum \cite{Johnson_Joannopoulos:3d_photonic_crystal_band_gap:2000,Joannopoulos_Johnson_Winn_Meade:photonic_crystals:2008}. Such a “semiconductor for light” is of great interest to the quantum optics community (\eg \cite{Yablonovitch:photonic_band_gap:1993}). 

Since perfectly periodic media are only a mathematical abstraction, one is led to study more realistic models of photonic crystals. One well-explored possibility is to include effects of disorder by interpreting $\eps$ and $\mu$ as random variables and leads to the “Anderson localization of light” (see \eg \cite{John:Anderson_localization_light:1991,Figotin_Klein:localization_classical_waves_I:1996,Figotin_Klein:localization_classical_waves_II:1997} and references therein). We will concern ourselves with another class of perturbations where the perfectly periodic weights $\eps$ and $\mu$ are modulated slowly, 
\begin{align}
	\eps_{\lambda}(x) := \frac{\eps(x)}{\tau_{\eps}(\lambda x)^2} 
	\, , 
	&&
	\mu_{\lambda}(x) := \frac{\mu(x)}{\tau_{\mu}(\lambda x)^2} 
	\, . 
	\label{intro:eqn:slow_modulation_material_constants}
\end{align}
The perturbation parameter $\lambda \ll 1$ quantifies the separation of spatial scales on which $(\eps,\mu)$ and the scalar \emph{modulation functions} $(\tau_{\eps},\tau_{\mu})$ vary. The latter are assumed to verify the following 
\begin{assumption}[Modulation functions]\label{Maxwell:assumption:modulation_functions}
	Suppose $\tau_{\eps} , \tau_{\mu} \in \Cont^{\infty}_{\mathrm{b}}(\R^3)$ are bounded away from $0$ and $+\infty$ as well as $\tau_{\eps}(0) = 1$ and $\tau_{\mu}(0) = 1$. 
\end{assumption}
To shorten the notation, we define $\Mphys_{\lambda} := \Mphys_{(\eps_{\lambda},\mu_{\lambda})}$ and $\Hil_{\lambda} := \Hil_{(\eps_{\lambda},\mu_{\lambda})}$. 
\medskip

\noindent
As mentioned in the very beginning our goal is to rigorously derive both, the effective “quantum-like” and “semiclassical” dynamics for perturbed Maxwell operators $\Mphys_{\lambda}$ in the adiabatic limit $\lambda \ll 1$ \cite{DeNittis_Lein:sapt_photonic_crystals:2013}. Apart from ray optics, we will derive \emph{effective light dynamics} $\e^{- \ii t \Mphys_{\eff}}$ which approximate the full light dynamics $\e^{- \ii t \Mphys_{\lambda}}$ for initial states supported in a narrow range of frequencies, 
\begin{align}
	\Bnorm{\bigl ( \e^{- \ii t \Mphys_{\lambda}} - \e^{- \ii t \Mphys_{\eff}} \bigr ) \, \pmb{\Pi}_{\lambda}}_{\Hil_{\lambda}} &= \order(\lambda^{\infty}) 
	\, . 
	\label{intro:eqn:approximate_dynamics}
\end{align}
$\pmb{\Pi}_{\lambda}$ is the projection on the superadiabatic subspace associated with a narrow range of frequencies and, up to a unitary transformation, the effective operator $\Mphys_{\eff}$ can be constructed order-by-order in $\lambda$ as the Weyl quantization $\Op_{\lambda}(\Msymb_{\eff})$ of a semiclassical symbol; in case additional assumptions are placed on the frequency bands, the leading-order terms are given by 
\begin{align*}
	\Msymb_{\eff}(r,k) &= \sum_{n \in \Index} \tau_{\eps}(r) \, \tau_{\mu}(r) \, \omega_n(k) \, \sopro{\chi_n}{\chi_n} 
	+ \order(\lambda) 
	\, . 
\end{align*}
Here, the $\omega_n$ are the Bloch frequency band functions and $\chi_n$ denotes a fixed orthonormal basis in the reference space \cite[Theorem~3.1]{DeNittis_Lein:sapt_photonic_crystals:2013}. As usual one can also prove that the subleading-order terms of $\Msymb_{\eff}(r,k)$ contain geometric quantities such as the Berry connection. 

Similarly, the superadiabatic projection $\pmb{\Pi}_{\lambda}$ is also constructed on the level of symbols in terms of $\pmb{\Msymb}_{\lambda}$, the symbol of the Maxwell operator, and hence, proving that the Maxwell operator is a $\Psi$DO associated to a semiclassical symbol is the first order of business. 
\begin{thm}\label{intro:thm:perturbed_Maxwell_psuedo}
	Suppose Assumptions~\ref{periodic:assumption:periodic_eps_mu} on the material weights $(\eps,\mu)$ and \ref{Maxwell:assumption:modulation_functions} on the modulation functions $(\tau_{\eps},\tau_{\mu})$ are satisfied. Then the Maxwell operator (in Zak representation) $\Mphys^{\Zak}_{\lambda} = \Op_{\lambda}(\pmb{\mathcal{M}}_{\lambda})$ is the pseudodifferential operator associated to 
	\begin{align}
		\pmb{\mathcal{M}}_{\lambda}(r,k) &= 
		\left (
		\begin{matrix}
			\tau_{\eps}^2(r) & 0 \\
			0 & \tau_{\mu}^2(r) \\
		\end{matrix}
		\right )
		\, \Mper(k)
		+ \notag \\
		&\; \quad 
		+ \lambda \, W \, 
		\left (
		\begin{matrix}
			0 & - \ii \, \tau_{\eps}(r) \, \bigl ( \nabla_r \tau_{\eps} \bigr )^{\times}(r) \\
			+ \ii \, \tau_{\mu}(r) \, \bigl ( \nabla_r \tau_{\mu} \bigr )^{\times}(r) & 0 \\
		\end{matrix}
		\right )
		\label{intro:eqn:symbol_perturbed_Maxwell}
	\end{align}
	where 
	\begin{align*}
		\Mper(k) :& \negmedspace= W \, \Rot(k) 
		\\
		:& \negmedspace= \left (
		\begin{matrix}
			\eps^{-1}(\hat{y}) & 0 \\
			0 & \mu^{-1}(\hat{y}) \\
		\end{matrix}
		\right ) \, \left (
		\begin{matrix}
			0 & - (- \ii \nabla_y + k)^{\times} \\
			+ (- \ii \nabla_y + k)^{\times} & 0 \\
		\end{matrix}
		\right ) 
	\end{align*}
	is the periodic Maxwell operator acting on the fiber at $k$ defined in terms of the weight operator $W$ and the free Maxwell operator $\Rot(k)$. 
	The function $\pmb{\mathcal{M}}_{\lambda} \in \SemiHoermeq{1}{1} \left ( \mathcal{B} \bigl ( \domainT , L^2(\T^3,\C^6) \bigr ) \right )$ is an equivariant semiclassical operator-valued symbol in the sense of Definition~\ref{pseudo_Maxwell:defn:equivariant_symbols}.
\end{thm}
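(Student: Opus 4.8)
The plan is to peel the slow modulation off the periodic weight and then read the symbol off the Zak transform. Starting from the factorization $\Mphys_\lambda = W_\lambda\,\Rot$, I would use \eqref{intro:eqn:slow_modulation_material_constants} to write $W_\lambda(\hat x) = T(\lambda\hat x)\,W(\hat x)$ with $T(\lambda\hat x) = \mathrm{diag}\bigl(\tau_\eps(\lambda\hat x)^2,\tau_\mu(\lambda\hat x)^2\bigr)$ and the genuinely $\Gamma$-periodic $W(\hat x) = \mathrm{diag}\bigl(\eps^{-1}(\hat x),\mu^{-1}(\hat x)\bigr)$; since the $\tau^2$ are scalar they commute with $\eps^{-1},\mu^{-1}$ and the splitting is exact, giving $\Mphys_\lambda = T(\lambda\hat x)\,W(\hat x)\,\Rot$. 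The key observation to record at this stage is that the curl in \eqref{intro:eqn:Rot} stands to the \emph{right} of the weight, so differentiating $\Psi$ never differentiates $\tau(\lambda\,\cdot\,)$; consequently no $\order(\lambda)$ term is produced by the operator itself, and the entire subleading contribution in \eqref{intro:eqn:symbol_perturbed_Maxwell} must arise from the noncommutativity built into the symbol calculus.

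Next I would pass to the Zak representation, where the precise convention matters: using the Zak transform whose phase $\e^{-\ii k\cdot x}$ carries the \emph{full} position $x$ (rather than only its lattice part), multiplication by $\hat x$ is intertwined into $\ii\nabla_k$ exactly, with no residual fast-cell operator. Hence the slow factor becomes precisely $T(\lambda\hat x)\mapsto T(\Rl)$ with $\Rl = \ii\lambda\nabla_k$, while $-\ii\nabla_x\mapsto -\ii\nabla_y + k$ turns $\Rot$ into the fibered family $\Rot(k)$ and the periodic weight into fiberwise multiplication by $W(\hat y)$ on $L^2(\T^3,\C^6)$. The upshot is the clean operator identity $\Mphys^{\Zak}_\lambda = T(\Rl)\,\Mper(k)$, already in the form ``function of $\Rl$ times function of $k$''.

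To extract the Weyl symbol I would form the Moyal product of $a(r,k) := T(r)$, which depends on $r$ only, with $b(r,k) := \Mper(k) = W\,\Rot(k)$, which depends on $k$ only. Because $\partial_k a = 0$ and $\partial_r b = 0$, only the mixed derivative survives in each order, so $a\,\Weyl\,b = \sum_{n\ge 0}\tfrac{1}{n!}\bigl(\mp\tfrac{\ii\lambda}{2}\bigr)^n \partial_r^n T\cdot\partial_k^n\Mper$; and since $\Rot(k)$ is affine in $k$ one has $\partial_k^n\Mper = 0$ for $n\ge 2$, so the expansion \emph{terminates exactly after first order} and the resulting symbol is an identity rather than an asymptotic expansion. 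The zeroth-order term is $T(r)\,\Mper(k)$, reproducing the first line of \eqref{intro:eqn:symbol_perturbed_Maxwell}, and the single correction $\propto\lambda\,\nabla_r T\cdot W\,\nabla_k\Rot(k)$, evaluated with $\nabla_r T = \mathrm{diag}\bigl(2\tau_\eps\nabla_r\tau_\eps,2\tau_\mu\nabla_r\tau_\mu\bigr)$ and $\partial_{k_j}(-\ii\nabla_y+k)^\times = (e_j)^\times$, collapses to exactly the block-off-diagonal $\order(\lambda)$ term displayed there, with $W$ factored to the left; the placement of the signs $\mp\ii$ is fixed by the conventions for $\Rl$ and for $\Op_\lambda$.

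It remains to check that $\pmb{\mathcal{M}}_\lambda$ lies in the equivariant class $\SemiHoermeq{1}{1}\bigl(\mathcal{B}(\domainT,L^2(\T^3,\C^6))\bigr)$. The order one is inherited from $\Rot(k)\colon\domainT\to L^2(\T^3,\C^6)$, which is bounded with norm growing linearly in $k$; the factors $\tau_\eps,\tau_\mu\in\Cont^\infty_{\mathrm b}$ together with Assumption~\ref{Maxwell:assumption:modulation_functions} make all $r$-derivatives of $T$ and of the correction uniformly bounded, while $k$-derivatives only hit the affine $\Rot(k)$ and stay bounded $\domainT\to L^2$; equivariance follows from the $\e^{-\ii g\cdot y}$ intertwining relation satisfied by $W(\hat y)$ and $\Rot(k)$ under reciprocal-lattice shifts in the sense of Definition~\ref{pseudo_Maxwell:defn:equivariant_symbols}, together with the $k$-independence (hence trivial equivariance) of $T(r)$. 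I expect the main obstacle to be not the symbol identity, which terminates cleanly, but the domain bookkeeping: because $\Rot(k)$ carries the large gauge kernel of longitudinal fields, one must verify that $\domainT$ is the correct $k$-independent domain on which the symbol estimates hold uniformly, and that the intertwining $T(\lambda\hat x)\mapsto T(\Rl)$ and the ensuing $\Psi$DO calculus are justified rigorously for the unbounded fiber operator $\Rot(k)$ rather than only formally.
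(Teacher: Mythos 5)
Your proposal is correct and follows essentially the same route as the paper: both write $\Mphys_{\lambda}^{\Zak}$ as the product of the multiplication operator $S(\ii\lambda\nabla_k)^{-2}$ (your $T(\Rl)$) with the fibered periodic operator $\Mper^{\Zak}$, verify that each factor is an equivariant ($\Psi$DO-)symbol in the appropriate H\"ormander class, and observe that the Moyal product terminates after the first-order term because $\Mper(k)$ is affine in $k$. The domain issue you flag at the end is exactly what the paper settles beforehand via Proposition~\ref{periodic:prop:analyticity_Mper}, which shows $\domainT$ is $k$-independent so that $k \mapsto \Mper(k) \in \mathcal{B}\bigl(\domainT, L^2(\T^3,\C^6)\bigr)$ is a well-defined (linear, hence analytic) symbol.
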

For the precise definitions and the proof, we refer to Section~\ref{pseudo_Maxwell}. 
\medskip

\noindent
Despite the similarities to the case of the Bloch electron \cite{PST:effective_dynamics_Bloch:2003}, applying space-adiabatic perturbation theory to photonic crystals required us to solve numerous technical and conceptual problems. In addition to defining pseudo\-differential operators on weighted $L^2$-spaces, one other major difficulty is to make $\order(\lambda^n)$ estimates in norm, because the norm \emph{also} depends on $\lambda$ (see \eg equation~\eqref{intro:eqn:approximate_dynamics}). Such estimates are crucial when one wants to make sense of perturbation expansions of operators. This conceptual problem is solved by introducing a $\lambda$-independent auxiliary representation (\cf Section~\ref{Maxwell:perturbed}).

However, the biggest obstacle to control the symbol $\pmb{\Msymb}_{\lambda}$ is to gain a better understanding of the \emph{periodic} Maxwell operator $\Mper(k)$ and its band structure. In particular, pseudodifferential theory requires us to understand the \emph{pointwise} behavior of $\Mper(k)$ and associated objects. Even though $k \mapsto \Mper(k)$ is linear and defined on a $k$-independent domain, and thus trivially analytic, the splitting of the \emph{fiber} Hilbert space $\HperT = \Jper(k) \oplus_{\perp} \Gper(k)$ into physical and unphysical states is not even \emph{continuous} at $k = 0$. Here, $\HperT$ is defined as the Banach space $L^2(\T^3,\C^6)$ equipped with a scalar product analogous to \eqref{intro:eqn:relation_weighted_unweighted_scalar_product}, and elements of $\Jper(k)$ satisfy the source-free condition on the fiber space. We characterize how this discontinuity enters into the band structure of $\Mper(k)$, and show that it is connected to the \emph{ground state bands}, \ie those frequency bands which go to $0$ linearly as $k \rightarrow 0$. The precise band structure of $\Mper^{\Zak} = \int_{\BZ}^{\oplus} \dd k \, \Mper(k)$ is studied in great detail in Section \ref{periodic:fiber_operators} where the following result is proven:
\begin{thm}[The band picture of $\Mper^{\Zak}$]\label{intro:thm:band_picture}
	Suppose $\eps$ and $\mu$ satisfy Assumption~\ref{periodic:assumption:periodic_eps_mu}. 
	\begin{enumerate}[(i)]
		\item For each $n \in \Z$, the band functions $\R^3 \ni k \mapsto \omega_{n}(k)$ are continuous, analytic away from band crossings and $\Gamma^*$-periodic. 
		\item If the weights $(\eps,\mu)$ are real, then for all $n \in \Z$, there exists $j \in \Z$ such that $\omega_{n}(k) = - \omega_{j}(-k)$ holds for all $k \in \R^3$. 
		\item $\Mper^{\Zak}$ has $4$ ground state bands indexed by the set $\Index_{\mathrm{gs}}$ which are characterized as follows: 
		\begin{enumerate}[(1)]
			\item $\omega_n(k) = 0 $ $\Leftrightarrow$ $n \in \Index_{\mathrm{gs}}$ and $k = 0$. 
			\item $\displaystyle \omega_n(k) = \pm c_n(\underline{k}) \abs{k} + o(\abs{k})$ holds for $n \in \Index_{\mathrm{gs}}$ where the $c_n(\underline{k})$ are the positive eigenvalues of the matrix~\eqref{periodic:eqn:k_cdot_A} for the unit vector $\underline{k} := \frac{k}{\sabs{k}}$. 
		\end{enumerate}
	\end{enumerate}
\end{thm}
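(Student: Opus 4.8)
\section*{Proof proposal}

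The backbone of the argument is that $k \mapsto \Mper(k) = W \, \Rot(k)$ is an entire family of type (A): the domain $\domainT$ is $k$-independent, $\Rot(k)$ depends affinely on $k$ through the substitution $-\ii\nabla_y \mapsto -\ii\nabla_y + k$, and $W$ is a fixed bounded, positive operator, so that each $\Mper(k)$ is selfadjoint on $\HperT$ by Theorem~\ref{Maxwell:thm:selfadjointness}. For \emph{(i)} the analyticity of the band functions away from crossings then follows from Kato--Rellich analytic perturbation theory, while their continuity everywhere (including $k = 0$) follows from the norm-resolvent continuity of the analytic family, which survives even though the splitting $\HperT = \Jper(k) \orthsum \Gper(k)$ does not. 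The $\Gamma^*$-periodicity I would read off from the magnetic-translation unitaries $\tau_G \colon \psi \mapsto \e^{\ii G \cdot y} \psi$, $G \in \Gamma^*$, which satisfy $\tau_G \, \Mper(k) \, \tau_G^{-1} = \Mper(k - G)$ because $\eps,\mu$ are $\Gamma$-periodic; hence $\sigma(\Mper(k))$ is $\Gamma^*$-periodic and the bands may be labelled equivariantly.

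Part \emph{(ii)} is a pure symmetry argument. Since $\Rot(k)$ is off-diagonal, the unitary $Q = \mathrm{diag}(\id,-\id)$ commutes with the block-diagonal $W$ (so is unitary on $\HperT$) and anticommutes with $\Mper(k)$, which already gives the reflection $\sigma(\Mper(k)) = -\sigma(\Mper(k))$ at each fixed $k$. For real weights there is in addition the antiunitary complex conjugation $\mathcal{C}$: using $\mathcal{C} \, (-\ii\nabla_y + k) \, \mathcal{C} = -(-\ii\nabla_y - k)$ and $\mathcal{C} W \mathcal{C} = W$ one checks $\mathcal{C} \, \Mper(k) \, \mathcal{C} = -\Mper(-k)$, so that $\Mper(k)\psi = \omega\psi$ forces $\Mper(-k)(\mathcal{C}\psi) = -\omega \, (\mathcal{C}\psi)$. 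This is exactly $\omega_n(k) = -\omega_j(-k)$ after the appropriate reindexing $j$.

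For \emph{(iii)} I would first dispose of the easy implication in (1). A physical zero mode of $\Mper(k)$ is a pair $(\mathbf{E},\mathbf{H})$ that is curl-free, $(-\ii\nabla_y+k)^{\times}\mathbf{E} = 0 = (-\ii\nabla_y+k)^{\times}\mathbf{H}$, and whose $\eps\mathbf{E}, \mu\mathbf{H}$ are transverse. For $k \neq 0$ every curl-free field is a twisted gradient $\mathbf{E} = (-\ii\nabla_y+k)\phi$, and pairing the source-free condition with $\phi$ yields $\sscpro{(-\ii\nabla_y+k)\phi}{\eps(-\ii\nabla_y+k)\phi} = 0$; positivity of $\eps$ (Assumption~\ref{periodic:assumption:periodic_eps_mu}) forces $\mathbf{E} = 0$, and likewise $\mathbf{H} = 0$. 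Thus no band vanishes for $k \neq 0$. Because $\Mper(0)$ restricted to $\Jper(0)$ has discrete spectrum, $0$ is isolated there with a gap $\omega_* > 0$ to the next frequency, so by continuity only the bands emanating from the zero eigenspace at $k=0$ can reach $0$, and the analysis reduces to that eigenspace.

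The heart of the matter is the structure of that eigenspace and the counting. Solving $\nabla_y \cdot \bigl ( \eps(\nabla_y f_e + e) \bigr ) = 0$ for each constant $e \in \C^3$ produces a curl-free, source-free field $\mathbf{E}^{(e)} = \nabla_y f_e + e$, and analogously for $\mathbf{H}$, so the zero eigenspace of $\Mper(0)$ on $\Jper(0)$ is six-dimensional. The perturbation is purely linear, $\Mper(k) - \Mper(0) = W \, \bigl ( \begin{smallmatrix} 0 & -k\times \\ k\times & 0 \end{smallmatrix} \bigr )$, and restricting it to this space — where, after integration by parts, only the constant parts $e,h$ survive and the Gram matrix becomes $\mathrm{diag}(\eps_{\eff},\mu_{\eff})$ with the homogenised tensors $\eps_{\eff} e = \tfrac{1}{\abs{\WS}} \int_{\WS} \dd y \, \eps(\nabla_y f_e + e)$ — gives the effective generalized eigenvalue problem $-k \times h = \omega \, \eps_{\eff} e$ and $k \times e = \omega \, \mu_{\eff} h$, i.e.\ the plane-wave Maxwell relation of the homogenised medium encoded by the matrix~\eqref{periodic:eqn:k_cdot_A}, whose positive eigenvalues are the $c_n(\underline{k})$ (equivalently, eliminating $h$ yields $-\eps_{\eff}^{-1} \, k \times \bigl ( \mu_{\eff}^{-1}(k \times e) \bigr ) = \omega^2 e$, a matrix quadratic in $k$ with transverse eigenvalues $c_n(\underline{k})^2 \abs{k}^2$). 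This is precisely where the discontinuity of the splitting must be turned from a nuisance into the mechanism: the two longitudinal directions ($e \parallel k$ or $h \parallel k$) sit in the kernel of the effective matrix, and they are source-free \emph{only} at $k = 0$ — for $k \neq 0$ one has $(-\ii\nabla_y+k) \cdot \eps \mathbf{E}^{(e)} = k \cdot \eps \mathbf{E}^{(e)} \neq 0$ — so they leave $\Jper(k)$ into $\Gper(k)$ and, by the implication already proved, cannot carry physical bands. To make all of this rigorous without ever producing a continuous eigenprojection, I would not diagonalise on the six-dimensional space directly but set up a Feshbach--Schur/Grushin problem for $\Mper(k)$ near $(k,\omega) = (0,0)$, reducing the spectral equation to the finite-dimensional effective matrix above; counting its nonzero solutions then gives exactly four ground state bands with asymptotics $\omega_n(k) = \pm c_n(\underline{k}) \abs{k} + o(\abs{k})$, the remaining kernel being absorbed by $\Gper(k)$. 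The hardest technical point is exactly controlling this Grushin reduction uniformly across the discontinuity at $k = 0$.
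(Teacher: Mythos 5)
Parts (i), (ii) and (iii)(1) of your proposal are essentially sound. For (i) and (ii) you follow the same route as the paper (analytic type-(A) family plus Kato, equivariance for periodicity, and the antiunitary $\mathcal{C}$ with $\mathcal{C}\,\Mper(k)\,\mathcal{C} = -\Mper(-k)$ for real weights); your extra observation that $Q = \mathrm{diag}(\id,-\id)$ anticommutes with $\Mper(k)$ is correct but not needed for the stated claim. For (iii)(1) your argument is genuinely different and more elementary than the paper's: you write a curl-free field as $(-\ii\nabla_y+k)\phi$ for $k \notin \Gamma^*$ and kill it by pairing the source-free condition with $\phi$ and invoking positivity of $\eps$, whereas the paper reduces to the free operator via $\domainT = \ker\Div(k)\directsum\ran\Grad(k)$ and uses the explicitly known spectrum of $\Rot(k)$ (Lemma~\ref{appendix:Maxwell:lem:properties_fibration_Rot}); both work. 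Your description of the six-dimensional space $\ker\Mper(0)\cap\Jper(0)$ via cell problems and your homogenized generalized eigenvalue problem are equivalent to Lemma~\ref{periodic:lem:ground_state_k_0} and the matrix~\eqref{periodic:eqn:k_cdot_A} after a change of basis: your Gram matrix $\mathrm{diag}(\eps_{\eff},\mu_{\eff})$ is precisely what converts the orthonormalized matrix $k\cdot A$ of the paper into your system $-k\times h = \omega\,\eps_{\eff}e$, $k\times e = \omega\,\mu_{\eff}h$.

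The genuine gap is in (iii)(2) and the count $N=4$, and it is exactly the point you flag as ``the hardest technical point'' and then leave unresolved. A Feshbach--Schur/Grushin reduction near $(k,\omega)=(0,0)$ requires a reference projection, continuous (better: analytic) in $k$ through $k=0$, whose range at $k=0$ is the six-dimensional space $\mathrm{GS}$ and which remains transversal to the infinite-dimensional, $k$-dependent kernel $\Gper(k)$; the natural candidates --- $\Pper(k)$, or the spectral projection of $\Mper(k)$ onto a small interval around $0$ --- are both \emph{discontinuous} at $k=0$, precisely because $\dim\bigl(\Gper(k)\cap\Jreg(k)\bigr)$ jumps from $0$ to $2$. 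The paper's resolution is the regularized projection $\Preg(k)$ of Lemma~\ref{periodic:lem:regularized_projections}: $k\mapsto\Preg(k)$ is analytic on all of $\R^3$, agrees with $\Pper(0)$ at $k=0$, and $\Mper(k)\vert_{\Jreg(k)}$ is isospectral to $\Mper(k)$ with purely discrete spectrum (Theorem~\ref{periodic:thm:spectrum_Mper}~(iii)), so that $0$ becomes an isolated six-fold eigenvalue to which standard Kato reduction applies, while the two-dimensional subspace $\Gper(k)\cap\Jreg(k)$ accounts for exactly the two spurious flat branches, leaving $6-2=4$ physical ones. Without this construction (or an equivalent one built into your Grushin scheme) the counting and the asymptotics $\omega_n(k) = \pm c_n(\underline{k})\abs{k} + o(\abs{k})$ are not established. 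A second, smaller gap: you need $\mathrm{rank}\bigl(k\cdot A\bigr) = 4$ for \emph{every} $k\neq 0$, not just generically; the paper proves this via the $SO(3)$ covariance~\eqref{periodic:eqn:rotation_covariance_first-order_correction_ground_state} plus an explicit determinant for $k_0=(1,0,0)$, whereas in your formulation it reduces to the strict positivity of $\eps_{\eff}$ and $\mu_{\eff}$, which is true but still requires an argument.
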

The content of Theorem \ref{intro:thm:band_picture} is sketched in Figure~\ref{periodic:fig:frequency_bands}. Among other things, we prove that the ground state bands of the Maxwell operator always have a doubly degenerate conical intersection at $k = 0$ and $\omega = 0$. 
\begin{figure}[ht]
	\centering
		\boxed{\resizebox{110mm}{!}{\includegraphics{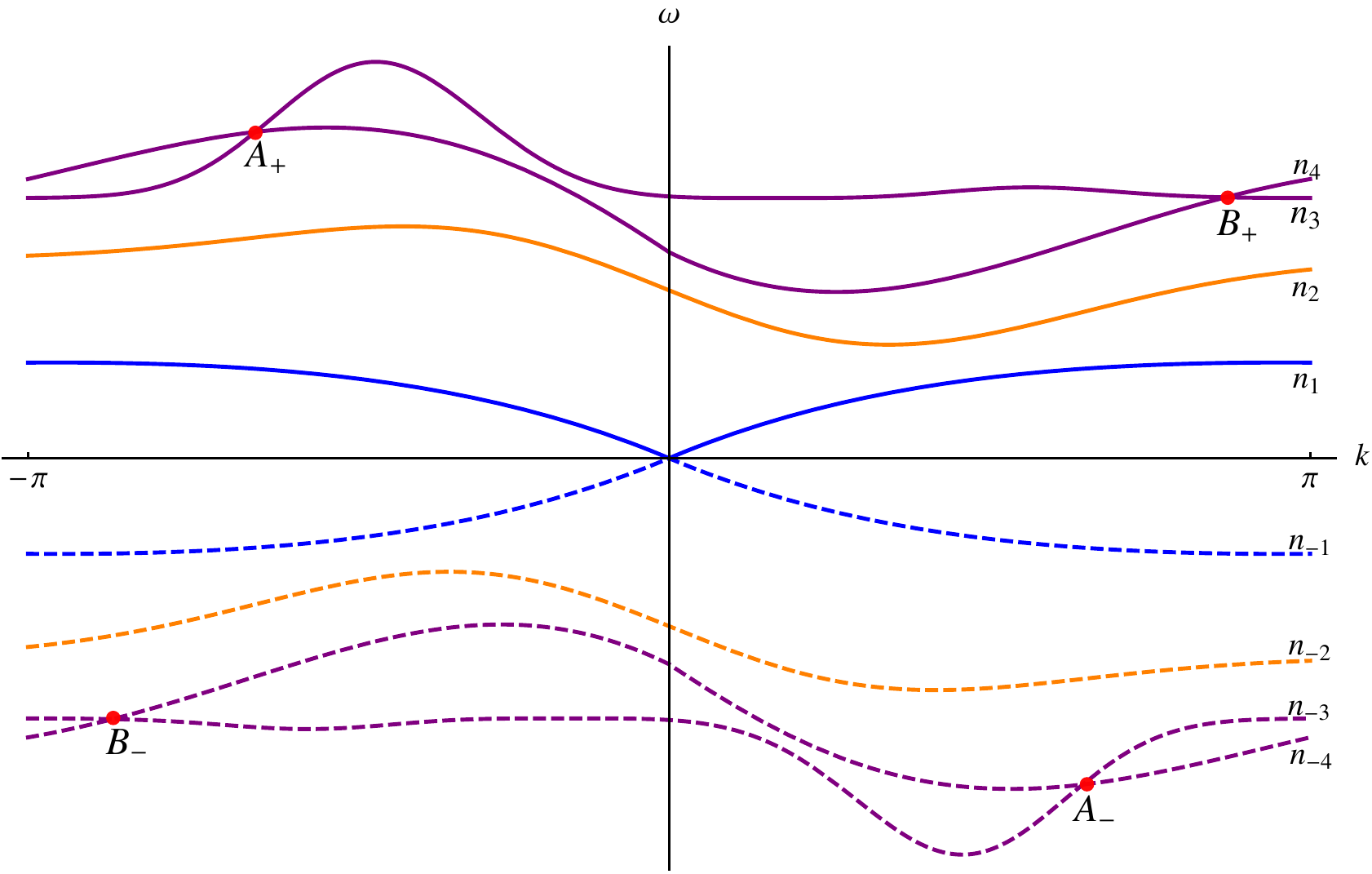}}}
	\caption{A sketch of a typical band spectrum of $\Mper(k) \vert_{\Jper(k)}$. The $2+2$ ground state bands with linear dispersion around $k = 0$ are blue. Positive frequency bands are drawn using solid lines while the lines for the symmetrically-related negative frequency bands are in the same color, but dashed. }
	\label{periodic:fig:frequency_bands}
\end{figure}
\medskip

\noindent
The remainder of the paper is dedicated to explaining and proving Theorem~\ref{intro:thm:perturbed_Maxwell_psuedo} and Theorem~\ref{intro:thm:band_picture}: In Section~\ref{Maxwell}, we give some basic facts on the Maxwell operator. Section~\ref{periodic} is devoted to the study of the properties of the periodic operator $\Mper^{\Zak}$ with a particular attention to the analysis of the band picture. Finally, in Section~\ref{pseudo_Maxwell} where discuss pseudodifferential theory on weighted Hilbert spaces and finish the proof of Theorem~\ref{intro:thm:perturbed_Maxwell_psuedo}. For the benefit of the reader, we have included some auxiliary results in Appendix~\ref{appendix:curl}. 
\medskip

\noindent
Before we proceed, let us collect some conventions and introduce notation used throughout the remainder of the paper.

\subsection{Notation and remarks} % (fold)
\label{intro:notation}
The Maxwell operator is naturally defined on \emph{weighted} $L^2$-spaces $\Hil_w$ where the scalar product is weighted by the tensors $w = (\eps,\mu)$ according to the prescription \eqref{intro:eqn:relation_weighted_unweighted_scalar_product}. We will use capital greek letters such as $\Psi$ and $\Phi$ to denote elements of $\Hil_w$ and small greek letters with the appropriate index to indicate they are the electric (first three) or the magnetic (last three) component\footnote{Note that even though physical electromagnetic fields are real-valued, we assume $\Psi \in \Hil_w$ takes values in the complex vector space $\C^6$, and hence our distinction in notation to the physical fields $(\mathbf{E},\mathbf{H})$. It turns out to be crucial in the analysis of photonic crystals to admit complex solutions. \label{intro:footnote:complex_fields}} , for instance $\Psi=(\psi^E,\psi^H)$ and $\Phi=(\phi^E,\phi^H)$. Componentwise the scalar product \eqref{intro:eqn:relation_weighted_unweighted_scalar_product} reads
\begin{align}
	\bscpro{\Psi}{\Phi}_w :\negmedspace & = \int_{\R^3} \dd x \,  \, \psi^E(x) \cdot \eps(x) \phi^E(x) +  \int_{\R^3} \dd x \, \psi^H(x) \cdot \mu(x) \phi^H(x) 
	\, . 
	\label{intro:eqn:scalar_product_explicit}
\end{align}
Let us point out that with this convention the complex conjugation is implicit in the scalar product like  $a \cdot b := \sum_{j = 1}^N \overline{a_j} \, b_j$ on $\C^N$. Equation \eqref{intro:eqn:scalar_product_explicit} leads to the natural (orthogonal) splitting 
\begin{align}
	\Hil_w :\negmedspace &= L^2_{\eps}(\R^3,\C^3) \orthsum L^2_{\mu}(\R^3,\C^3) 
	\, , 
	\notag 
\end{align}
where $L^2_{\eps}(\R^3,\C^3)$ is the Banach space $L^2(\R^3,\C^3)$ with the scalar product twisted by the tensor $\eps$ and similarly for $\mu$.

Even though the Hilbert space structure of $\Hil_w$ depends crucially on the weights $w = (\eps,\mu)$, the Assumption \ref{intro:assumption:eps_mu_generic} implies the equivalence of the norm $\norm{\cdot}_w$ with the usual $L^2(\R^3,\C^6)$-norm $\norm{\cdot}$. This means that $\Hil_w$ agrees with the usual $L^2(\R^3,\C^6)$ as Banach spaces. For many arguments in this paper, only the Banach space structure of $\Hil_w$ is important, and thus, whenever convenient, we will use the canonical identification of $\Hil_w \simeq L^2(\R^3,\C^6)$. In particular, any closed operator $\bf{T}$ on $\Hil_w$ can also be seen as a closed operator on $L^2(\R^3,\C^6)$ which we denote with the same symbol. We will use the same notation for weighted $L^2$-spaces over $\T^3$: for instance, the Hilbert space 
\begin{align*}
	\HperT := L^2_{\eps}(\T^3,\C^3) \orthsum L^2_{\mu}(\T^3,\C^3)
\end{align*}
is defined as the Banach space $L^2(\T^3,\C^6)$ equipped with a scalar product analogous to equation~\eqref{intro:eqn:scalar_product_explicit}. 

Let us turn to conventions regarding operators: Suppose $A : \domain_0(A) \subseteq \mathfrak{B}_1 \longrightarrow \mathfrak{B}_2$ is a possibly unbounded linear operator between the Banach spaces $\mathfrak{B}_1$ and $\mathfrak{B}_2$ defined on the dense domain $\domain_0(A)$. The operator $A$ is called \emph{closable} if and only if for every $\{\psi_n\} \subset \domain_0(A)$ such that $\psi_n \to 0$, then also $A \psi_n \to 0$. The \emph{closure} of the operator $A$ (still denoted with the same symbol) is the extension of $A$ to $\domain(A) := \overline{\domain_0(A)}^{\norm{\cdot}_A}$ with respect to the \emph{graph norm} 
\begin{align}
	\norm{\psi}_A := \sqrt{\snorm{\psi}_{\mathfrak{B}_1}^2 + \snorm{A \psi}_{\mathfrak{B}_2}^2} 
	. 
	\label{intro:eqn:definition:graph_norm}
\end{align}
When $\domain_0(A) = \domain(A)$, the operator $A$ is said to be \emph{closed}. A \emph{core} $\mathfrak{C}$ of a closed operator is any subset of $\domain(A)$ which is dense with respect to $\norm{\cdot}_A$. Given any closed operator $A : \mathcal{B}_1 \longrightarrow \mathcal{B}_2$ between Banach spaces, the kernel (or null space) and range of $A$ are defined as 
\begin{align*}
	\ker A :& \negmedspace= \bigl \{ \psi \in \s{B}_1 \; \vert \; A \psi = 0 \bigr \} \subset \domain(A) 
	\subseteq \mathfrak{B}_1 
	, 
	\\
	\mathrm{ran}_0 \, A :& \negmedspace= \bigl \{ A \psi \; \; \vert \; \; \psi \in \domain(A) \bigr \} 
	\subseteq \mathfrak{B}_2 
\end{align*}
While $\ker A$ is automatically a closed subspace of $\mathfrak{B}_1$, in general $\mathrm{ran}_0 \, A$ is not. For this reason, we need to introduce its closure $\ran A := \overline{\mathrm{ran}_0 \, A}^{\norm{\cdot}_{\mathfrak{B}_2}}$.

Other properties, most notably selfadjointness, crucially depend on the scalar product. Whenever the Hilbert structure of $\Hil_w$ is important, we will make this explicit either in the text or in notation. To give one example, we distinguish between the \emph{direct} sum $J \directsum G$ and the \emph{orthogonal} sum $J \orthsum G$ of vector spaces. 

We found it convenient to use the shorthand $v^{\times} \psi := v \times \psi$ to associate the antisymmetric matrix 
\begin{align}
	v^{\times} = \left (
	\begin{matrix}
		0 & -v_3 & +v_2 \\
		+v_3 & 0 & -v_1 \\
		-v_2 & +v_1 & 0 \\
	\end{matrix}
	\right )
	\, 
	\label{Maxwell:eqn:v_times_operator}
\end{align}
to any vectorial quantity $v = (v_1,v_2,v_3)$.
% subsection Notation (end)

\subsection{Acknowledgements} % (fold)
\label{intro:acknowledgements}
The authors thank L.~Esposito for sparking the interest in this topic. The foundation of this article was laid during the trimester program “Mathematical challenges of materials science and condensed matter physics”, and the authors thank the Hausdorff Research Institute for Mathematics for providing a stimulating research environment. Moreover, G.~D{.} gratefully acknowledges support by the Alexander von Humboldt Foundation and GNFM, “progetto giovani 2012”. M.~L{.} is supported by Deutscher Akademischer Austauschdienst. The authors also appreciate the useful comments and references provided by C.~Sparber and the two referees.
% subsection Acknowledgements (end)
% section Introduction (end)

\section{The perturbed Maxwell operator} % (fold)
\label{Maxwell}
We will use this section to recall standard facts on the Maxwell operator \cite{Birman_Solomyak:L2_theory_Maxwell_operator:1987,Kuchment:math_photonic_crystals:2001} and introduce the main definitions and notions. This initial part is completed by a compendium of classical results in vector field analysis sketched in Appendix \ref{appendix:curl}.

\subsection{General properties of the Maxwell operator} % (fold)
\label{Maxwell:generic}
In order to identify the domain $\domain(\Maxwell_w)$ explicitly we start with the free case $\Mphys_{w = (1,1)} = \Rot$ which is reviewed in detail in Appendix \ref{appendix:ROT}. Assumption~\ref{intro:assumption:eps_mu_generic} on $w = (\eps,\mu)$ implies that $\Hil_w \simeq L^2(\R^3,\C^6)$ agree as Banach spaces and that $W$ defines a bounded operator with bounded inverse. Moreover, $\Rot \vert_{\Cont^{\infty}_{\mathrm{c}}}$ is a densely defined operators on $\Hil_w$ and $\Rot$ is its unique closed extension defined on the domain $\domain:= \domain(\Rot)$ (\cf eq. \eqref{Maxwell:eqn:domain_Maxwell}). Since, the graph norms $\norm{\cdot}_{\Mphys_w}$ and $\norm{\cdot}_{\Rot}$ are equivalent, this immediately implies
\begin{align}\label{Maxwell:eqn:domain_Maxwell2}
	\domain(\Mphys_w) = \domain = \bigl ( \ker \Div \cap H^1(\R^3,\C^6) \bigr ) \oplus \ran \Grad
	, 
\end{align}
because $\Mphys_w \vert_{\Cont^{\infty}_{\mathrm{c}}} = W \, \Rot \vert_{\Cont^{\infty}_{\mathrm{c}}}$ is closable and its \emph{unique} closure is the product of the bounded operator $W$ and $(\Rot,\domain)$. 

The weighted  scalar products \eqref{intro:eqn:relation_weighted_unweighted_scalar_product} also implies $\Mphys_w$ is not only closed but also symmetric, and thus, selfadjoint: for all $\Psi , \Phi \in \domain$, we have 
\begin{align*}
	\bscpro{\Psi}{\Mphys_w \Phi}_w &= \bscpro{\Psi}{W^{-1} \, W \, \Rot \, \Phi}_{L^2(\R^3,\C^6)} 
	= \bscpro{\Rot \, \Psi}{\Phi}_{L^2(\R^3,\C^6)} 
	\\
	&= \bscpro{W^{-1} \, W \, \Rot \, \Psi}{\Phi}_{L^2(\R^3,\C^6)} 
	= \bscpro{\Mphys_w \Psi}{\Phi}_w 
	. 
\end{align*}
The weights in the scalar products imply that the Helmholtz-Hodge-Weyl-Leray decomposition of the domain \eqref{Maxwell:eqn:domain_Maxwell2} is no longer orthogonal with respect to $\scpro{\cdot \,}{\cdot}_w$. However, Theorem~\ref{appendix:curl:thm:Leray_decomposition} readily generalizes to the case with weights and yields an orthogonal splitting 
\begin{align}
	\Hil_w = \Jphys_w \orthsum \Gphys
	\label{Maxwell:eqn:H_w_orth_splitting}
\end{align}
where we identify the \emph{physical} (or \emph{transversal}) subspace
\begin{align}
	\Jphys_w = \ker \bigl ( \Div \, W^{-1} \bigr ) 
	= \left \{ \Psi \in \Hil_w \; \; \vert \; \; \Div \bigl ( W^{-1} \Psi \bigr ) = 0 \right \} 
	= W \, \Jphys 
\end{align}
and the \emph{unphysical} (or \emph{longitudinal}) subspace 
\begin{align}
	\Gphys = \ran \Grad 
	= \left \{ \Psi = \Grad \, \varphi \in \Hil_w \; \; \vert \; \; \varphi \in L^2_{\mathrm{loc}}(\R^3,\C^2) \right \} 
	= \ker \Rot 
	. 
\end{align}
We also call $\Gphys$ the space of \emph{zero modes}, because $\Gphys = \ker \Rot$ coincides with $\ker \Mphys_w$ as $W$ has a bounded inverse. From the first equation of \eqref{intro:eqn:relation_weighted_unweighted_scalar_product} we conclude that $\Jphys_w = \Gphys^{\perp_w}$ is the $\scpro{\cdot \,}{\cdot}_w$-orthogonal complement to $\Gphys$. We will denote the orthogonal projections onto $\Jphys_w$ and $\Gphys$ with $\Pphys_w$ and $\Qphys_w$. For later reference, we summarize these facts into a 
\begin{thm}[\cite{Birman_Solomyak:L2_theory_Maxwell_operator:1987}]\label{Maxwell:thm:selfadjointness}
	Suppose Assumption~\ref{intro:assumption:eps_mu_generic} on $\eps$ and $\mu$ is satisfied. 
	\begin{enumerate}[(i)]
		\item The Maxwell operator $\Mphys_w$ equipped with the \emph{$(\eps,\mu)$-independent} domain 
		\begin{align*}
			\domain &= \bigl ( \domain \cap H^1(\R^3,\C^6) \bigr ) \directsum \ran \Grad 
			= \bigl ( \ker \Div \cap H^1(\R^3,\C^6) \bigr ) \directsum \Gphys 
		\end{align*}
		defines a selfadjoint operator on $\Hil_w$, and $H^1(\R^3,\C^6)$ and $\Cont^{\infty}_{\mathrm{c}}(\R^3,\C^6)$ are cores. 
		\item The Maxwell operator $\Mphys_w = \Mphys_w \vert_{\Jphys_w} \orthsum 0 \vert_{\Gphys}$ is block diagonal with respect to the \emph{$(\eps,\mu)$-dependent} orthogonal decomposition of $\Hil_w = \Jphys_w \orthsum \Gphys$. In this decomposition, the domain splits into 
		\begin{align*}
			\domain = \bigl ( \domain \cap \Jphys_w \bigr ) \orthsum \Gphys 
			\, . 
		\end{align*}
		\item The restrictions of $\Mphys_w$ to $\Jphys_w$ or $\Gphys$ again define selfadjoint operators, and thus, the dynamics $\e^{- \ii t \Mphys_w}$ leave $\Jphys_w$ and $\Gphys$ invariant. 
	\end{enumerate}
\end{thm}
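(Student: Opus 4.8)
The plan is to build directly on the structural facts already assembled above, since the theorem is essentially a packaging of them. Closedness of $\Mphys_w$ on $\domain$ is immediate from $\Mphys_w = W \, \Rot$ with $W$ bounded and boundedly invertible and $(\Rot,\domain)$ closed, and symmetry with respect to $\scpro{\cdot}{\cdot}_w$ has been checked above. To promote symmetry to selfadjointness in part~(i) I would compute the $\Hil_w$-adjoint explicitly. Using $\scpro{\Mphys_w \Psi}{\Phi}_w = \scpro{W \Rot \Psi}{W^{-1}\Phi}_{L^2(\R^3,\C^6)} = \scpro{\Rot \Psi}{\Phi}_{L^2(\R^3,\C^6)}$ (the last step by selfadjointness of $W$ on $L^2$), the functional $\Psi \mapsto \scpro{\Mphys_w \Psi}{\Phi}_w$ is $\Hil_w$-bounded — equivalently $L^2$-bounded, the two norms being equivalent — precisely when $\Phi \in \domain(\Rot^{*}) = \domain$, where I invoke the selfadjointness of $\Rot$ on $L^2(\R^3,\C^6)$ from Appendix~\ref{appendix:ROT}. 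In that case $\scpro{\Rot \Psi}{\Phi}_{L^2(\R^3,\C^6)} = \scpro{\Psi}{\Rot \Phi}_{L^2(\R^3,\C^6)} = \scpro{\Psi}{W \Rot \Phi}_w$, so the adjoint acts as $\Mphys_w$ on $\domain$; hence $(\Mphys_w)^{*} = \Mphys_w$. An equivalent route, worth mentioning, is that $W^{-1/2} \colon \Hil_w \to L^2(\R^3,\C^6)$ is unitary (here $W>0$ so $W^{1/2}$ exists and is bounded with bounded inverse) and conjugates $\Mphys_w$ to $W^{1/2}\,\Rot\,W^{1/2}$ on $W^{-1/2}\domain$, which is selfadjoint by the same sandwiching argument.

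For the core statement I would argue that the graph norms $\norm{\cdot}_{\Mphys_w}$ and $\norm{\cdot}_{\Rot}$ are equivalent — again because $W^{\pm 1}$ are bounded — so any subset of $\domain$ that is $\norm{\cdot}_{\Rot}$-dense is also $\norm{\cdot}_{\Mphys_w}$-dense. Thus $\Cont^{\infty}_{\mathrm{c}}(\R^3,\C^6)$ and $H^1(\R^3,\C^6)$, which are cores for $\Rot$ by the appendix, remain cores for $\Mphys_w$.

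Parts~(ii) and~(iii) I would then obtain from abstract reducing-subspace theory applied to the selfadjoint operator $\Mphys_w$. The key identification, recorded above, is $\Gphys = \ker \Rot = \ker \Mphys_w$ together with $\Jphys_w = \Gphys^{\perp_w}$; for a selfadjoint operator the $\scpro{\cdot}{\cdot}_w$-orthogonal complement of the kernel is exactly $\overline{\ran \Mphys_w}$. Since $\ker \Mphys_w$ is closed and contained in $\domain$, it reduces $\Mphys_w$: for $\Psi \in \domain$ the component $\Qphys_w \Psi \in \Gphys \subset \domain$, hence $\Pphys_w \Psi = \Psi - \Qphys_w \Psi \in \domain$ as well, and $\Mphys_w \Psi = \Mphys_w \Pphys_w \Psi \in \Jphys_w$ because $\scpro{\Mphys_w \Psi}{\chi}_w = \scpro{\Psi}{\Mphys_w \chi}_w = 0$ for every $\chi \in \Gphys = \ker \Mphys_w$. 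This gives the block form $\Mphys_w = \Mphys_w\vert_{\Jphys_w} \orthsum 0\vert_{\Gphys}$ and the domain splitting $\domain = (\domain \cap \Jphys_w) \orthsum \Gphys$ of~(ii); the restrictions are selfadjoint in their respective fibers by the standard theorem on operators reduced by a closed subspace (with $0\vert_{\Gphys}$ trivially selfadjoint). Finally, since $\Pphys_w$ and $\Qphys_w$ commute with $\Mphys_w$ they commute with the unitary group $\e^{-\ii t \Mphys_w}$, yielding the claimed invariance of $\Jphys_w$ and $\Gphys$ in~(iii).

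The only genuinely non-formal point, and the one I expect to be the main obstacle, is the selfadjointness upgrade in~(i): symmetry plus closedness does not by itself yield selfadjointness, and the subtlety is that the statement is with respect to the \emph{weighted} inner product $\scpro{\cdot}{\cdot}_w$ rather than the flat $L^2$ one. The adjoint computation above is precisely the device that transfers the known selfadjointness of $\Rot$ through the bounded, boundedly invertible, positive weight $W$; once this is in place, parts~(ii) and~(iii) are bookkeeping with the orthogonal decomposition and standard reducing-subspace facts.
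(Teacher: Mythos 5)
Your argument is correct and follows essentially the same route as the paper: factor $\Mphys_w = W \, \Rot$, transfer closedness and the cores through the equivalence of the graph norms $\norm{\cdot}_{\Mphys_w}$ and $\norm{\cdot}_{\Rot}$, reduce the weighted adjoint to the selfadjointness of $\Rot$ on $L^2(\R^3,\C^6)$, and use $\Gphys = \ker \Mphys_w$ together with $\Jphys_w = \Gphys^{\perp_w}$ to obtain the block decomposition and the invariance of the dynamics. Your explicit computation of $\domain(\Mphys_w^{*})$ via $\bscpro{\Mphys_w \Psi}{\Phi}_w = \bscpro{\Rot \Psi}{\Phi}_{L^2(\R^3,\C^6)}$ correctly fills in the step the paper compresses into ``symmetric, and thus, selfadjoint'' and otherwise defers to the citation of Birman--Solomyak.
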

With the exception of the explicit computation of the domain, all of this is contained in \cite[Lemma~2.2]{Birman_Solomyak:L2_theory_Maxwell_operator:1987}. 

We have mentioned the significance of admitting \emph{complex} vector fields in the introduction (\cf Footnote~\ref{intro:footnote:complex_fields}), and the question arises whether we can construct solutions by evolving $\Psi \in \Hil_w$ in time and then taking real and imaginary part of $\Psi(t) = \e^{- \ii t \Mphys_w} \Psi$. This question will be crucial as to why usually one needs to consider “counter-propagating waves” whose frequencies $\pm \omega$ differ by a sign. So let $(C \Psi)(x) := \overline{\Psi(x)}$, $\Psi \in L^2(\R^3,\C^N)$, be component-wise complex conjugation; for simplicity, we shall always use the same symbol independently of $N \in \N$. \emph{If $\eps$ and $\mu$ are real,} then the weights commute with $C$, and 
% $[ \eps , C ] = 0$ and $[ \mu , C ] = 0$ 
%
\begin{align*}
	\bigl ( C \Mphys_w C \Psi \bigr )^E &= 
	C \bigl ( + \ii \, \eps^{-1}(\hat{x}) \, \nabla_x^{\times}  \bigr ) C \psi^H 
	% \\
	% &
	= - \ii \, \eps^{-1}(\hat{x}) \, \nabla_x \times \psi^H
\end{align*}
as well as an analogous computation for the other component of $\Mphys_w \Psi$ imply 
\begin{align}
	C \, \Mphys_w \, C = - \Mphys_w 
	\, . 
	\label{Maxwell:eqn:particle-hole_symmetry}
\end{align}
Consequently, the spectra for Maxwell operators with real weights are symmetric with respect to reflections at $0$; the same holds for all spectral components. 
\begin{thm}\label{Maxwell:thm:particle_hole_symmetry}
	Suppose Assumption~\ref{intro:assumption:eps_mu_generic} on the weights $\eps$ and $\mu$ is satisfied, and assume in addition that they are real. Then equation~\eqref{Maxwell:eqn:particle-hole_symmetry} holds and thus the spectra $\sigma (\Mphys_w) = - \sigma (\Mphys_w)$ and $\sigma_{\sharp} (\Mphys_w) = - \sigma_{\sharp}(\Mphys_w)$, $\sharp = \mathrm{pp} , \; \mathrm{ac} , \; \mathrm{sc}$, are symmetric with respect to reflections about the origin $0 \in \R$. 
\end{thm}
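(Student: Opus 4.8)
The plan is to give the operator identity \eqref{Maxwell:eqn:particle-hole_symmetry} its precise meaning and then to read off the three spectral symmetries from it. Since the component-wise calculation preceding the statement already establishes $C \, \Mphys_w \, C = - \Mphys_w$ on $\Cont^{\infty}_{\mathrm{c}}(\R^3,\C^6)$, the first task is to check that this survives passage to the closure. Here I would use that, by \eqref{Maxwell:eqn:domain_Maxwell2}, $\domain = \bigl ( \ker \Div \cap H^1(\R^3,\C^6) \bigr ) \directsum \ran \Grad$ is built from the real differential operators $\Div$ and $\Grad$; as $C$ commutes with both and preserves $H^1(\R^3,\C^6)$, it maps each summand into itself, so $C \domain = \domain$ and the identity extends from the core to all of $\domain$. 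The second preparatory step is to record that $C$ is \emph{antiunitary} on $\Hil_w$: because the convention $a \cdot b = \sum_j \overline{a_j} \, b_j$ already carries one conjugation and $\eps, \mu$ are real, a direct computation gives $\scpro{C \Psi}{C \Phi}_w = \scpro{\Phi}{\Psi}_w$, and clearly $C^2 = \id$, so $C$ is an antiunitary involution.

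With these in hand, the symmetry of the full spectrum is immediate. Antilinearity gives $C \, (z \, \id) \, C = \bar z \, \id$, so $C \, ( \Mphys_w - z ) \, C = - ( \Mphys_w + \bar z )$; conjugating by the bounded invertible $C$ shows that $z \in \rho(\Mphys_w)$ if and only if $- \bar z \in \rho(\Mphys_w)$, and since $\Mphys_w$ is selfadjoint its spectrum is real, whence $\sigma(\Mphys_w) = - \sigma(\Mphys_w)$.

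For the refinement to the spectral types I would transport the intertwining relation through the spectral resolution $\Mphys_w = \int_{\R} \lambda \, \dd E(\lambda)$. Because the eigenvalue variable $\lambda$ is real, conjugating by $C$ produces a genuine projection-valued measure $\Omega \mapsto C \, E(\Omega) \, C$ whose first moment is $C \, \Mphys_w \, C = - \Mphys_w$; by uniqueness of the spectral measure this forces $C \, E(\Omega) \, C = E(-\Omega)$ for every Borel set $\Omega \subset \R$, where $-\Omega := \{ - \lambda \mid \lambda \in \Omega \}$. Using antiunitarity, the spectral measure of $C \Psi$ then satisfies $\scpro{C \Psi}{E(\Omega) \, C \Psi}_w = \scpro{E(-\Omega) \Psi}{\Psi}_w = \mu_{\Psi}(-\Omega)$, i.e.~$\mu_{C \Psi}$ is the reflection $\lambda \mapsto - \lambda$ of $\mu_{\Psi}$. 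As this reflection preserves the Lebesgue decomposition of a measure into its pure-point, absolutely continuous and singular continuous parts, $C$ maps each of $\Hil_{\mathrm{pp}}$, $\Hil_{\mathrm{ac}}$, $\Hil_{\mathrm{sc}}$ onto itself while reflecting its spectral support, giving $\sigma_{\sharp}(\Mphys_w) = - \sigma_{\sharp}(\Mphys_w)$ for $\sharp = \mathrm{pp}, \mathrm{ac}, \mathrm{sc}$.

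The routine but necessary point is the domain matching $C \domain = \domain$ in the first step; the genuinely delicate point, and the one I would spend most care on, is the last step: because $C$ is antilinear, transferring the relation $C \, \Mphys_w \, C = - \Mphys_w$ to the full spectral family and verifying that antiunitary conjugation respects the pp/ac/sc splitting requires keeping track of conjugations in the functional calculus, rather than quoting the selfadjoint (linear) version verbatim.
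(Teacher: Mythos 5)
Your proposal is correct and follows the same route as the paper: the component-wise computation yields $C \, \Mphys_w \, C = - \Mphys_w$, and the spectral symmetries (including for the pp/ac/sc components) are read off from this antiunitary intertwining. The paper treats the passage from \eqref{Maxwell:eqn:particle-hole_symmetry} to the spectral statements as standard and omits the details you supply -- the domain invariance $C \domain = \domain$, the antiunitarity of $C$ on $\Hil_w$ for real weights, and the spectral-measure identity $C \, E(\Omega) \, C = E(-\Omega)$ -- all of which you verify correctly.
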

In case $\eps$ and $\mu$ have non-trivial complex offdiagonal entries, the weights no longer commute with complex conjugation, and \eqref{Maxwell:eqn:particle-hole_symmetry} as well as the above theorem do not hold. 
\begin{remark}
	Symmetries of type~\eqref{Maxwell:eqn:particle-hole_symmetry}, \ie \emph{anti-}unitary operators which map $\Mphys_w$ onto $- \Mphys_w$, are known in the physics literature as \emph{particle-hole symmetries} or \emph{PH symmetries} for short \cite{Altland_Zirnbauer:superconductors_symmetries:1997,Schnyder_Ryu_Furusaki_Ludwig:classification_topological_insulators:2008}. However, as many physicists and mathematicians consider the second-order equation $\partial_t^2 \Psi = - \Mphys_w^2 \Psi$ because it is block-diagonal, the PH symmetry for $\Mphys_w$ is replaced by a \emph{time-reversal symmetry} for the second-order equation. Ordinary Schrödinger operators $H = - \Delta_x + V$ on the other hand possess time-reversal symmetry, $C \, H \, C = H$. Discrete symmetries which square to $\pm \id$ have been classified systematically for topological insulators (\cf Table~II in \cite{Schnyder_Ryu_Furusaki_Ludwig:classification_topological_insulators:2008}); the presence of the PH symmetry means that $\Mphys_w$ is in \emph{symmetry class D} (provided there are no other symmetries). According to general results on the topological classification of band insulators (aka periodic operators), one expects that D-type operators in dimension $d = 2$ admit protected states parametrized by $\Z$-valued topological invariants (\cf Table~I in \cite{Schnyder_Ryu_Furusaki_Ludwig:classification_topological_insulators:2008}). This suggests there is an analog of the quantum Hall effect in $2$-dimensional photonic crystals \cite{Raghu_Haldane:quantum_Hall_effect_photonic_crystals:2008}. In contrast, for topological invariants to exist in $d = 3$, additional symmetries appear to be necessary (\eg $\eps = \mu$ or $\eps$ and $\mu$ have a common center of inversion); the presence of PH symmetry alone seems to prevent the formation of topologically protected states. Certainly, a direct proof for the Maxwell operator establishing the existence ($d = 2$) or absence ($d = 3$) of topological invariants would be an interesting avenue to explore. 
\end{remark}
%
% subsection The generic Maxwell operator (end)

\subsection{Slow modulation of the Maxwell operator} % (fold)
\label{Maxwell:perturbed}
One of the key differences between Maxwell and Schrödinger operators is that perturbations are \emph{multiplicative} rather than \emph{additive.} Given material weights $\eps$ and $\mu$ (which verify Assumption \ref{intro:assumption:eps_mu_generic}), we define their slow modulations $(\eps_{\lambda} , \mu_{\lambda})$ to be of the form \eqref{intro:eqn:slow_modulation_material_constants}. Assumption~\ref{Maxwell:assumption:modulation_functions} for the modulation functions $(\tau_\varepsilon , \tau_\mu )$ ensures that also $(\eps_{\lambda} , \mu_{\lambda})$ satisfy Assumption~\ref{intro:assumption:eps_mu_generic} because they are again bounded away from $0$ and $+ \infty$. 

We denote the $\lambda$-dependence of the weights with $w(\lambda) = (\eps_{\lambda},\mu_{\lambda})$ and define shorthand notation for the $\lambda$-dependent family of Hilbert spaces, projections and Maxwell operators by setting 
\begin{align*}
	&\Hil_{\lambda} := \Hil_{w(\lambda)} 
	\, , 
	&&\Jphys_{\lambda} := \Jphys_{w(\lambda)}
	&&&&\text{(spaces)}& \\
	&\Mphys_{\lambda} := \Mphys_{w(\lambda)} 
	\, , 
	&&\Pphys_{\lambda} := \Pphys_{w(\lambda)}
	\, , 
	&&\Qphys_{\lambda} := \Qphys_{w(\lambda)}
	&&\text{(operators)}
	\, . & 
\end{align*}
Similarly, we will denote the scalar product and norm of $\Hil_{\lambda}$ by $\scpro{\cdot \,}{\cdot}_{\lambda}$ and $\norm{\cdot}_{\lambda}$. 

To compare these operators for different values of $\lambda$, we will represent them on a \emph{common, $\lambda$-independent} Hilbert space: the scaling operator 
\begin{align}
	S(\lambda \hat{x}) : \Hil_{\lambda} \longrightarrow \Hil_0 
	, 
	\quad 
	S(\lambda \hat{x}) = \left (
	\begin{matrix}
		\tau_{\eps}^{-1}(\lambda \hat{x}) & 0 \\
		0 & \tau_{\mu}^{-1}(\lambda \hat{x}) \\
	\end{matrix}
	\right )
	\, , 
\end{align}
is a unitary since it is surjective and preserves scalar products. The Maxwell operator in this new representation can be calculated explicitly: for instance, the upper-right matrix element of $\Maxwell_{\lambda}$ transforms to 
\begin{align*}
	\tau_{\eps}^{-1}(\lambda \hat{x}) \, &\bigl ( - \tau_{\eps}^2(\lambda \hat{x}) \, \eps^{-1}(\hat{x}) \, (- \ii \nabla_x )^{\times} \bigr ) \, \tau_{\mu}(\lambda \hat{x}) = 
	\\
	&= - \tau_{\eps}(\lambda \hat{x}) \, \tau_{\mu}(\lambda \hat{x}) \, 
	\Bigl ( \eps^{-1}(\hat{x}) \, (- \ii \nabla_x)^{\times} + \lambda \, \eps^{-1}(\hat{x}) \, \bigl (- \ii \nabla_x \ln \tau_{\mu} \bigr )^{\times}(\lambda \hat{x}) \Bigr ) 
	\, , 
\end{align*}
and if we introduce the functions $\tau(\lambda x) := \tau_{\eps}(\lambda x) \, \tau_{\mu}(\lambda x)$ and 
\begin{align*}
	\derln(\lambda x) &:= \left (
	\begin{matrix}
		0 & + \ii \bigl ( \nabla_x \ln \tau_{\mu} \bigr )^{\times}(\lambda x) \\
		- \ii  \bigl (\nabla_x \ln \tau_{\eps} \bigr )^{\times}(\lambda x) & 0 \\
	\end{matrix}
	\right )
	\, , 
\end{align*}
we can write the Maxwell operator as 
\begin{align}
	M_{\lambda} :\negmedspace &= S(\lambda \hat{x}) \, \Mphys_{\lambda} \, S(\lambda \hat{x})^{-1}
	= M_0 + \lambda \, M_1 
	\notag \\
	&
	= \tau(\lambda \hat{x}) \, \Mper + \lambda \, \tau(\lambda \hat{x}) \, W \, \derln(\lambda \hat{x}) 
	\, . 
	\label{Maxwell:eqn:M_lambda_independent_rep}
\end{align}
As a product of bounded multiplication operators, $M_1$ is an element of $\mathcal{B}(\Hper)$. 

The regularity of $\tau_{\eps}$ and $\tau_{\mu}$ also ensures the domain is preserved. 
\begin{lem}
	$S(\lambda \hat{x})$ maps $\domain$ bijectively onto itself. 
\end{lem}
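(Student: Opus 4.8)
The plan is to treat $S(\lambda\hat x)$ as what it is: a matrix-valued multiplication operator whose diagonal entries $\tau_\eps^{-1}(\lambda\,\cdot\,)$ and $\tau_\mu^{-1}(\lambda\,\cdot\,)$ are, by Assumption~\ref{Maxwell:assumption:modulation_functions}, smooth, bounded away from $0$ and $+\infty$, and have bounded derivatives of all orders. Hence $S(\lambda\hat x)$ and its inverse $S(\lambda\hat x)^{-1} = \mathrm{diag}\bigl(\tau_\eps(\lambda\hat x),\tau_\mu(\lambda\hat x)\bigr)$ are bounded, mutually inverse operators on $L^2(\R^3,\C^6)$. Since $S(\lambda\hat x)$ is already a bijection of the Banach space $L^2(\R^3,\C^6)$ onto itself, it suffices to establish the two inclusions $S(\lambda\hat x)\,\domain\subseteq\domain$ and $S(\lambda\hat x)^{-1}\,\domain\subseteq\domain$: applying $S(\lambda\hat x)$ to the second gives $\domain\subseteq S(\lambda\hat x)\,\domain$, which together with the first yields $S(\lambda\hat x)\,\domain=\domain$, and injectivity is inherited from invertibility on $L^2$.

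For the inclusions I would use the characterization $\domain=\domain(\Rot)$ together with the fact (\cf Appendix~\ref{appendix:curl}) that $\Psi=(\psi^E,\psi^H)$ lies in $\domain$ precisely when the distributional curls $\curl\psi^E$ and $\curl\psi^H$ belong to $L^2$. The computation already carried out for $M_\lambda$ in \eqref{Maxwell:eqn:M_lambda_independent_rep} isolates the relevant mechanism: the Leibniz rule for the curl, which for a scalar $f\in\Cont^{\infty}_{\mathrm{b}}(\R^3)$ and $u$ with $\curl u\in L^2$ reads, in the distributional sense,
\begin{align*}
	\curl(f\,u) = f\,\curl u + (\grad f)\times u .
\end{align*}
Applying this to the two components of $\Rot\,S(\lambda\hat x)\,\Psi$ produces
\begin{align*}
	\Rot\,S(\lambda\hat x)\,\Psi = S(\lambda\hat x)\,\Rot\,\Psi + \lambda\,\bigl[\,\text{bdd.\ mult.}\,\bigr]\,\Psi ,
\end{align*}
where the commutator term is multiplication by a matrix built from $(\grad\tau_\eps^{-1})(\lambda\hat x)$ and $(\grad\tau_\mu^{-1})(\lambda\hat x)$ — precisely the bounded object encoded by $W\,\derln(\lambda\hat x)$ in \eqref{Maxwell:eqn:M_lambda_independent_rep}. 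Its boundedness is immediate from Assumption~\ref{Maxwell:assumption:modulation_functions}, since $\grad\tau_\eps^{-1}=-\tau_\eps^{-2}\,\grad\tau_\eps$ and $\grad\tau_\mu^{-1}$ are bounded. Thus for $\Psi\in\domain$ both terms on the right are in $L^2$, so $\Rot\,S(\lambda\hat x)\,\Psi\in L^2$, i.e. $S(\lambda\hat x)\,\Psi\in\domain$; repeating the computation with $\tau_\eps,\tau_\mu$ in place of their reciprocals gives the inclusion for $S(\lambda\hat x)^{-1}$.

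The one point requiring care — the main obstacle — is justifying the Leibniz identity for $u$ merely in the maximal domain of the curl, where $\curl u$ is an $L^2$-distribution but $u$ need not lie in $H^1$. I would dispose of this by density: the identity is elementary for $u\in\Cont^{\infty}_{\mathrm{c}}(\R^3,\C^3)$, and since $\Cont^{\infty}_{\mathrm{c}}(\R^3,\C^6)$ is a core for $\Rot$ by Theorem~\ref{Maxwell:thm:selfadjointness}, it extends to all of $\domain$ by continuity in the graph norm, using that multiplication by the smooth bounded functions $f$ and $\grad f$ is continuous from the graph-norm topology into $L^2$; alternatively one verifies it directly against test functions, moving the curl onto the smooth factor $f$. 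It is worth stressing that smoothness of $S(\lambda\hat x)$ is essential here and is exactly where Assumption~\ref{Maxwell:assumption:modulation_functions} enters: the same argument fails for the rough weight $W$ from \eqref{Maxwell:eqn:max1}, whose merely $L^\infty$ entries have no controlled gradient — which is why $W$, unlike $S(\lambda\hat x)$, does \emph{not} preserve $\domain$.
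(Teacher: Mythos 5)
Your proof is correct and fleshes out exactly the route the paper leaves implicit: the lemma is stated there without proof, with only the remark that the regularity of $\tau_{\eps},\tau_{\mu}$ preserves the domain, the underlying mechanism being the same Leibniz-rule computation that produces the bounded commutator term $\lambda \, \tau \, W \, \derln(\lambda \hat{x})$ in \eqref{Maxwell:eqn:M_lambda_independent_rep}, which you justify properly via the maximal-domain characterization of $\domain(\curl)$ and a density (or test-function) argument. The only minor overstatement is the closing claim that $W$ does \emph{not} preserve $\domain$ --- your argument shows only that the Leibniz mechanism is unavailable for merely $L^{\infty}$ weights, i.e.\ that $W$ \emph{need not} preserve it.
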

This means all of the operators, $\Mper$, $\Maxwell_{\lambda}$ and $M_{\lambda}$, have the same $\lambda$-independent domain $\domain$ and cores (\eg $H^1(\R^3,\C^6)$) -- even though the splitting of the domain into physical and unphysical subspaces depends on $\lambda$. We denote the invariant subspaces 
\begin{align*}
	J_{\lambda} := S(\lambda \hat{x}) \, \Jphys_{\lambda}
	\, , 
	\qquad\qquad
	G_{\lambda} := S(\lambda \hat{x}) \, \Gphys 
\end{align*}
of $M_{\lambda}$ with regular letters instead of bold letters, and in the same vein, the corresponding projections are 
\begin{align*}
	P_{\lambda} := S(\lambda \hat{x}) \, \Pphys_{\lambda} \, S(\lambda \hat{x})^{-1} 
	\, , 
	\qquad\qquad
	Q_{\lambda} := S(\lambda \hat{x}) \, \Qphys_{\lambda} \, S(\lambda \hat{x})^{-1} 
	\, . 
\end{align*}
For $\lambda = 0$, the $\lambda$-independent representation coincides with the physical representation since $S(\lambda \hat{x}) \vert_{\lambda = 0} = \id_{\Hil_0}$ reduces to the identity by Assumption~\ref{Maxwell:assumption:modulation_functions}, and we have $J_0 = \Jphys_0$ and $G_0 = \Gphys$ for the subspaces, as well as $P_0 = \Pphys_0$ and $Q_0 = \Qphys_0$ for the corresponding projections. 

The unitarity of $S(\lambda \hat{x})$ and Theorem~\ref{Maxwell:thm:selfadjointness} imply $\Hil_0 = J_{\lambda} \orthsum G_{\lambda}$ is a $\lambda$-dependent decomposition of $\Hil_0$ into $\scpro{\cdot \,}{\cdot}_0$-orthogonal subspaces which are invariant under the dynamics $\e^{- \ii t M_{\lambda}}$. 
% subsection The perturbed Maxwell operator (end)
% section The perturbed Maxwell operator (end)

\section{Properties of the periodic Maxwell operator} % (fold)
\label{periodic}
Photonic crystals are materials where the unperturbed material weights $(\eps,\mu)$ are periodic with respect to a lattice 
\begin{align*}
	\Gamma := \mathrm{span}_{\Z} \{ e_1 , e_2 , e_3 \} 
	\cong \Z^3 
	\, , 
\end{align*}
and henceforth, we shall always make the following 
\begin{assumption}[Photonic crystal]\label{periodic:assumption:periodic_eps_mu}
	Suppose that $\eps$ and $\mu$ are $\Gamma$-periodic and satisfy Assumption~\ref{intro:assumption:eps_mu_generic}. 
\end{assumption}
The lattice periodicity suggests we borrow the language of crystalline solids \cite{Grosso_Parravicini:solid_state_physics:2003}: we can decompose vectors $x = y + \gamma$ in real space $\R^3 \cong \WS \times \Gamma$ into a component $y$ which lies in the so-called Wigner--Seitz cell $\WS$ and a lattice vector $\gamma \in \Gamma$. Whenever convenient we will identify this fundamental cell $\WS$ with the $3$-dimensional torus $\T^3$. 

Given a lattice $\Gamma$, then there is a canonical way to decompose momentum space $\R^3 \cong \BZ \times \Gamma^*$: here, the \emph{dual lattice} $\Gamma^* = \mathrm{span}_{\Z} \{ e_1^* , e_2^* , e_3^* \}$ is generated by the family of vectors which are defined through the relations $e_j \cdot e_n^* = 2 \pi \, \delta_{jn}$, $j , n = 1 , 2 , 3$. The standard choice of fundamental cell 
\begin{align*}
	\BZ := \Bigl \{ \mbox{$\sum_{j = 1}^3$} \alpha_j \, e_j^* \in \R^3 \; \big \vert \; \alpha_1 , \alpha_2 , \alpha_3 \in [-\nicefrac{1}{2},+\nicefrac{1}{2}) \Bigr \} 
\end{align*}
is called (first) Brillouin zone, and elements $k \in \BZ$ are known as \emph{crystal momentum}.

\subsection{The Zak transform} % (fold)
\label{periodic:Zak}
The lattice-periodicity of $\eps$ and $\mu$ sugests to use a Fourier basis: for any $\C^N$-valued Schwartz function $\Psi \in \Schwartz(\R^3,\C^N)$ we define the \emph{Zak transform} \cite{Zak:dynamics_Bloch_electrons:1968} evaluated at $k \in \R^3$ and $y \in \R^3$ as 
\begin{align}
	( \Zak \Psi )(k,y) := \sum_{\gamma \in \Gamma} \e^{- \ii k \cdot (y + \gamma)} \, \Psi(y + \gamma) 
	\, . 
\end{align}
The Zak transform is a variant of the Bloch-Floquet transform with the following periodicity properties: 
\begin{align*}
	( \Zak \Psi )(k , y - \gamma) &= ( \Zak \Psi )(k , y) 
	&& 
	\gamma \in \Gamma 
	\\
	( \Zak \Psi )(k - \gamma^* , y) &= \e^{+ \ii \gamma^* \cdot y} ( \Zak \Psi )(k , y) 
	&& 
	\gamma^* \in \Gamma^* 
\end{align*}
In other words, $\Zak \Psi$ is a $\Gamma$-periodic function in $y$ and periodic up to a phase in $k$. The Schwartz functions are dense in $\Hper$, so 
\begin{align*}
	\Zak : \Hper \longrightarrow L^2_{\mathrm{eq}}(\R^3,\HperT) 
	\cong L^2(\BZ) \otimes \HperT 
\end{align*}
extends to a unitary map between $\Hper$ and the $L^2$-space of equivariant functions in $k$ with values in $\HperT := L^2_{\eps}(\T^3 , \C^3) \orthsum L^2_{\mu}(\T^3 , \C^3)$, 
\begin{align}
	L^2_{\mathrm{eq}} ( \R^3 , \HperT ) := \Bigl \{ 
	\Psi \in L^2_{\mathrm{loc}} ( \R^3 , \HperT ) \; \big \vert \; 
	\Psi(k - \gamma^*) = \e^{+ \ii \gamma^* \cdot \hat{y}} \Psi(k) \mbox{ a.~e. $\forall \gamma^* \in \Gamma^*$}
	\Bigr \} 
	\, , 
\end{align}
which is equipped with the scalar product 
\begin{align*}
	\scpro{\Psi}{\Phi}_{\mathrm{eq}} &:= \int_{\BZ} \dd k \, \bscpro{\Psi(k)}{\Phi(k)}_{\HperT}
\end{align*}
where 
\begin{align*}
	\bscpro{\Psi(k)}{\Phi(k)}_{\HperT} :& \negmedspace= \int_{\T^3} \dd y \; \psi^E(k,y)\cdot \eps(y) \phi^E(k,y) 
	\, + \\
	&\qquad \qquad 
	+ \int_{\T^3} \dd y \; \psi^H(k,y)\cdot  \mu(y) \phi^H(k,y) 
	\, . 
\end{align*}
Due to the (quasi-)periodicity of Zak transformed functions, they are uniquely determined by the values they take on $\BZ \times \T^3$. 

To see how the Maxwell operator transforms when conjugating it with $\Zak$, we compute the Zak representation of its building block operators positions $\hat{x}$ and momentum $- \ii \nabla_x$ (which are equipped with the obvious domains): 
\begin{align}
	\Zak \, \hat{x} \, \Zak^{-1} &= \ii \nabla_k 
	\label{periodic:eqn:Zak_position}
	\\
	\Zak \, (- \ii \nabla_x) \, \Zak^{-1} &= \id_{L^2(\BZ)} \otimes (- \ii \nabla_y) + \hat{k} \otimes \id_{\HperT} 
	\equiv - \ii \nabla_y + \hat{k} 
	\label{periodic:eqn:Zak_momentum}
\end{align}
The common domains of the components $\ii \partial_{k_j}$ and $- \ii \partial_{y_j} + \hat{k}_j$ Zak transform to $L^2_{\mathrm{eq}} ( \R^3 , \HperT ) \cap H^1_{\mathrm{loc}} \bigl ( \R^3 , \HperT  \bigr )$ and 
\begin{align}
	\Zak H^1(\R^3,\C^6) = L^2_{\mathrm{eq}} \bigl ( \R^3 , H^1(\T^3,\C^6) \bigr ) 
	\cong L^2(\BZ) \otimes H^1(\T^3,\C^6)
	\, . 
	\label{periodic:eqn:Zak_H1}
\end{align}
Note that the position operator in Zak representation does not factor, unless we consider $\Gamma$-periodic functions $\eps$, 
\begin{align}
	\Zak \, \eps(\hat{x}) \, \Zak^{-1} = \id_{L^2(\BZ)} \otimes \eps(\hat{y}) 
	\equiv \eps(\hat{y}) 
	\, . 
	\label{periodic:eqn:Zak_position_per}
\end{align}
Operators $\mathbf{A} : \domain(A) \longrightarrow \Hper$ which commute with lattice translations, \eg operators of the form \eqref{periodic:eqn:Zak_momentum}, \eqref{periodic:eqn:Zak_position_per} or the periodic Maxwell operator, fiber in $k$, 
\begin{align*}
	\mathbf{A}^{\Zak} = \Zak \, \mathbf{A} \, \Zak^{-1} = \int_{\BZ}^{\oplus} \dd k \, \mathbf{A}(k) 
	\, , 
\end{align*}
and the fiber operators at $k \in \R^3$ and $k - \gamma^*$, $\gamma^* \in \Gamma^*$, are unitarily equivalent, 
\begin{align}
	\mathbf{A}(k - \gamma^*) = \e^{+ \ii \gamma^* \cdot \hat{y}} \, \mathbf{A}(k) \, \e^{- \ii \gamma^* \cdot \hat{y}} 
	\, , 
	\label{periodic:eqn:equivariance_condition}
\end{align}
Operator-valued functions $k \mapsto \mathbf{A}(k)$ which satisfy \eqref{periodic:eqn:equivariance_condition} are called \emph{equivariant}. It is for this reason that it suffices to consider all objects only for $k \in \BZ$ and extend them by equivariance if necessary. 
% subsection The Zak transform (end)

\subsection{Analytic decomposition of the fiber Hilbert space} % (fold)
\label{periodic:fibration_spaces}
Clearly, $\Qper$ and $\Pper$ also commute with lattice translations, and thus, the Zak transform yields a fiber decomposition into 
\begin{align*}
	\Qper^{\Zak} := \Zak \, \Qper \, \Zak^{-1} 
	= \int^{\oplus}_{\BZ} \dd k \, \Qper(k)
	\, , \qquad 
	\Pper^{\Zak} := \Zak \, \Pper \, \Zak^{-1} 
	= \int^{\oplus}_{\BZ} \dd k \, \Pper(k) 
	\, . 
\end{align*}
These fibrations also identify physical and unphysical subspaces of the fiber Hilbert space 
\begin{align*}
	\HperT = \Jper(k) \orthsum \Gper(k) 
\end{align*}
for each $k \in \BZ$ where $\Gper(k) = \ran \Qper(k)$ and $\Jper(k) = \ran \Pper(k)$. A priori, all we know is that this fibration is \emph{measurable} in $k$. However, we are interested in the \emph{analyticity} properties of the fiber projections. Figotin and Kuchment have recognized that $k \mapsto \Qper(k)$ and thus also $k \mapsto \Pper(k)$ are not analytic at $k \in \Gamma^*$ \cite{Figotin_Kuchment:band_gap_structure_periodic_dielectric_acoustic_media:1996}. The purpose of this section is to define \emph{regularized} projections $k \mapsto \Qreg(k)$ and $k \mapsto \Preg(k)$ which are analytic on \emph{all} of $\R^3$. These regularized projections enter crucially in the proof on the existence of ground state bands (Theorem~\ref{intro:thm:band_picture}~(iii)). 
\begin{lem}\label{periodic:lem:regularized_projections}
	\begin{enumerate}[(i)]
		\item The orthogonal projections $k \mapsto \Qper(k)$ and $k \mapsto \Pper(k)$ onto unphysical and physical subspace are analytic on $\R^3 \setminus \Gamma^*$. 
		\item The regularized orthogonal projections $k \mapsto \Qreg(k)$ and $k \mapsto \Preg(k)$ are analytic on \emph{all} of $\R^3$. Moreover, $\Preg(\gamma^*) = \Pper(\gamma^*)$ and $\Qreg(\gamma^*) = \Qper(\gamma^*)$ holds for all $\gamma^* \in \Gamma^*$. 
		\item $\dim \bigl ( \Gper(k) \cap \Jreg(k) \bigr ) = 2$ for all $k \in \R^3 \setminus \Gamma^*$ 
	\end{enumerate}
\end{lem}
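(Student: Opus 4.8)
The plan is to build the regularized projections by explicitly understanding the source of the non-analyticity at $\Gamma^*$ and then subtracting it off. First I would recall that the unphysical subspace is $\Gper(k) = \ran \Qper(k) = \ker \Rot(k)$, which consists of gradient fields: on the fiber at $k$, elements of $\Gper(k)$ have the form $\bigl ( (- \ii \nabla_y + k) \varphi^E , (- \ii \nabla_y + k) \varphi^H \bigr )$ with $\varphi^E, \varphi^H$ scalar. The crucial observation is that for $k \neq 0 \pmod{\Gamma^*}$ the constant functions give genuine elements of this space — the $k$-dependent ``momentum'' $(- \ii \nabla_y + k)$ applied to a constant yields $k \cdot (\text{const})$, which is nonzero — whereas at $k \in \Gamma^*$ (equivalently $k = 0$ after using equivariance \eqref{periodic:eqn:equivariance_condition}) these contributions collapse. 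This is exactly the two extra dimensions that appear and disappear discontinuously, and it is the mechanism behind the Figotin--Kuchment observation. For part (i), I would argue that away from $\Gamma^*$ the operator $\Rot(k)$ restricted to the orthogonal complement of its kernel is boundedly invertible with kernel dimension locally constant, so the Riesz projection / spectral-subspace argument applies and $k \mapsto \Qper(k), \Pper(k)$ are analytic; the standard tool is that a projection depending analytically on parameters is obtained from a contour integral of the resolvent, or equivalently from the analyticity of $k \mapsto \Rot(k)$ together with a constant-rank condition on $\R^3 \setminus \Gamma^*$.

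For part (ii), the idea is to identify a two-dimensional analytic family of ``spurious'' vectors inside $\Gper(k)$ that account for the jump, and define $\Qreg(k)$ by removing their span before projecting. Concretely, I would take the finite-dimensional analytic family spanned by the gradient fields built from constants, namely vectors proportional to $k^{\times}$-type expressions or to $(- \ii \nabla_y + k)$ acting on the constant function; after orthonormalizing (Gram--Schmidt with respect to $\sscpro{\cdot}{\cdot}_{\HperT}$, which is analytic wherever the Gram matrix is invertible) one gets an analytic rank-$2$ projection onto this subspace. Subtracting this piece, or rather projecting onto the analytic complement, yields $\Qreg(k)$ and $\Preg(k)$ that extend analytically across $k = 0$. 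At $k = \gamma^* \in \Gamma^*$ the spurious vectors degenerate to zero, so the correction vanishes there and $\Preg(\gamma^*) = \Pper(\gamma^*)$, $\Qreg(\gamma^*) = \Qper(\gamma^*)$, which is the asserted matching. The equivariance \eqref{periodic:eqn:equivariance_condition} lets me reduce everything to a neighborhood of $k = 0$ and then propagate by the phase conjugation $\e^{+ \ii \gamma^* \cdot \hat{y}}$.

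Part (iii) is then a dimension count: by construction $\Jreg(k) = \ran \Preg(k)$ differs from the true physical space $\Jper(k)$ precisely by the two-dimensional spurious subspace that was put back, so $\Gper(k) \cap \Jreg(k)$ is exactly the span of the two regularizing vectors and has dimension $2$ for all $k \in \R^3 \setminus \Gamma^*$. I would verify that these two vectors are genuinely in $\Gper(k)$ (they are gradient fields by construction) and genuinely in $\Jreg(k)$ (by the way $\Preg$ is defined to retain them), and that they are linearly independent away from $\Gamma^*$ (which fails exactly at $\Gamma^*$, consistent with the drop in dimension there).

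\textbf{Main obstacle.} I expect the hard part to be the explicit and \emph{canonical} choice of the two-dimensional regularizing family, together with showing it depends analytically on $k$ through $k = 0$ while simultaneously degenerating correctly at $\Gamma^*$. The delicate point is that the naive candidates — gradient fields of constants — have norms vanishing like $\abs{k}$ as $k \to 0$, so the normalization introduces a $\frac{k}{\abs{k}}$ factor that is \emph{not} analytic (indeed not even continuous) at the origin; the whole content of the regularization is to arrange the construction so that the \emph{projection} (a rank-$2$ object insensitive to the direction of $k$ after the right symmetrization) is analytic even though no analytic orthonormal frame exists. Overcoming this requires working with the projection directly via a suitably chosen analytic rank-$2$ operator whose range is the spurious subspace, rather than with individual basis vectors, and verifying its analyticity by an explicit formula rather than by Gram--Schmidt.
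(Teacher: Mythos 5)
Your diagnosis of the underlying phenomenon is correct: the discontinuity of $k \mapsto \Qper(k)$ at $\Gamma^*$ is caused by the gradients of the constant functions, which span a two-dimensional subspace of $\Gper(k)$ for $k \neq 0$ and collapse at $k = 0$. Your route to (i) via the spectral projection of $\Mper(k)$ onto the isolated point $0$ is also viable (you would need as input that $\sigma \bigl ( \Mper(k) \vert_{\Jper(k)} \bigr )$ is discrete and does not contain $0$ for $k \notin \Gamma^*$, both proven elsewhere in the paper; and since the eigenspace is infinite-dimensional, ``constant rank'' should be replaced by ``$0$ stays isolated''). The genuine gap is in (ii). You propose to obtain $\Qreg(k)$ by building ``a suitably chosen analytic rank-$2$ operator whose range is the spurious subspace'' and subtracting the associated projection from $\Qper(k)$. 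This cannot work as stated: a continuous projection-valued family has locally constant rank, while the spurious subspace $\bigl \{ y \mapsto (\beta^E k , \beta^H k) \bigr \}$ degenerates to $\{ 0 \}$ at $k = 0$; worse, as a subspace it depends only on the direction $\underline{k} = k / \sabs{k}$, so its limit along different rays is different, and no projection-valued correction with this range can even be \emph{continuous} at $k = 0$, however it is symmetrized. Since $\Qper(k)$ is itself discontinuous there, you are trying to exhibit an analytic object as the difference of two objects neither of which is continuous; the cancellation is real, but it cannot be established by constructing the correction term separately. (This also undercuts your (iii), which leans on that construction; note in addition that the intersection $\Gper(k) \cap \Jreg(k)$ is not the span of the spurious vectors themselves -- they are not orthogonal to $\Greg(k)$ for general weights -- but their image under $\Preg(k)$, whose injectivity must be checked, e.g.\ via the zeroth Fourier coefficient.)

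The paper's construction avoids isolating the bad part altogether by regularizing the \emph{domain} rather than the range. One writes $\Gper(k) = \mathrm{ran}_0 \, \Grad(k)$ with $\Grad(k) : H^1(\T^3,\C^2) \longrightarrow \HperT$ analytic in $k$, and traces the failure of analyticity of $\Qper(k)$ at $k = 0$ to the failure of injectivity of $\Grad(0)$, whose kernel is exactly the constants. Restricting $\Grad(k)$ to the mean-zero subspace $H^1_{\mathrm{reg}}(\T^3,\C^2)$ makes it injective for \emph{all} $k$, so it admits a bounded left-inverse $T_{\mathrm{reg}}(k)$ depending analytically on $k$ everywhere (the Figotin--Kuchment and Zaidenberg et al.\ machinery); then $\widetilde{\Qreg}(k) = \Grad(k) \vert_{H^1_{\mathrm{reg}}} \, T_{\mathrm{reg}}(k)$ is an analytic, not yet orthogonal, projection with range $\Greg(k)$, and the orthogonal projection onto an analytically varying closed subspace is again analytic. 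The matching $\Qreg(\gamma^*) = \Qper(\gamma^*)$ comes for free from $\ran \Grad(0) \vert_{H^1_{\mathrm{reg}}} = \ran \Grad(0)$, since the constants contribute nothing to the range at $k = 0$. In short: the correct move is to build the analytic object directly from an analytic operator family and its analytic left-inverse, not to subtract a (necessarily non-analytic) finite-rank correction from $\Qper(k)$.
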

Essentially, the idea for the definition of $\Qreg(k)$ is already contained in the proofs of Lemma~51 and Corollary~52 of  \cite{Figotin_Kuchment:band_gap_structure_periodic_dielectric_acoustic_media:1996}, so we will briefly sketch the construction of $\Qper(k)$ and then proceed to define $\Qreg(k)$. 

Assume from now on that $k \in \BZ$. The idea is to use the fact that $\Gper(k) := \mathrm{ran}_0 \, \Grad(k)$ and define an auxiliary projection $\widetilde{Q}_0(k) = \Grad(k) \, T(k)$ with range $\Gper(k)$ as the product of the operator 
\begin{align*}
	\Grad(k) = (\nabla_y + \ii k , \nabla_y + \ii k) : H^1(\T^3,\C^2) \longrightarrow \HperT 
	.  
\end{align*}
which depends analytically on $k \in \R^3$ and its left-inverse $T(k)$. Such a left-inverse exists if and only if $\Grad(k)$ is injective, and \emph{if} it exists, it is also bounded \cite[p.~52]{Figotin_Kuchment:band_gap_structure_periodic_dielectric_acoustic_media:1996} and analytic in $k$ \cite[Theorem~4.4]{Zaidenberg_Krain_Kuchment_Pankov:Banach_bundles:1975}. Note that the closedness of $\mathrm{ran}_0 \, \Grad(k) = \Grad(k) \, H^1(\T^3,\C^2)$ for $k \neq 0$ follows from the boundedness of $T(k)$. 

One can check that for $k \neq 0$, the operator $\Grad(k)$ is injective while for $k = 0$ there are zero modes, 
\begin{align*}
	Z(\T^3,\C^2) := \biggl \{ y \mapsto \left (
	\begin{matrix}
		\beta^E \\
		\beta^H \\
	\end{matrix}
	\right ) \quad \Big \vert \quad \left (
	\begin{matrix}
		\beta^E \\
		\beta^H \\
	\end{matrix}
	\right ) \in \C^2 \biggr \} 
	= \ker \Grad(0) 
	\, . 
	% \label{periodic:eqn:zero_modes}
\end{align*}
Consequently, the projection $\widetilde{Q}_0(k) = \Grad(k) \, T(k)$ can only be defined in this fashion for $k \neq 0$, and there is a point of non-analyticity at $k = 0$, because $\ran \Grad(0)$ is “smaller” by two dimensions than $\Gper(k)$, $k \neq 0$. 

Even though $\widetilde{Q}_0(k)$ need not be an orthogonal projection (the proofs in \cite{Allan:holomorphic_vector-valued_functions:1967} and \cite{Zaidenberg_Krain_Kuchment_Pankov:Banach_bundles:1975} only make reference to the Banach algebra structure), these arguments show that $\Gper(k) = \ran \Qper(k) = \ran \widetilde{Q}_0(k)$ depends analytically on $k$ away from $\Gamma^*$. Thus, the unique \emph{orthogonal} projection $\Qper(k)$ onto $\Gper(k)$ necessarily also depends analytically on $k \in \R^3 \setminus \Gamma^*$. 

The behavior of $\Grad(k)$ at $k = 0$ suggests to define the \emph{regularized} unphysical space as 
\begin{align*}
	\Greg(k) := \mathrm{ran}_0 \, \Grad(k) \vert_{H^1_{\mathrm{reg}}} 
\end{align*}
where the closed subspace 
\begin{align*}
	H^1_{\mathrm{reg}}(\T^3,\C^2) :& \negmedspace = \Bigl \{ \varphi = (\varphi^E,\varphi^H) \in H^1(\T^3,\C^2) \; \, \big \vert \; \, \bscpro{1}{\varphi^{\sharp}}_{L^2(\T^3)} = 0 , \; \sharp = E , H \Bigr \} 
	\\
	&= Z(\T^3,\C^2)^{\perp} \cap H^1(\T^3,\C^2)
\end{align*}
consists of all $H^1$-functions orthogonal to the constant functions. Now $\Grad(k) \vert_{H^1_{\mathrm{reg}}}$ \emph{is} injective for all $k \in \BZ$, and by modifying the estimates on \cite[p.~52]{Figotin_Kuchment:band_gap_structure_periodic_dielectric_acoustic_media:1996} we deduce there exists an \emph{analytic} bounded left-inverse $T_{\mathrm{reg}}(k)$ for all $k \in \BZ$. Hence, the composition
\begin{align*}
	k \mapsto \widetilde{\Qreg}(k) := \Grad(k) \vert_{H^1_{\mathrm{reg}}} \, T_{\mathrm{reg}}(k) 
\end{align*}
defines a projection onto $\Greg(k)$ that depends analytically on $k$ for all of $\BZ$, including $k = 0$; again, the boundedness of $T_{\mathrm{reg}}(k)$ implies $\Greg(k)$ is a closed subset of $\HperT$. By the same arguments as above, the uniquely determined \emph{orthogonal} projection $\Qreg(k)$ onto $\Greg(k)$ inherits the analyticity of $\widetilde{\Qreg}(k)$ \cite[Theorem~6.35]{Kato:perturbation_theory:1995}. At $k = 0$, this regularized projection coincides with the usual one, $\Qreg(0) = \Qper(0)$, as their ranges

\begin{align}
	\Greg(0) = \ran \Grad(0) \vert_{H^1_{\mathrm{reg}}} 
	%\notag 
	= \ran \Grad(0) 
	= \Gper(0)
	\label{periodic:eqn:Gper_0_equals_Greg_0}
\end{align}
are the same (this also proves that $\Gper(0)$ is closed). Moreover, $k \mapsto \Qreg(k)$ has a unique extension by equivariance (\cf \eqref{periodic:eqn:equivariance_condition}) to all of $\R^3$. 

Now the analyticity of the orthogonal projection 
\begin{align*}
	\Preg(k) := \id_{\HperT} - \Qreg(k) 
\end{align*}
onto the $\scpro{\cdot \,}{\cdot}_{\HperT}$-orthogonal complement 
\begin{align*}
	\Jreg(k) := \Greg(k)^{\perp}
\end{align*}
follows from the analyticity of $k \mapsto \Qreg(k)$. 
\medskip

\noindent
Before we prove (iii), it is instructive to juxtapose the decomposition $\HperT = \Jper(k) \orthsum \Gper(k)$ with the regularized decomposition 
\begin{align*}
	\HperT = \Jreg(k) \orthsum \Greg(k) 
\end{align*}
for the special case $\Mper = \Rot$, \ie $\eps = 1 = \mu$. The difference between the two is how the constant functions $y \mapsto (\alpha^E,\alpha^H) \in \C^6$, are distributed amongst them: for $k \neq 0$ only \emph{certain} constant functions belong to $\Jper(k)$, 
\begin{align*}
	y \mapsto \left (
	\begin{matrix}
		\alpha^E \\
		\alpha^H \\
	\end{matrix}
	\right ) \in \Jper(k) 
	\quad \Longleftrightarrow \quad 
	\Div(k) \left (
	\begin{matrix}
		\alpha^E \\
		\alpha^H \\
	\end{matrix}
	\right ) 
	= - \ii \, \left (
	\begin{matrix}
		k \cdot \alpha^E \\
		k \cdot \alpha^H \\
	\end{matrix}
	\right ) = \left (
	\begin{matrix}
		0 \\
		0 \\
	\end{matrix}
	\right )
	, 
\end{align*}
while for $k = 0$ \emph{all} constant functions are elements of $\Jper(0)$ and the physical subspace “grows” by $2$ dimensions at the expense of $\Gper(0)$. In contrast, the regularized physical subspace $\Jreg(k)$ contains all constant functions for \emph{all} values of $k$. We will now extend these arguments to the case of non-trivial weights $(\eps,\mu)$. 
\begin{proof}[Lemma~\ref{periodic:lem:regularized_projections}]
	We have already shown (i) and (ii) in the text preceding the lemma and it remains to prove (iii). Without loss of generality, we restrict ourselves to $k \in \BZ$. First of all, we note that the unphysical subspace 
	\begin{align*}
		\Gper(k) 
		&= \Biggl \{ 
		\sum_{\gamma^* \in \Gamma^*} \left (
		\begin{matrix}
			\beta^E(\gamma^*) \, (\gamma^* + k) \\
			\beta^H(\gamma^*) \, (\gamma^* + k) \\
		\end{matrix}
		\right ) \, \e^{+ \ii \gamma^* \cdot y} 
		\; \; \; \Big \vert \; \; \; 
		\Biggr . 
		\notag \\
		&\qquad \qquad \qquad 
		\Biggl . 
		\Bigl \{ \babs{\beta^{\sharp}(\gamma^*) \, \gamma^*} \Bigr \}_{\gamma^* \in \Gamma^*} \in \ell^2(\Gamma^*) 
		, \; 
		\sharp = E , H
		\Biggr \} 
	\end{align*}
	and the \emph{regularized} unphysical subspace 
	\begin{align}
		\Greg(k) 
		&= \Biggl \{ 
		\sum_{\gamma^* \in \Gamma^* \setminus \{ 0 \}} \left (
		\begin{matrix}
			\beta^E(\gamma^*) \, (\gamma^* + k) \\
			\beta^H(\gamma^*) \, (\gamma^* + k) \\
		\end{matrix}
		\right ) \, \e^{+ \ii \gamma^* \cdot y} 
		\; \; \; \Big \vert \; \; \; 
		\Biggr . 
		\notag \\
		&\qquad \qquad \qquad 
		\Biggl . 
		\Bigl \{ \babs{\beta^{\sharp}(\gamma^*) \, \gamma^*} \Bigr \}_{\gamma^* \in \Gamma^*} \in \ell^2(\Gamma^*) 
		, \; 
		\sharp = E , H
		\Biggr \} 
		. 
		\label{periodic:eqn:Greg_k}
	\end{align}
	coincide for $k = 0$, and we immediately deduce 
	\begin{align*}
		\dim \bigl ( \Gper(0) \cap \Jreg(0) \bigr ) &= \dim \bigl ( \Gper(0) \cap \Jper(0) \bigr ) 
		= 0
		\, . 
	\end{align*}
	Hence, we assume from now on $k \in \BZ \setminus \{ 0 \}$. That means, we can write the intersection as the regularized projection applied to a two-dimensional subspace, 
	\begin{align*}
		\Gper(k) \cap \Jreg(k) &= \Preg(k) 
		\left \{ 
		y \mapsto \left (
		\begin{matrix}
			\beta^E \, k \\
			\beta^H \, k \\
		\end{matrix}
		\right ) 
		\; \; \Big \vert \; \; \beta^E , \beta^H \in \C 
		\right \} 
		\, . 
	\end{align*}
	The image is again two-dimensional: if we write any $\Psi = \Psi_Q \orthsum \Psi_P \in \HperT$ as the sum of $\Psi_Q \in \Greg(k)$ and $\Psi_P \in \Jreg(k)$, then in view of equation~\eqref{periodic:eqn:Greg_k} the $\gamma^* = 0$ Fourier coefficient of $\psi_Q = \Qreg(k) \Psi$ necessarily has to vanish, $\widehat{\psi}_Q(0) = 0$. Thus, $\widehat{\psi}_P(0) = \widehat{\psi}(0)$ follows, and the map 
	\begin{align*}
		\C^2 \ni \left (
		\begin{matrix}
			\beta^E \\
			\beta^H \\
		\end{matrix}
		\right ) \mapsto \Preg(k) \left (
		\begin{matrix}
			\beta^E \, k \\
			\beta^H \, k \\
		\end{matrix}
		\right ) \in \Jreg(k) 
	\end{align*}
	is injective. That means $\dim \bigl ( \Gper(k) \cap \Jreg(k) \bigr ) = 2$ for $k \in \R^3 \setminus \Gamma^*$. 
\end{proof}
%
% subsection Fiber-decompositions of the domain (end)

\subsection{Analyticity properties of the fiber Maxwell operator} % (fold)
\label{periodic:fiber_operators}
The Zak transform fibers the periodic Maxwell operator in crystal momentum, 
\begin{align}
	\Mper^{\Zak} :\negmedspace &= \Zak \, \Mper \, \Zak^{-1} 
	= \int_{\BZ}^{\oplus} \dd k \, \Mper(k) 
	\, . 
\end{align}
Each of the fiber operators 
\begin{align}
	\Mper(k) &= W \, \Rot(k) 
	= \left (
	\begin{matrix}
		0 & - \eps^{-1} \, (- \ii \nabla_y + k)^{\times} \\
		+ \mu^{-1} \, (- \ii \nabla_y + k)^{\times} & 0 \\
	\end{matrix}
	\right ) 
	, 
	\notag 
\end{align}
acts on a \emph{potentially $k$-depen\-dent} subspace $\domainT(k)$ of $\HperT$, and has a splitting into physical and unphysical part, $\Mper(k) = \Mper(k) \vert_{\Jper(k)} \directsum 0 \vert_{\Gper(k)}$. In any case, the selfadjointness of $\Mper$ on $\domain$ implies the selfadjointness of each fiber operator $\Mper(k)$ on $\domain(k)$. Since the domain of each fiber operator $\Mper(k)$ may depend on $k$, it is not obvious whether $k \mapsto \Mper(k)$ is analytic in $k$ even though the operator prescription is linear. 
\begin{prop}[Analyticity]\label{periodic:prop:analyticity_Mper}
	Suppose Assumption~\ref{periodic:assumption:periodic_eps_mu} on $\eps$ and $\mu$ holds. 
	\begin{enumerate}[(i)]
		\item The domain of selfadjointness 
		\begin{align}
			\domainT = \bigl ( \ker \Div(k) \cap H^1(\T^3,\C^6) \bigr ) \directsum \ran \Grad(k) 
			\label{periodic:eqn:domain_Mper_k}
		\end{align}
		of $\Mper(k)$ is independent of $k$. 
		\item The map $\R^3 \ni k \mapsto \Mper(k) \in \mathcal{B} ( \domainT , \HperT )$ is analytic. 
	\end{enumerate}
\end{prop}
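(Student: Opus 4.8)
The plan is to reduce everything to the free fibered operator $\Rot(k)$ and to exploit that the entire $k$-dependence of $\Rot(k)$ is \emph{affine-linear} and enters only through a bounded, zeroth-order term. For part~(i), since $W$ is bounded with bounded inverse, $\domain(\Mper(k)) = \domain(\Rot(k))$ with uniformly equivalent graph norms, so it suffices to treat $\Rot(k)$. The crucial observation is that for any $k , k' \in \R^3$ the difference
\[
	\Rot(k) - \Rot(k') = \left ( \begin{matrix} 0 & -(k-k')^{\times} \\ +(k-k')^{\times} & 0 \end{matrix} \right )
\]
is a bounded multiplication operator on $\HperT$, because the momentum $- \ii \nabla_y$ cancels and only the algebraic term $k^{\times}$ survives. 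Adding a bounded operator to a self-adjoint one changes neither the domain nor self-adjointness; since $\Rot(k)$ and $\Rot(k') + \bigl ( \Rot(k) - \Rot(k') \bigr )$ are both self-adjoint and agree on the core $H^1(\T^3,\C^6)$, they coincide. Hence $\domain(\Rot(k)) = \domain(\Rot(0)) =: \domainT$ is independent of $k$, with graph norms equivalent uniformly on compact subsets of $\R^3$.

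Next I would identify $\domainT$ explicitly. Because $\domain(\Mper(k)) = \domain(\Rot(k))$, it is the \emph{free} domain and is therefore $(\eps,\mu)$-independent, so the splitting in \eqref{periodic:eqn:domain_Mper_k} is an \emph{unweighted direct} (not orthogonal) sum. One computes it either from the fiber analog on $\T^3$ of the Helmholtz–Hodge–Weyl–Leray decomposition (Theorem~\ref{appendix:curl:thm:Leray_decomposition}), or by transferring Theorem~\ref{Maxwell:thm:selfadjointness}: since $\Div$, $\Grad$ and $H^1$ commute with lattice translations they fiber under the Zak transform — by \eqref{periodic:eqn:Zak_H1} the Zak transform of $H^1(\R^3,\C^6)$ has constant fiber $H^1(\T^3,\C^6)$, while $\Div$ and $\Grad$ fiber to $\Div(k)$ and $\Grad(k)$ — so the domain $\domain = \bigl ( \ker\Div \cap H^1 \bigr ) \directsum \ran\Grad$ transforms fiberwise into \eqref{periodic:eqn:domain_Mper_k}. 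Here $\ran\Grad(k) = \ker\Rot(k) = \Gper(k)$ carries no regularity constraint, whereas the summand $\ker\Div(k) \cap H^1(\T^3,\C^6)$ supplies exactly the $H^1$-regularity on which $\Rot(k)$ maps into $\HperT$. The point worth stressing is that although both summands rotate with $k$, their direct sum is the fixed space $\domainT$, consistently with the $k$-independence above.

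Part~(ii) is then almost immediate. Equipping $\domainT$ with the graph norm of $\Mper(0)$, the map $\Rot(0) : \domainT \longrightarrow \HperT$ is a fixed bounded operator, and the bounded term $B(k) := \Rot(k) - \Rot(0)$ depends \emph{linearly} on $k$; since $\domainT \hookrightarrow \HperT$ continuously, $k \mapsto \Rot(k) = \Rot(0) + B(k)$ is an affine-linear, hence entire analytic, $\mathcal{B} ( \domainT , \HperT )$-valued map. Composing on the left with the fixed bounded operator $W$ shows that $k \mapsto \Mper(k) = W \, \Rot(k)$ is analytic — indeed a degree-one polynomial — in $\mathcal{B} ( \domainT , \HperT )$.

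The main obstacle is part~(i): the $k$-independence itself is handled cleanly by the bounded-perturbation argument, but some care is needed to justify the \emph{explicit} form \eqref{periodic:eqn:domain_Mper_k} — reconciling the manifestly $k$-dependent summands with the $k$-independent total domain, and ensuring that the fiberwise Zak identification of $\ker\Div$, $\ran\Grad$ and $H^1$ holds for \emph{every} $k$ rather than merely for almost every $k$. For the latter the direct torus Helmholtz decomposition, valid pointwise in $k$, is the safer route, with the Zak argument serving as a cross-check.
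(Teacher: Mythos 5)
Your proposal is correct and follows essentially the same route as the paper: the key step in both is that $\Mper(k) - \Mper(k')$ restricted to the common core $H^1(\T^3,\C^6)$ extends to the bounded operator $(k-k')\cdot\mathbf{A}$, from which the $k$-independence of the domain follows (you phrase this as a bounded selfadjoint perturbation preserving the domain, the paper as equivalence of the graph norms and hence of the completions of the core), the explicit form \eqref{periodic:eqn:domain_Mper_k} is obtained by fibering the decomposition of Theorem~\ref{Maxwell:thm:selfadjointness} under the Zak transform, and analyticity in $\mathcal{B}(\domainT,\HperT)$ is then immediate since $k \mapsto \Mper(k)$ is an affine-linear map on a fixed domain.
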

\begin{proof}
	\begin{enumerate}[(i)]
		\item Since $H^1(\R^3,\C^6)$ is a core for $\Mper$ (Theorem~\ref{Maxwell:thm:selfadjointness}~(i)) and \eqref{periodic:eqn:Zak_H1}, we know that $H^1(\T^3,\C^6)$ is a common core of $\Mper(k)$ for all values of $k$. Moreover, combining equations \eqref{Maxwell:eqn:domain_Maxwell} and \eqref{periodic:eqn:Zak_H1} with the fact that $\Div$ and $\Grad$ also fiber in $k$ yields the decomposition of $\domainT$ as a $k$-dependent direct sum as given by \eqref{periodic:eqn:domain_Mper_k}. 
		
		The difference of the two fiber operators restricted to $H^1(\T^3,\C^6)$ extends to a bounded operator on all of $\HperT$, 
		\begin{align}
			\Mper(k) \vert_{H^1} - \Mper(k') \vert_{H^1} 
			&
			= W \, \left (
			\begin{matrix}
				0 & - (k - k')^{\times} \\
				+ (k - k')^{\times} & 0 \\
			\end{matrix}
			\right ) 
			\notag \\
			&
			=: \sum_{j = 1}^3 (k_j - k_j') \, \mathbf{A}_j 
			=: (k - k') \cdot \mathbf{A} 
			\, . 
			\label{periodic:eqn:definition_k_cdot_bfA}
		\end{align}
		Using $\bnorm{k \cdot \mathbf{A}}_{\mathcal{B}(\HperT)} = \sabs{k} \, \snorm{W}_{\mathcal{B}(\HperT)}$, it is straightforward to see that these graph norms of $\Mper(k)$ and $\Mper(0)$ are equivalent on $H^1(\T^3,\C^6)$, 
		\begin{align*}
			\bigl ( 1 + \abs{k} \, \snorm{W} \bigr )^{-1} \, \snorm{\Psi}_{\Mper(0)} \leq \snorm{\Psi}_{\Mper(k)} \leq \bigl ( 1 + \abs{k} \, \snorm{W} \bigr ) \, \snorm{\Psi}_{\Mper(0)} 
			\, . 
		\end{align*}
		The equivalence of the graph norms now implies that the domains, seen as completions of $H^1(\T^3,\C^6)$ with respect to these graph norms, are independent of $k$, 
		\begin{align*}
			\domainT(k) = \overline{H^1(\T^3,\C^6)}^{\norm{\cdot}_{\Mper(k)}} 
			= \overline{H^1(\T^3,\C^6)}^{\norm{\cdot}_{\Mper(0)}} 
			= \domainT(0)
			\, . 
		\end{align*}
		\item By (i) the domain $\domainT$ of each $\Mper(k)$ is independent of $k$, and thus the analyticity of the linear polynomial $k \mapsto \Mper(k)$ is trivial. 
	\end{enumerate}
\end{proof}
The fibration of $\Mper^{\Zak}$ can be used to extract a great deal of information on the spectra of $\Mper$ and $\Mper(k)$: 
\begin{thm}[Spectral properties]\label{periodic:thm:spectrum_Mper}
	Suppose Assumption~\ref{periodic:assumption:periodic_eps_mu} on $\eps$ and $\mu$ is satisfied. Then for any $k \in \R^3$ the following holds true: 
	\begin{enumerate}[(i)]
		\item $\displaystyle \sigma \bigl ( \Mper(k) \vert_{\Gper(k)} \bigr ) = \sigma_{\mathrm{ess}} \bigl ( \Mper(k) \vert_{\Gper(k)} \bigr ) 
		= \sigma_{\mathrm{pp}} \bigl ( \Mper(k) \vert_{\Gper(k)} \bigr ) 
		= \{ 0 \}$
		\item $\displaystyle \sigma \bigl ( \Mper(k) \vert_{\Jper(k)} \bigr ) 
		= \sigma_{\mathrm{disc}} \bigl ( \Mper(k) \vert_{\Jper(k)} \bigr )$ 
		\item $\displaystyle \sigma \bigl ( \Mper(k) \vert_{\Jreg(k)} \bigr ) 
		= \sigma_{\mathrm{disc}} \bigl ( \Mper(k) \vert_{\Jreg(k)} \bigr ) 
		= \sigma \bigl ( \Mper(k) \bigr )$ 
		\item $\displaystyle \sigma(\Mper) = \bigcup_{k \in \BZ} \sigma \bigl ( \Mper(k) \bigr ) 
		= \bigcup_{k \in \R^3} \sigma \bigl ( \Mper(k) \bigr )$
		\item $\displaystyle \sigma(\Mper) = \sigma_{\mathrm{ac}}(\Mper) \cup \sigma_{\mathrm{pp}}(\Mper)$
	\end{enumerate}
\end{thm}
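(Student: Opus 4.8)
The plan is to dispatch the five claims in order, isolating two analytic inputs—a compact-resolvent estimate for the restricted fiber operators and the analyticity of $k \mapsto \Mper(k)$ already supplied by Proposition~\ref{periodic:prop:analyticity_Mper}—and then feeding them into the direct-integral machinery. Claim~(i) is immediate: by construction $\Gper(k) = \ker \Rot(k) = \ker \Mper(k)$ (recall $W$ is boundedly invertible), so $\Mper(k)\vert_{\Gper(k)}$ is literally the zero operator on the infinite-dimensional space $\Gper(k)$. Its spectrum is thus $\{0\}$, realized as an eigenvalue of infinite multiplicity, which forces $\sigma = \sigma_{\mathrm{ess}} = \sigma_{\mathrm{pp}} = \{0\}$.

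For (ii) and (iii) the crux is that the restrictions $\Mper(k)\vert_{\Jper(k)}$ and $\Mper(k)\vert_{\Jreg(k)}$ have compact resolvent. First I would establish that on divergence-free fields the graph norm of $\Rot(k)$ is equivalent to the $H^1(\T^3)$-norm: on the torus the Fourier identity $\snorm{(-\ii\nabla_y+k)u}^2 = \snorm{(-\ii\nabla_y+k)^{\times}u}^2 + \snorm{(-\ii\nabla_y+k)\cdot u}^2$ holds, so after discarding the divergence term (which vanishes on $\ker\Div(k)$) one controls all first derivatives by $\snorm{\Rot(k)u} + \snorm{u}$, and since $W$ is bounded with bounded inverse this transfers to $\Mper(k)$. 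By \eqref{periodic:eqn:domain_Mper_k} the domain of $\Mper(k)\vert_{\Jper(k)}$ sits in $H^1(\T^3,\C^6)$, which embeds compactly into $\HperT$ by Rellich; compact resolvent and hence purely discrete spectrum follow, giving (ii). For (iii) I would observe that $\Greg(k) \subseteq \Gper(k) = \ker \Mper(k)$, so for the selfadjoint $\Mper(k)$ the closed subspace $\Greg(k)$ and its complement $\Jreg(k)$ both reduce the operator; combining the orthogonal decomposition $\Jreg(k) = \Jper(k) \directsum \bigl(\Gper(k)\cap\Jreg(k)\bigr)$ from Lemma~\ref{periodic:lem:regularized_projections}(iii) with the fact that the extra two dimensions lie in $\ker\Mper(k)$ yields $\Mper(k)\vert_{\Jreg(k)} = \Mper(k)\vert_{\Jper(k)} \directsum 0$ on a two-dimensional space. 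Its spectrum is therefore the discrete spectrum of the $\Jper$-part together with $\{0\}$, which is precisely $\sigma(\Mper(k))$ and is still purely discrete.

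Claim~(iv) is the band picture: writing $\Mper^{\Zak} = \int_{\BZ}^{\oplus} \dd k \, \Mper(k)$, the direct-integral spectral theorem gives $\sigma(\Mper) = \overline{\bigcup_{k\in\BZ}\sigma(\Mper(k))}$, and I would remove the closure using that the eigenvalue branches of the analytic compact-resolvent family $\Mper(k)\vert_{\Jreg(k)}$ are continuous and $\Gamma^*$-periodic: each band has compact, hence closed, image over $\BZ$, the bands are locally finite, so the union is already closed. The second equality $\bigcup_{\BZ} = \bigcup_{\R^3}$ follows because the equivariance relation \eqref{periodic:eqn:equivariance_condition} makes $\Mper(k)$ and $\Mper(k-\gamma^*)$ unitarily equivalent, so translating the quasimomentum adds nothing.

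The main obstacle is (v), ruling out singular continuous spectrum, and this is where the regularized decomposition pays off. Unlike $\Jper(k)$, which fails to be continuous at $\Gamma^*$, the family $k \mapsto \Mper(k)\vert_{\Jreg(k)}$ is \emph{analytic} with compact resolvent by Lemma~\ref{periodic:lem:regularized_projections}(ii) and Proposition~\ref{periodic:prop:analyticity_Mper}. I would split $\Mper^{\Zak} = \bigl(\int^{\oplus} \Mper(k)\vert_{\Jreg(k)} \, \dd k\bigr) \directsum \bigl(\int^{\oplus} 0\vert_{\Greg(k)} \, \dd k\bigr)$: the second summand is the zero operator on an infinite-dimensional space, contributing only the eigenvalue $0$, i.e.\ to $\sigma_{\mathrm{pp}}$. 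For the first summand I would invoke the theory of analytically fibered operators (Bloch--Floquet band theory in the style of Reed--Simon, or Gérard--Nier): analytic perturbation theory produces locally real-analytic eigenvalue branches $\omega_n(k)$, and since a non-constant real-analytic band has level sets of measure zero its spectral contribution is purely absolutely continuous, while any constant (flat) band contributes pure point spectrum—leaving no room for a singular continuous part. I expect the delicate points to be handling the measure-zero set of band crossings where naive analytic branches break down, and verifying the hypotheses (local finiteness and properness of the bands) of the analytically-fibered spectral theorem; it is exactly here that analyticity of $\Jreg(k)$, rather than mere measurability of $\Jper(k)$, is indispensable.
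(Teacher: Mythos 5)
Your overall architecture coincides with the paper's: (i) rests on $\Gper(k) \subseteq \ker \Mper(k)$ being infinite-dimensional, (iii) reduces to (ii) via the two-dimensional discrepancy of Lemma~\ref{periodic:lem:regularized_projections}~(iii), (iv) is the direct-integral spectral theorem together with continuity and $\Gamma^*$-periodicity of the bands, and (v) excludes singular continuous spectrum from the piecewise analyticity of the band functions over $\Jreg(k)$ — all as in the paper. (A minor imprecision in (i): the identity $\Gper(k) = \ker \Mper(k)$ fails at $k \in \Gamma^*$, where the kernel also contains the six-dimensional ground-state space sitting inside $\Jper(0)$; but only the inclusion $\Gper(k) \subseteq \ker \Mper(k)$ is needed, so nothing breaks there.)

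The genuine gap is in your proof of (ii). You assert that, by \eqref{periodic:eqn:domain_Mper_k}, the domain of $\Mper(k) \vert_{\Jper(k)}$ is contained in $H^1(\T^3,\C^6)$, and then apply Rellich. This step fails for the weights admitted by Assumption~\ref{periodic:assumption:periodic_eps_mu}. The physical subspace is $\Jper(k) = \ker \bigl ( \Div(k) \, W^{-1} \bigr ) = W \, \ker \Div(k)$, \emph{not} $\ker \Div(k)$: its elements satisfy $\nabla \cdot (\eps \psi^E) = 0$, not $\nabla \cdot \psi^E = 0$. Hence the divergence term in your Fourier identity does not vanish on $\domainT \cap \Jper(k)$; moreover, the splitting $\domainT = \bigl ( \ker \Div(k) \cap H^1 \bigr ) \directsum \ran \Grad(k)$ is orthogonal for the \emph{unweighted} inner product, so intersecting with the \emph{weighted}-orthogonal subspace $\Jper(k)$ does not annihilate the $\ran \Grad(k)$ component. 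For discontinuous $\eps , \mu$ — exactly the case the paper insists on covering — fields in $\domainT \cap \Jper(k)$ are genuinely not $H^1$, and the required compactness cannot be routed through the $H^1 \hookrightarrow L^2$ embedding. The paper avoids this entirely: it uses the compactness of $\bigl ( \Rot(k) \vert_{\Jrot(k)} - z \bigr )^{-1}$ for the \emph{free} operator (where your $H^1$ argument is valid, since there the constraint really is $\Div(k) \Psi = 0$) and then expresses $\bigl ( \Mper(k) \vert_{\Jper(k)} - z \bigr )^{-1}$ as a product of bounded operators with this compact resolvent, following Sjöberg et al. To repair your argument you must replace the Rellich step by such a resolvent identity (or by a Weber--Picard type compactness theorem adapted to the weighted divergence constraint); the remaining four parts then go through as you describe.
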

\begin{proof}
	\begin{enumerate}[(i)]
		\item For any $\varphi \in \Cont^{\infty}_{\mathrm{c}}(\R^3,\C^2)$, the vector $\Grad(\varphi) \in \Gper$ is an element of the unphysical subspace, and thus we have found an eigenvector to $0$, 
		\begin{align*}
			\Mper(k) (\Zak \Psi)(k) = \bigl ( \Zak \Mper \Psi \bigr )(k) = 0 
			. 
		\end{align*}
		This means we have found a countably infinite family of eigenvectors, and we have shown (i). 
		\item According to Lemma~\ref{appendix:Maxwell:lem:properties_fibration_Rot}, $\bigl ( \Rot(k) \vert_{\Jrot(k)} - z \bigr )^{-1}$ is compact for all $k \in \R^3$ where $\Jrot(k) = \ker \Div(k)$ is the physical subspace for the free Maxwell operator. Because we can write $\bigl ( \Mper(k) \vert_{\Jper(k)} - z \bigr )^{-1}$ as a product of bounded operators and $\bigl ( \Rot(k) \vert_{\Jrot(k)} - z \bigr )^{-1}$ \cite[equation~(4.23)]{Sjoeberg_Engstroem_Kristensson_Wall_Wellander:Bloch-Floquet_Maxwell_homogenization:2005}, the resolvent of $\Mper(k) \vert_{\Jper(k)}$ is also compact. Thus, the spectrum of $\Mper(k) \vert_{\Jper(k)}$ is purely discrete. 
		\item This follows from (ii) and the observation that by Lemma~\ref{periodic:lem:regularized_projections}~(iii), $\Jper(k)$ and $\Jreg(k)$ differ by an at most $2$-dimensional subspace $\Jreg(k) \cap \Gper(k)$. 
		\item The proof is analogous to that of \cite[Corollary~57]{Figotin_Kuchment:band_gap_structure_periodic_dielectric_acoustic_media:1996}. 
		\item From (iv) we know that $\sigma(\Mper)$ can be written as the union of the spectra of the fiber operators $\Mper(k)$. Because these spectra $\sigma \bigl ( \Mper(k) \bigr ) = \bigl \{ \omega_n(k) \bigr \}_{n \in \Z}$ in turn can be expressed in terms of \emph{piecewise analytic} frequency band functions $k \mapsto \omega_n(k)$, $n \in \Z$ (\cf Theorem~\ref{intro:thm:band_picture}~(i)), $\sigma_{\mathrm{sc}}(\Mper)$ must be empty. 
	\end{enumerate}
\end{proof}
\begin{remark}[Absolute continuity of $\sigma \bigl ( \Mper \vert_{\Jper} \bigr )$]
	Unlike in the case of periodic Schrö\-ding\-er operators, it has not yet been proven that the spectrum of $\Mper \vert_{\Jper}$ is purely absolutely continuous. To show $\sigma \bigl ( \Mper \vert_{\Jper} \bigr ) = \sigma_{\mathrm{ac}} \bigl ( \Mper \vert_{\Jper} \bigr )$, all of the known proofs reduce the Maxwell operator to a possibly non-selfadjoint Schrödinger-type operator with magnetic field, and these transformations involve derivatives of $\eps$ and $\mu$ \cite{Morame:spectrum_purely_ac_periodic_Maxwell:2000,Suslina:ac_spectra_periodic_operators:2000,Kuchment_Levendorskii:spectrum_periodic_elliptic_operators:2001}. Hence, one needs additional regularity assumptions on $\eps$ and $\mu$; the best currently known are $\eps , \mu \in \Cont^1(\R^3)$ \cite[Section~7.4]{Kuchment_Levendorskii:spectrum_periodic_elliptic_operators:2001}. This means, even though it is widely expected that the spectrum is always purely absolutely continuous, flat bands (apart from $\omega \equiv 0$) currently cannot be excluded unless we make additional regularity assumptions on $\eps$ and $\mu$. 
\end{remark}
So far, most spectral and analytic properties mirror of $\Mper^{\Zak}$ those of periodic Schrödinger operators, but there are two important differences: (i) $\Mper$ is not bounded from below and (ii) in case of real weights the PH symmetry of the spectrum (\cf Theorem~\ref{Maxwell:thm:particle_hole_symmetry}) implies a symmetry for the frequency band spectrum (\cf Figure~\ref{periodic:fig:frequency_bands}). 

The first item in conjunction with the non-analyticity of $\Jper(k)$ at $k \in \Gamma^*$ potentially complicates the labeling of frequency bands. For simplicity, we solve this \emph{using} the band picture proven in Theorem~\ref{intro:thm:band_picture}: first of all, we know there exists an infinitely degenerate flat band $\omega_0(k) = 0$ associated to the unphysical states (\cf Theorem~\ref{periodic:thm:spectrum_Mper}~(i)). Moreover, it is easy to prove that $0$ is an eigenvalue of $\Mper(k) \vert_{\Jper(k)}$ if and only if $k \in \Gamma^*$. Away from $k \in \Gamma^*$, we repeat \emph{non-zero} eigenvalues $\omega_j(k)$ of $\Mper(k)$ according to their multiplicity, arrange them in non-increasing order and label positive (negative) eigenvalues with positive (negative) integers, \ie away from $k \in \Gamma^*$ we set 
\begin{align*}
	\ldots \leq \omega_{-2}(k) \leq \omega_{-1}(k) < \omega_0(k) = 0 < \omega_{1}(k) \leq \omega_{2}(k) \leq \ldots 
\end{align*}
Moreover, due to the analyticity of $k \mapsto \Mper(k)$, the eigenvalues depend on $k$ in a continuous fashion, and we extend this labeling by continuity to $k \in \Gamma^*$. This procedure yields a family $\bigl \{ k \mapsto \omega_n(k) \bigr \}_{n \in \Z}$ of $\Gamma^*$-periodic functions. 

Two types of bands are special: beside the zero mode band $\omega_0(k) = 0$ which is due to states in $\Gper(k)$, the \emph{ground state bands} are those of lowest frequency in absolute value: 
\begin{defn}[Ground state bands]\label{periodic:defn:ground_state}
	We call a frequency band $k \mapsto \omega_n(k)$ of $\Mper^{\Zak}$ a ground state band if and only if 
	\begin{enumerate}[(i)]
		\item $\displaystyle \lim_{k \rightarrow 0} \omega_n(k) = 0$ and 
		\item $\omega_n$ is not identically $0$ in a neighborhood of $k = 0$. 
	\end{enumerate}
	Moreover, we define $\Index_{\mathrm{gs}} \subset \Z$ to be the set of ground state band indices. 
\end{defn}
The ground state bands can be recovered from the space of zero modes 
\begin{align*}
	\mathrm{GS} := \ker \Mper(0) \cap \Jper(0) 
	\, . 
\end{align*}
using analytic continuation, and hence, also the use of $\Preg(k)$ instead of $\Pper(k)$ even though they coincide at $k = 0$. 
\begin{lem}[Ground state eigenfunctions at $k = 0$]\label{periodic:lem:ground_state_k_0}
	Suppose Assumption~\ref{periodic:assumption:periodic_eps_mu} holds true. 
	Then $\mathrm{GS} = \Preg(0) \bigl \{ y \mapsto a \; \vert \; a \in \C^6 \bigr \}$ is six-dimensional and any of its elements can be uniquely written as 
	\begin{align*}
		\Psi_a(y) := \bigl ( \Preg(0) a \bigr )(y) 
		&= \sum_{\gamma^* \in \Gamma} \widehat{\Psi}_a(\gamma^*) \, \e^{+ \ii \gamma^* \cdot y} 
	\end{align*}
	for some $a \in \C^6$. 
	The Fourier coefficients $\widehat{\Psi}_a(\gamma^*) = \bigl ( \widehat{\psi}_a^E(\gamma^*) , \widehat{\psi}_a^H(\gamma^*) \bigr )$ satisfy the following relations: 
	\begin{align}
		\widehat{\Psi}_a(0) &= a \in \C^6 
		\label{periodic:eqn:ground_state_eigenfunction}
		\\
		\widehat{\psi}_a^{\sharp}(\gamma^*) &\propto \gamma^* 
		&&
		\forall \gamma^* \in \Gamma^* \setminus \{ 0 \} 
		, \; 
		\sharp = E , H 
		\notag 
	\end{align}
\end{lem}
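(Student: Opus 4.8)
The plan is to reduce everything to the free operator $\Rot(0)$ and to a Fourier analysis on $\T^3$. First I would note that $W$ is boundedly invertible, so $\ker \Mper(0) = \ker \Rot(0)$, and from the block form of $\Rot(0)$ a pair $(\psi^E,\psi^H)$ lies in the kernel precisely when both components are curl-free on the torus. Expanding in the basis $\{\e^{+\ii\gamma^*\cdot y}\}_{\gamma^*\in\Gamma^*}$, the condition $\curl \psi^{\sharp} = 0$ is equivalent to $\gamma^*\times\widehat{\psi}^{\sharp}(\gamma^*)=0$ for every $\gamma^*$, i.e.\ the constant mode $\widehat{\psi}^{\sharp}(0)$ is unconstrained while $\widehat{\psi}^{\sharp}(\gamma^*)\propto\gamma^*$ for $\gamma^*\neq 0$. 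Comparing with the explicit description~\eqref{periodic:eqn:Greg_k} of $\Greg(0)$, this exhibits $\ker\Rot(0)=\mathcal{V}\directsum\Greg(0)$ as an algebraic direct sum, where $\mathcal{V}:=\{y\mapsto a \; \vert \; a\in\C^6\}$ is the six-dimensional space of constants; the sum is direct because elements of $\Greg(0)$ have vanishing constant mode.

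Next I would identify $\mathrm{GS}=\Preg(0)\mathcal{V}$. Recall that $\Preg(0)=\Pper(0)$ (since $\Qreg(0)=\Qper(0)$ at $k=0$) is the $\scpro{\cdot}{\cdot}_{\HperT}$-orthogonal projection onto $\Jper(0)=\Jreg(0)=\Greg(0)^{\perp}$, while $\Qreg(0)=\id_{\HperT}-\Preg(0)$ projects onto $\Greg(0)$. For the inclusion $\Preg(0)\mathcal{V}\subseteq\mathrm{GS}$: any $\Preg(0)a$ lies in $\Jper(0)$ by construction, and $\Preg(0)a=a-\Qreg(0)a$ lies in $\ker\Rot(0)$ because both the constant $a$ and $\Qreg(0)a\in\Greg(0)$ are curl-free. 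For the reverse inclusion, take $\psi\in\mathrm{GS}$ and split $\psi=\psi_c+\psi_g$ along $\mathcal{V}\directsum\Greg(0)$; since $\psi\in\Jper(0)=\ran\Preg(0)$ and $\Preg(0)\psi_g=0$, one gets $\psi=\Preg(0)\psi=\Preg(0)\psi_c\in\Preg(0)\mathcal{V}$. The dimension count then follows: the linear map $\C^6\ni a\mapsto\Preg(0)a$ is onto $\mathrm{GS}$ and injective, because $\Preg(0)a=0$ forces $a\in\Greg(0)\cap\mathcal{V}=\{0\}$; hence $\dim\mathrm{GS}=6$.

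Finally, for the Fourier relations I would read off the coefficients of $\Psi_a=\Preg(0)a=a-\Qreg(0)a$. Since $\Qreg(0)a\in\Greg(0)$ has vanishing constant mode by~\eqref{periodic:eqn:Greg_k}, the constant coefficient is $\widehat{\Psi}_a(0)=a$, while for $\gamma^*\neq 0$ one has $\widehat{\Psi}_a(\gamma^*)=-\widehat{(\Qreg(0)a)}(\gamma^*)\propto\gamma^*$, again by the structure~\eqref{periodic:eqn:Greg_k} of $\Greg(0)$. The one point demanding care -- and the main obstacle -- is that with non-trivial weights $(\eps,\mu)$ the scalar product on $\HperT$ is weighted, so the decomposition $\ker\Rot(0)=\mathcal{V}\directsum\Greg(0)$ is \emph{not} orthogonal and $\Preg(0)a$ genuinely mixes Fourier modes. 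The argument must therefore avoid invoking orthogonality of this splitting and use only that $\mathcal{V},\Greg(0)\subseteq\ker\Rot(0)$ together with $\ran\Qreg(0)=\Greg(0)$; this is exactly what makes the use of the regularized projection $\Preg(0)$, rather than a naive constant-mode projection, essential.
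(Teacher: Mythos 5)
Your proof is correct and follows essentially the same route as the paper: decompose $\ker \Rot(0) = \{y \mapsto a\} \directsum \Greg(0)$ via the Fourier characterization, apply $\Preg(0)$, get injectivity from $\Greg(0)$ containing no non-zero constants, and read off the Fourier relations from $\Psi_a - a \in \Greg(0)$. The only (harmless) variation is that you verify $\Preg(0)a \in \ker\Rot(0)$ directly by writing $\Preg(0)a = a - \Qreg(0)a$ as a difference of curl-free fields, where the paper instead invokes $\bigl [ \Mper(k) , \Preg(k) \bigr ] = 0$; your closing remark about the non-orthogonality of the splitting under the weighted scalar product is exactly the point the regularized projection is designed to handle.
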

\begin{proof}
	First of all, seeing as $W$ is bounded with bounded inverse, $\ker \Mper(k) = \ker \Rot(k)$. A simple computation (\cf Lemma~\ref{appendix:Maxwell:lem:properties_fibration_Rot}) yields that any $\Psi \in \ker \Rot(0)$ is of the form 
	\begin{align*}
		\Psi = a + \Psi_G
	\end{align*}
	for some $a \in \C^6$ and $\Psi_G \in \Greg(0) = \Gper(0)$. Applying $\Preg(0)$ to both sides yields $\Preg(0) \Psi = \Psi_a$ and consequently, $\dim \mathrm{GS} \leq 6$. 
	
	From $\bigl [ \Mper(k) , \Preg(k) \bigr ] = 0$ we deduce $\Preg(0) \bigl \{ y \mapsto a \; \vert \; a \in \C^6 \bigr \} \subseteq \mathrm{GS}$. Moreover, in view of \eqref{periodic:eqn:Greg_k}, $y \mapsto a \in \C^6$ is an element of $\Greg(0) = \Gper(0)$ if and only if $a = 0$. Hence, $a \mapsto \Psi_a$ is injective and 
	\begin{align*}
		\dim \Preg(0) \bigl \{ y \mapsto a \; \vert \; a \in \C^6 \bigr \} 
		= \dim \mathrm{GS} 
		= 6 
		\, . 
	\end{align*}
	Finally, $\Preg(0) \bigl ( \Psi_a - a \bigr ) = 0$ means $\Psi_a - a \in \Greg(0)$, and thus using equation~\eqref{periodic:eqn:Greg_k} once more, we deduce $\widehat{\Psi}_a(\gamma^*) \propto \gamma^*$ and $\widehat{\Psi}_a(0) = a$. 
\end{proof}
We now proceed to the proof of Theorem~\ref{intro:thm:band_picture} which establishes the frequency band picture for periodic Maxwell operators (\cf Figure~\ref{periodic:fig:frequency_bands}). 
\begin{proof}[of Theorem \ref{intro:thm:band_picture}]
	\begin{enumerate}[(i)]
		\item Since $\Mper(k)$ is isospectral to its restriction $\Mper(k) \vert_{\Jreg(k)}$, let us consider the latter. First of all, $k \mapsto \omega_{0}(k) = 0$ is trivially analytic, we may assume $n \neq 0$. Thus, the analyticity away from band crossings follows from the purely discrete nature of the spectrum of $\Mper(k) \vert_{\Jreg(k)}$ (Theorem~\ref{periodic:thm:spectrum_Mper}~(iii)), the analyticity of $k \mapsto \Mper(k)$ (Proposition~\ref{periodic:prop:analyticity_Mper}~(ii)) and $k \mapsto \Preg(k)$ (Lemma~\ref{periodic:lem:regularized_projections}) combined with standard perturbation theory in the sense of Kato \cite{Kato:perturbation_theory:1995}. 
		
		The $\Gamma^*$-periodicity of $k \mapsto \omega_{n}(k)$ is deduced from the equivariance of $k \mapsto \Mper(k)$. 
		\item Now assume in addition that $\eps$ and $\mu$ are real. For $n = 0$, we trivially find $\omega_{0}(k) = 0 = - \omega_{0}(-k)$. So from now on, suppose $n \in \Z \setminus \{ 0 \}$. 
			
		One can check that upon Zak transform, the PH operator (complex conjugation) $C^{\Zak} := \Zak C \Zak^{-1}$ acts on elements of $\Psi \in L^2_{\mathrm{eq}}(\BZ,\HperT)$ as $\bigl ( C^{\Zak} \Psi \bigr )(k) = \overline{\Psi(-k)}$. Combined with $C^{\Zak} \, \Mper^{\Zak} = - \Mper^{\Zak} \, C^{\Zak}$ which follows from equation~\eqref{Maxwell:eqn:particle-hole_symmetry} since $\eps$ and $\mu$ are real, a straight-forward calculation shows that if $u_n(k)$ is an eigenfunction to $\omega_n(k)$, then $\bigl ( C^{\Zak} u_n \bigr )(k)$ is an eigenfunction to $- \omega_n(-k)$, and we have shown (ii). 
		\item To show (1), we will prove 
		\begin{align}
			0 \in \sigma \bigl ( \Mper(k) \vert_{\Jper(k)} \bigr ) 
			\; \Longleftrightarrow \; 
			0 \in \sigma \bigl ( \Rot(k) \vert_{\Jrot(k)} \bigr )
			\label{periodic:eqn:0_in_spec_Maxwell_equiv_0_in_spec_Rot}
		\end{align}
		first where $\Jrot(k) = \ker \Div(k)$ is the physical subspace of the free Maxwell operator, and since the spectrum of $\Rot$, 
		\begin{align*}
			\sigma \bigl ( \Rot(k) \vert_{\Jrot(k)} \bigr ) = \bigcup_{\gamma^* \in \Gamma^*} \bigl \{ \pm \sabs{k + \gamma^*} \bigr \} 
			, 
		\end{align*}
		is known explicitly (\cf Lemma~\ref{appendix:Maxwell:lem:properties_fibration_Rot}), this will prove $0 \in \sigma \bigl ( \Mper(k) \vert_{\Jper(k)} \bigr )$ if and only if $k \in \Gamma^*$. Hence, combined with Definition~\ref{periodic:defn:ground_state} this implies (1). 
		
		First of all, since the spectra $\sigma \bigl ( \Mper(k) \vert_{\Jper(k)} \bigr )$ are discrete for any $k \in \BZ$ (Theorem~\ref{periodic:thm:spectrum_Mper}~(ii)), we only need to consider the existence of eigenvectors. As the inverse of $W$ is bounded, the equations $\Mper(k) \Psi = 0$ and $\Rot(k) \Psi = 0$ are equivalent on the domain $\domainT$. We will now show that the existence of $\Psi_{\Mper} \in \Jper(k) \cap \domainT$ to $\Mper(k) \Psi_{\Mper} = 0$ is equivalent to the existence of a $\Psi_{\Rot} \in \ker \Div(k)$ which satisfies $\Rot(k) \Psi_{\Rot} = 0$. 
		
		Assume there exists an eigenvector $\Psi_{\Mper} \in \Jper(k) \cap \domainT$. Then by the direct decomposition of the domain $\domain = \ker \Div(k) \directsum \ran \Grad(k)$ implies we can uniquely write 
		\begin{align*}
			\Psi_{\Mper} = \Psi_{\Rot} + \Psi_{G}
		\end{align*}
		as the sum of $\Psi_{\Rot} \in \ker \Div(k)$ and $\Psi_G \in \Gper(k)$. Because the intersection $\Jper(k) \cap \Gper(k) = \{ 0 \}$ is trivial, we know $\Psi_{\Rot} \neq 0$. Hence, $\Psi_{\Rot}$ is an eigenvector of $\Rot(k)$, 
		\begin{align*}
			\Rot(k) \Psi_{\Rot}  = \Rot(k) \bigl ( \Psi_{\Mper} - \Psi_G \bigr ) 
			= 0 
			. 
		\end{align*}
		The converse statement is shown analogously and we have proven \eqref{periodic:eqn:0_in_spec_Maxwell_equiv_0_in_spec_Rot}. 
		
		Now we turn to (2): let us define $N := \sabs{\Index_{\mathrm{gs}}}$. By (ii), $N$ needs to be even. Due to \eqref{periodic:eqn:Gper_0_equals_Greg_0}, we may replace the physical subspace $\Jper(0)$ with its regularized version $\Jreg(0)$, and the six-dimensional space $\mathrm{GS}$ from Lemma~\ref{periodic:lem:ground_state_k_0} can also be defined in terms of $\Jreg(0)$. Thus, we already know $N \leq \dim \mathrm{GS} = 6$. Moreover, since $\dim \bigl ( \Gper(k) \cap \Jreg(k) \bigr ) = 2$ (Lemma~\ref{periodic:lem:regularized_projections}~(iii)) and $\Qper(k) \Jreg(k) \subset \Gper(k)$, the operator $\Mper(k) \vert_{\Jreg(k)}$ has a two-fold degenerate flat band $k \mapsto 0$ and we conclude that in fact, $N \leq 4$. 
		
		To show $N = 4$ and property~(2), we use {standard} analytic perturbation theory in the sense of Kato around the eigenvalue $0$: We have proven in (i) that all band functions are continuous, and thus if $\omega_n(0)=0$ there exists a neighborhood $V$ of $k = 0$ and a $\delta > 0$ such that $\sabs{\omega_n(k)} < \delta$ holds on $V$. Let us pick an orthonormal basis $\bigl \{ \Psi_1 , \ldots , \Psi_6 \bigr \}$ of $\mathrm{GS}$; according to Lemma~\ref{periodic:lem:ground_state_k_0}, each of these $\Psi_j$ is associated to a coefficient $a_{(j)} = \bigl ( a_{(j)}^E , a_{(j)}^H \bigr ) \in \C^6$, $j = 1 , \ldots , 6$ via \eqref{periodic:eqn:ground_state_eigenfunction}. Then $\Mper(0) \Psi_j = 0$ and \cite[equation~(2.40)]{Kato:perturbation_theory:1995} imply the ground state band functions $\{ \omega_n(k) \}_{n \in \Index_{\mathrm{gs}}}$ are approximately equal to the non-zero eigenvalues of the $k$-dependent matrix 
		\begin{align}
			k \cdot A := \Bigl ( \bscpro{\Psi_l}{k \cdot \mathbf{A} \Psi_j}_{\HperT} \Bigr )_{1 \leq l , j \leq 6} 
			\label{periodic:eqn:k_cdot_A}
		\end{align}
		where $k \cdot \mathbf{A} = \Mper(k) - \Mper(0)$ is explicitly given in equation~\eqref{periodic:eqn:definition_k_cdot_bfA} and $k \cdot A := \sum_{j=1}^3 k_j \, A_j$ involves the implicitly defined matrices $A_j$. For $a , b \in \C^6$, we can directly compute the scalar product: 
		\begin{align}
			\bscpro{\Psi_a}{k \cdot \mathbf{A} \Psi_b}_{\HperT} &= \scpro{\left (
			\begin{matrix}
				\psi_a^E \\
				\psi_a^H \\
			\end{matrix}
			\right )}{W \, \left (
			\begin{matrix}
				- k \times \psi_b^H \\
				+ k \times \psi_b^E \\
			\end{matrix}
			\right )}_{\HperT} 
			\notag \\
			&= k \cdot \int_{\T^3} \dd y \, \Bigl ( \overline{\psi_a^E(y)} \times \psi_b^H(y) - \overline{\psi_a^H(y)} \times \psi_b^E(y) \bigr ) \Bigr ) 
			\notag \\
			&
			= k \cdot \Bigl ( \overline{a^E} \times b^H - \overline{a^H} \times b^E \Bigr ) 
			\label{periodic:eqn:expval_matrix_k_cdot_bfA}
			\displaybreak[2]
			\\
			&= \scpro{\left (
			\begin{matrix}
				a^E \\
				a^H \\
			\end{matrix}
			\right )}{\left (
			\begin{matrix}
				0 & - k^{\times} \\
				+ k^{\times} & 0 \\
			\end{matrix}
			\right ) \left (
			\begin{matrix}
				b^E \\
				b^H \\
			\end{matrix}
			\right )}_{\C^6} 
			\label{periodic:eqn:expval_matrix_k_cdot_bfA_C6}
		\end{align}
		To arrive at the last line, we plug in the ansatz~\eqref{periodic:eqn:ground_state_eigenfunction} for the ground state function, use the orthogonality of the plane waves with respect to the standard scalar product on $L^2(\T^3)$ and exploit $\gamma^* \times \gamma^* = 0$. 
		
		Now let us define the invertible $6 \times 6$ matrix $\Lambda := \bigl ( a_{(1)} \; \vert \; \cdots \; \vert \; a_{(6)} \bigr )$ which maps the canonical basis $\bigl \{ v_{(1)} , \ldots , v_{(6)} \bigr \} \subset \C^6$ onto $\bigl \{ a_{(1)} , \ldots , a_{(6)} \bigr \}$. Then we can express the matrix elements of $k \cdot A$ in terms of $\Lambda$: 
		\begin{align}
			\bscpro{v_{(j)}}{k \cdot A \, v_{(n)}}_{\C^6} :& \negmedspace= \bigl ( k \cdot A \bigr )_{jn} 
			\notag \\
			&
			= \scpro{a_{(j)}}{\left (
			\begin{matrix}
				0 & - k^{\times} \\
				+ k^{\times} & 0 \\
			\end{matrix}
			\right ) a_{(n)}}_{\C^6} 
			\notag \\
			&= \scpro{v_{(j)}}{\Lambda^* \, \left (
			\begin{matrix}
				0 & - k^{\times} \\
				+ k^{\times} & 0 \\
			\end{matrix}
			\right ) \, \Lambda v_{(n)}}_{\C^6} 
			\label{periodic:eqn:matrix_elements_perturbation_expval_C6}
		\end{align}
		In view of equation~\eqref{periodic:eqn:expval_matrix_k_cdot_bfA}, the matrix elements possess an $SO(3)$ symmetry: if we define the action of $R \in SO(3)$ on $a \in \C^6$ by setting $R a := \bigl ( R a^E , R a^H \bigr )$, then equation~\eqref{periodic:eqn:expval_matrix_k_cdot_bfA} in conjunction with $R(v \times w) = Rv \times Rw$, $v , w \in \C^3$, yields 
		\begin{align}
			\bscpro{\Psi_a}{k \cdot \mathbf{A} \Psi_b}_{\HperT} 
			&= \bscpro{\Psi_{R a}}{(R k) \cdot \mathbf{A} \Psi_{R b}}_{\HperT} 
			. 
			\label{periodic:eqn:rotation_covariance_first-order_correction_ground_state}
		\end{align}
		Combining this symmetry with equation~\eqref{periodic:eqn:expval_matrix_k_cdot_bfA_C6}, we get 
		\begin{align*}
			\bigl ( &k \cdot A \bigr )_{jn} 
			= 
			% \\
			% &= 
			\scpro{R \, \Lambda v_{(j)}}{\left (
			\begin{matrix}
				0 & + (R k)^{\times} \\
				- (R k)^{\times} & 0 \\
			\end{matrix}
			\right ) \, R \, \Lambda v_{(n)}}_{\C^6} 
			\\
			&= \scpro{v_{(j)}}{\bigl ( \Lambda^{-1} \, R \, \Lambda \bigr )^* \, \Lambda^* \, \left (
			\begin{matrix}
				0 & + (R k)^{\times} \\
				- (R k)^{\times} & 0 \\
			\end{matrix}
			\right ) \, \Lambda \, \bigl ( \Lambda^{-1} \, R \, \Lambda \bigr ) v_{(n)}}_{\C^6} 
		\end{align*}
		or, put more succinctly after replacing $R$ with $R^{-1}$ and $k$ with $Rk$, 
		\begin{align*}
			(Rk) \cdot A &= \bigl ( \Lambda^{-1} \, R^{-1} \, \Lambda \bigr )^* \, \bigl ( k \cdot A \bigr ) \, \bigl ( \Lambda^{-1} \, R^{-1} \, \Lambda \bigr ) 
			. 
		\end{align*}
		As the matrix $\Lambda^{-1} \, R^{-1} \, \Lambda$ is invertible, we deduce 
		\begin{align}
			\mathrm{rank} \bigl ( k \cdot A \bigr ) = \mathrm{rank} \bigl ( (R k) \cdot A \bigr ) 
			= \mathrm{rank} \bigl ( \lambda \, k \cdot A \bigr ) 
			\label{periodic:eqn:rotation_invariance_rank_perturbation_matrix}
		\end{align}
		holds for all $R \in SO(3)$ and $\lambda \in \C \setminus \{ 0 \}$, \ie the rank of the matrix $k \cdot A$ is independent of $k \neq 0$. In particular, it means that if $0 \in \sigma \bigl ( k_0 \cdot A \bigr )$ for some special $k_0 \neq 0$, then $0$ is an eigenvalue of \emph{all} matrices $k \cdot A$. 
		
		Now we will reduce this problem of $6 \times 6$ matrices to a problem of $3 \times 3$ matrices: first of all, \emph{any} basis $\bigl \{ v_{(j)} \bigr \}_{j = 1}^6$ of $\C^6$ gives rise to a basis $\bigl \{ \Psi_{v_{(j)}} \bigr \}_{j = 1}^6$ of $\mathrm{GS}$. In particular, if we take $\bigl \{ v_{(j)} \bigr \}_{j = 1}^6$ to be the canonical basis of $\C^6$, we can apply the Gram-Schmidt procedure to $\bigl \{ \Psi_{v_{(j)}} \bigr \}_{j = 1}^6$ and obtain a $\scpro{\cdot \,}{\cdot}_{\HperT}$-\emph{orthonormal} basis $\bigl \{ \Psi_{a_{(1)}} \bigr \}_{j = 1}^6$ of $\mathrm{GS}$ with coefficients $a_{(j)} = \bigl ( a_{(j)}^E , a_{(j)}^H \bigr ) \in \C^6$. 
		Due to the block structure of $W^{-1}$ that is also inherited by  $\bscpro{\Phi}{\Psi}_{\HperT} = \bscpro{\Phi}{W^{-1} \Psi}_{L^2(\T^3,\C^6)}$, the fact that $v_{(1)}^H = v_{(2)}^H = v_{(3)}^H = 0$ and $v_{(4)}^E = v_{(5)}^E = v_{(6)}^E = 0$ forces also the corresponding coefficients of the orthonormalized vectors to be $0$, 
		\begin{align*}
			a_{(1)}^H = a_{(2)}^H = a_{(3)}^H = 0
			\, , 
			\qquad 
			a_{(4)}^E = a_{(5)}^E = a_{(6)}^E = 0
			\, . 
		\end{align*}
		Moreover, $\bigl \{ a^E_{(1)} , a^E_{(2)}, a^E_{(3)} \bigr \}$ and $\bigl \{ a^H_{(4)} , a^H_{(5)}, a^H_{(6)} \bigr \}$ are two sets of linearly independent vectors in $\C^3$ with $a_{(1)}^E , a_{(4)}^H \propto (1,0,0)$.
		
		Thus, using  equation \eqref{periodic:eqn:expval_matrix_k_cdot_bfA}, one sees that the symmetric matrix $k \cdot A$ is purely block-offdiagonal and can be written in term of three $3 \times 3$ matrices $B = (B_1,B_2,B_3)$ as 
		\begin{align}
			k \cdot A =: \left (
			\begin{matrix}
				0 & k \cdot B \\
				\bigl ( k \cdot B \bigr )^* & 0 \\
			\end{matrix}
			\right )
			\, .
			\label{periodic:eqn:definition_k_cdot_B}
		\end{align}
		The block structure implies that 
		\begin{align}
			\mathrm{rank} \, \bigl ( k \cdot A \bigr ) = \mathrm{rank} \, \bigl ( k \cdot B \bigr ) + \mathrm{rank} \, \bigl ( k \cdot B \bigr )^{\ast}
			= 2 \, \mathrm{rank} \, \bigl ( k \cdot B \bigr ) 
			\, .
			\label{periodic:eqn:rank_kA_rank_kB}
		\end{align}
		Then in order to conclude that $\mathrm{rank} \, \bigl ( k \cdot A \bigr ) = 4$, we only need to show that $\mathrm{rank} \, \bigl ( k \cdot B \bigr ) = 2$. Since the result is independent of $k$, we pick $k_0 = (1,0,0)$ and use the basis obtained after Gram-Schmidt orthonormalizing $\bigl \{ \Psi_{v_{(1)}} , \ldots , \Psi_{v_{(6)}} \bigr \}$. Then $a_{(1)} \propto v_{(1)}$ and $a_{(4)} \propto v_{(4)}$ are non-trivial scalar multiples of $v_{(1)}$ and $v_{(4)}$, and consequently, one obtains again from~\eqref{periodic:eqn:expval_matrix_k_cdot_bfA}
		\begin{align*}
			k_0 \cdot B 
			&= \left (
			\begin{matrix}
				0 & 0 \\
				0 & k_0 \cdot \tilde{B}
			\end{matrix}
			\right )
		\end{align*}
		where the $2 \times 2$ matrix 
		\begin{align*}
			k_0 \cdot \tilde{B} = \left (
			\begin{matrix}
				k_0  \cdot \bigl ( \overline{a^E_{(2)}} \times a^H_{(5)} \bigr ) & k_0 \cdot \bigl ( \overline{a^E_{(2)}} \times a^H_{(6)} \bigr )  \\
				k_0  \cdot \bigl ( \overline{a^E_{(3)}} \times a^H_{(5)} \bigr ) & k_0 \cdot \bigl ( \overline{a^E_{(3)}} \times a^H_{(6)} \bigr )  \\
			\end{matrix}
			\right )
		\end{align*}
		has full rank, because $k_0 = v_{(1)}^E = v_{(4)}^H \propto a^E_{(1)} , a^H_{(4)}$ implies 
		\begin{align*}
			\mathrm{det} \bigl ( k_0 \cdot \tilde{B} \bigr ) 
			&= \overline{\Bigl ( k_0 \cdot \bigl ( a^E_{(2)} \times a^E_{(3)} \bigr ) \Bigr )} \; \Bigl ( k_0 \cdot \bigl ( a^H_{(5)} \times a^H_{(6)} \bigr ) \Bigr )
			\\
			&\propto \; \overline{\mathrm{det} \Bigl ( a^E_{(1)} \; \big \vert \; a^E_{(2)} \; \big \vert \; a^E_{(3)} \Bigr )} \; \mathrm{det} \Bigl ( a^H_{(4)} \; \big \vert \; a^H_{(5)} \; \big \vert \; a^H_{(6)} \Bigr ) \neq 0 
			\, .
		\end{align*}
		Hence, piecing together $\mathrm{rank} \, \bigl ( k_0 \cdot B \bigr ) = 2$ with equations~\eqref{periodic:eqn:rotation_invariance_rank_perturbation_matrix} and \eqref{periodic:eqn:rank_kA_rank_kB} yields that the degeneracy of the ground state bands is $4$. 
	\end{enumerate}
\end{proof}
%
% subsection Properties of the fiber operators (end)

\subsection{Comparison to existing literature} % (fold)
\label{periodic:comparison_literature}
Even though most of the results in this section are neither new nor surprising, we still feel they fill a void in the literature: To the best of our knowledge, it is the first time the most important fundamental properties of the fiber Maxwell operator $\Mper(k)$ are all proven rigorously in one place. Many of these are scattered throughout the literature, \eg various authors have proven the discrete nature of the spectrum of $\Mper(k)$ \cite{Figotin_Klein:localization_classical_waves_II:1997,Morame:spectrum_purely_ac_periodic_Maxwell:2000,Sjoeberg_Engstroem_Kristensson_Wall_Wellander:Bloch-Floquet_Maxwell_homogenization:2005} or have shown the non-analyticity of $k \mapsto \Pper(k)$ at $k = 0$ \cite{Figotin_Kuchment:band_gap_structure_periodic_dielectric_acoustic_media:1996}. Certainly there is no dearth of literature on the subject (see also \cite{Kuchment:math_photonic_crystals:2001,Joannopoulos_Johnson_Winn_Meade:photonic_crystals:2008} and references therein). However, most of these results are piecemeal: Some of them are contained in publications which do not really focus on the periodic Maxwell operator, but random Maxwell operators (\cite{Figotin_Klein:localization_classical_waves_I:1996,Figotin_Klein:localization_classical_waves_II:1997}, for instance). Other publications do not study $\Mper$ but rather operators associated to $\Mper^2$: since $\Mper^2$ is block-diagonal, it suffices to study a second-order equation for either $\mathbf{E}$ or $\mathbf{B}$, see \eg \cite{Figotin_Kuchment:band_gap_structure_periodic_dielectric_acoustic_media:1996,Figotin_Klein:localization_classical_waves_II:1997}. In the two-dimensional case, this leads to a \emph{scalar} equation where the right-hand side is a second-order operator \cite{Figotin_Kuchment:band_gap_structure_periodic_dielectric_acoustic_media:1996}. 

Nevertheless, one result is new, namely Theorem~\ref{intro:thm:band_picture}~(iii): even though the presence of ground state bands is heuristically well-understood, we provide rather simple and straight-forward proof. The $k \to 0$ limit is related in spirit to the \emph{homogenization limit} where the wavelength of the electromagnetic wave is large compared to the lattice spacing (see \eg \cite{Suslina:homogenization_Maxwell:2004,Suslina:homogenization_Maxwell:2005,Sjoeberg_Engstroem_Kristensson_Wall_Wellander:Bloch-Floquet_Maxwell_homogenization:2005,Birman_Suslina:homogenization_Maxwell:2007,Allaire_Palombaro_Rauch:diffractive_Bloch_wave_packets_Maxwell:2012} and references therein). On the one hand, many homogenization techniques yield much farther-reaching results, most notably effective equations for the dynamics (\eg \cite[Theorem~2.1]{Birman_Suslina:homogenization_Maxwell:2007}) while Theorem~\ref{intro:thm:band_picture}~(iii) only makes a statement about the behavior of the ground state frequency bands. On the other hand, compared to, say, \cite[Theorem~2.1]{Birman_Suslina:homogenization_Maxwell:2007} or \cite[Theorem~6.2]{Sjoeberg_Engstroem_Kristensson_Wall_Wellander:Bloch-Floquet_Maxwell_homogenization:2005}, computing the dispersion of the ground state bands for small $k$ seems much easier in our approach: given $\eps$ and $\mu$, the problem reduces to orthonormalizing $2 \times 3$ vectors numerically and solving an eigenvalue problem for an explicitly given $3 \times 3$ matrix $\sabs{k \cdot B}$ defined through \eqref{periodic:eqn:definition_k_cdot_B} with one known eigenvalue (namely $0$). Moreover, a proof of the fact that there are $4$ ground state bands also appears to be new, \eg in a recent publication this was stated as \cite[Conjecture~1]{Sjoeberg_Engstroem_Kristensson_Wall_Wellander:Bloch-Floquet_Maxwell_homogenization:2005}. Proving this fact, however, required a better insight into the nature of the singularity of $k \mapsto \Pper(k)$ at $k = 0$ and necessitated the introduction of a regularized projection $\Preg$. 
% subsection Comparison to the literature (end)
% section Approximation of the periodic Maxwell operator (end)

\section{$\Mphys_{\lambda}^{\Zak}$ and $M_{\lambda}^{\Zak}$ as $\Psi$DOs} % (fold)
\label{pseudo_Maxwell}
After expounding the properties of the periodic Maxwell operator, we proceed to the proof of Theorem~\ref{intro:thm:perturbed_Maxwell_psuedo}. The essential ingredient is a suitable interpretation of the usual Weyl quantization rule 
\begin{align}
	\Op_{\lambda}(f) := \frac{1}{(2\pi)^3} \int_{\R^3} \dd r' \int_{\R^3} \dd k' \; (\Fourier_{\sigma} f)(r',k') \; \e^{- \ii (k' \cdot (\ii \lambda \nabla_k) - r' \cdot \hat{k})} 
	\label{pseudo_Maxwell:eqn:Op}
\end{align}
where 
\begin{align*}
	(\Fourier_{\sigma} f)(r',k') := \frac{1}{(2\pi)^3} \int_{\R^3} \dd r' \int_{\R^3} \dd k' \; \e^{+ \ii (k' \cdot r - r' \cdot k)} \, f(r,k) 
\end{align*}
is the symplectic Fourier transform. The idea is to combine the point of view from \cite[Appendix~B]{Teufel:adiabatic_perturbation_theory:2003} and \cite[Section~2.2]{DeNittis_Lein:Bloch_electron:2009} with the fact that most results of standard pseudodifferential theory depend only on the \emph{Banach} structure of the spaces involved and not on the Hilbert structure. 

First of all, equation~\eqref{pseudo_Maxwell:eqn:Op} defines a $\Psi$DO for a large class of scalar \cite{Folland:harmonic_analysis_hase_space:1989,Hoermander:Weyl_calculus:1979,Kumanogo:pseudodiff:1981,Taylor:PsiDO:1981} and vector-valued functions \cite{Luke:operator-valued_PsiDOs:1972,Levendorskii:asymptotics_eigenvalue_distribution_DOs:1990}. For instance, if $f$ is a Hörmander symbol or order $m \in \R$ and type $\rho \in [0,1]$ taking values in the Banach space $(\mathcal{B},\norm{\cdot}_{\mathcal{B}})$, 
\begin{align}
	f \in \Hoer{m}(\mathcal{B}) :& \negmedspace= \left \{ 
	f \in \Cont^{\infty}(\R^6,\mathcal{B})
	\; \; \big \vert \; \; 
	\forall \alpha , \beta \in \N_0^3 : \snorm{f}_{m , \alpha \beta} < \infty
	\right \} 
	, 
\end{align}
where the seminorms $\bigl \{ \norm{\cdot}_{m , \alpha \beta} \bigr \}_{\alpha , \beta \in \N_0^3}$ are defined by 
\begin{align*}
	\snorm{f}_{m , \alpha \beta} := \sup_{(r,k) \in \R^6} \Bigl ( \sqrt{1 + k^2}^{\; -m + \sabs{\beta} \rho} \, \bnorm{\partial_r^{\alpha} \partial_k^{\beta} f(r,k)}_{\mathcal{B}} \Bigr )
	\, , 
\end{align*}
then \eqref{pseudo_Maxwell:eqn:Op} is defined as an oscillatory integral \cite{Hoermander:Fourier_integral_operators_1:1971}. The vector-valuedness of $f$ usually does not create any technical difficulties, most standard results readily extend to vector-valued symbols, \eg Caldéron-Vaillancourt-type theorems and the composition of Hörmander-type symbols (see \eg \cite{Luke:operator-valued_PsiDOs:1972,Gerard_Martinez_Sjoestrand:Bloch_electron:1991,Martinez_Sordoni:PsiDO_application_dynamics_molecules:2009} and \cite[Appendix~A]{Teufel:adiabatic_perturbation_theory:2003}). 

In our applications $\mathcal{B} = \mathcal{B}(\mathfrak{h}_1,\mathfrak{h}_2)$ will always be some Banach space of bounded operators between the Hilbert spaces $\mathfrak{h}_1$ and $\mathfrak{h}_2$ whose elements are $L^2$-functions on the torus, \eg $L^2(\T^3,\C^N)$, $\HperT$ or $\domainT$. As explained in \cite[Section~2.2.1]{DeNittis_Lein:Bloch_electron:2009}, when compared to the pseudodifferential calculus associated to $(- \ii \lambda \nabla_x , \hat{x})$, equation~\eqref{pseudo_Maxwell:eqn:Op} can be seen as an equivalent representation of the same underlying Moyal algebra \cite{Gracia_Bondia_Varilly:distributions_phasespace_1:1988,Gracia_Bondia_Varilly:distributions_phasespace_2:1988}. Hence, the usual formulas and results apply, and we may use standard Hörmander classes instead of the less common weighted Hörmander classes as in \cite{PST:effective_dynamics_Bloch:2003}.

\subsection{Equivariant $\Psi$DOs} % (fold)
\label{pseudo_Maxwell:equivariant}
The relevant Hilbert spaces, $\Zak \Hil_{\lambda}$ and $\Zak \Hper$, coincide with $L^2_{\mathrm{eq}} \bigl ( \R^3 , L^2(\T^3,\C^6) \bigr )$ as Banach spaces, and we are in the same framework as in \cite[Appendix~B]{Teufel:adiabatic_perturbation_theory:2003} and \cite[Section~2.2.2]{DeNittis_Lein:Bloch_electron:2009}. The building block operators are macroscopic position $\ii \lambda \nabla_k$ and crystal momentum $\hat{k}$ whose domains are dense in $L^2_{\mathrm{eq}}(\R^3,\HperT)$ (\cf Section~\ref{periodic:Zak}). 

Operators which fiber-decompose in Zak representation have the equivariance property~\eqref{periodic:eqn:equivariance_condition}, and thus $\Mper^{\Zak} : L^2_{\mathrm{eq}}(\R^3,\domainT) \longrightarrow L^2_{\mathrm{eq}}(\R^3,\HperT)$ defines a selfadjoint operator between Hilbert spaces of equivariant functions, for instance. This motivates the following 
\begin{defn}[Semiclassical symbols]\label{pseudo_Maxwell:defn:equivariant_symbols}
	Assume $\mathfrak{h}_j$, $j = 1 , 2$, are Hilbert spaces consisting of functions on $\T^3$. A map $f : [0,\lambda_0) \longrightarrow \Hoereq{m} \bigl ( \mathcal{B}(\mathfrak{h}_1,\mathfrak{h}_2) \bigr )$, $\lambda \mapsto f_{\lambda}$, is called a semiclassical equivariant symbol of order $m \in \R$ and weight $\rho \in [0,1]$, that is $f \in \SemiHoereq{m} \bigl ( \mathcal{B}(\mathfrak{h}_1,\mathfrak{h}_2) \bigr )$, if and only if 
	\begin{enumerate}[(i)]
		\item $f_{\lambda}(r,k - \gamma^*) = \e^{- \ii \gamma^* \cdot \hat{y}} \, f_{\lambda}(r,k) \, \e^{+ \ii \gamma^* \cdot \hat{y}}$ holds $\forall$ $(r,k) \in \R^6$, $\gamma^* \in \Gamma^*$ and 
		\item there exists a sequence $\{ f_n \}_{n \in \N_0}$, $f_n \in \Hoer{m - n \rho}$, such that for all $N \in \N_0$ 
		\begin{align*}
			\lambda^{-N} \left ( f_{\lambda} - \sum_{n = 0}^{N - 1} \lambda^n \, f_n \right ) \in \Hoer{m - N \rho} \bigl ( \mathcal{B}(\mathfrak{h}_1,\mathfrak{h}_2) \bigr )
		\end{align*}
		holds true uniformly in $\lambda$ in the sense that for any $N \in \N_0$ and $\alpha , \beta \in \N_0^3$, there exist constants $C_{N \alpha \beta} > 0$ so that the estimate
		\begin{align*}
			\norm{f_{\lambda} - \sum_{n = 0}^{N - 1} \lambda^n \, f_n}_{m , \alpha \beta} \leq C_{N \alpha \beta} \, \lambda^N 
		\end{align*}
		is satisfied for all $\lambda \in [0,\lambda_0)$. 
	\end{enumerate}
\end{defn}
Since $\Hoer{m} \bigl ( \mathcal{B}(\mathfrak{h}_1,\mathfrak{h}_2) \bigr )$ and $\Hoereq{m} \bigl ( \mathcal{B}(\mathfrak{h}_1,\mathfrak{h}_2) \bigr )$  are contained in the Moyal algebra \cite[Section~III]{Gracia_Bondia_Varilly:distributions_phasespace_1:1988}, the associated $\Psi$DOs extend from continuous maps between vector-valued Schwartz functions to continuous maps between vector-valued tempered distributions, 
\begin{align*}
	\Op_{\lambda} \left ( \Hoer{m} \bigl ( \mathcal{B}(\mathfrak{h}_1,\mathfrak{h}_2) \bigr ) \right ) 
	&\subset \Op_{\lambda} \left ( \Hoereq{m} \bigl ( \mathcal{B}(\mathfrak{h}_1,\mathfrak{h}_2) \bigr ) \right ) 
	\\
	&
	\subset \mathcal{L} \left ( \Schwartz(\R^3,\mathfrak{h}_1) , \Schwartz(\R^3,\mathfrak{h}_2) \right ) \cap \mathcal{L} \left ( \Schwartz'(\R^3,\mathfrak{h}_1) , \Schwartz'(\R^3,\mathfrak{h}_2) \right ) 
	. 
\end{align*}
Furthermore, one can easily check that equivariant $\Psi$DOs also preserve equivariance on the level of tempered distributions: let us define translations and multiplication with the phase $\e^{+ \ii \gamma^* \cdot \hat{y}}$ on  $\Schwartz'(\R^3,\mathfrak{h}_j)$, $j = 1 , 2$, by duality, \ie we set 
\begin{align*}
	\bigl ( L_{\gamma^*} F , \varphi \bigr )_{\Schwartz} := \bigl ( T , \varphi(\cdot + \gamma^*) \bigr )_{\Schwartz} 
	\, , 
	% \\
	\qquad 
	\bigl ( \e^{- \ii \gamma^* \cdot \hat{y}} F , \varphi \bigr )_{\Schwartz} := \bigl ( T , \e^{+ \ii \gamma^* \cdot \hat{y}} \varphi \bigr )_{\Schwartz} 
	\, , 
\end{align*}
for all $\gamma^* \in \Gamma^* \subset \R^3$. The set of equivariant tempered distributions  $\Schwartz'_{\mathrm{eq}}(\R^3,\mathfrak{h}_j)$, $j = 1 , 2$, is comprised of those tempered distributions which satisfy 
\begin{align*}
	L_{\gamma^*} F = \e^{- \ii \gamma^* \cdot \hat{y}} F 
	\, . 
\end{align*}
Then \cite[Proposition~B.3]{Teufel:adiabatic_perturbation_theory:2003} states that 
\begin{align*}
	\Op_{\lambda}(f) : \Schwartz'_{\mathrm{eq}}(\R^3,\mathfrak{h}_1) \longrightarrow \Schwartz'_{\mathrm{eq}}(\R^3,\mathfrak{h}_2)
\end{align*}
holds for all $f \in \Hoereq{m} \bigl ( \mathcal{B}(\mathfrak{h}_1,\mathfrak{h}_2) \bigr )$. Consequently, the inclusion $L^2_{\mathrm{eq}}(\R^3,\mathfrak{h}_j) \subset \Schwartz'_{\mathrm{eq}}(\R^3,\mathfrak{h}_j)$ and the standard Caldéron-Vaillancourt theorem imply \cite[Proposition~B.5]{Teufel:adiabatic_perturbation_theory:2003}
\begin{align*}
	\Op_{\lambda} \left ( \Hoereq{0} \bigl ( \mathcal{B}(\mathfrak{h}_1,\mathfrak{h}_2) \bigr ) \right ) \subset \mathcal{B} \left ( L^2_{\mathrm{eq}}(\R^3 , \mathfrak{h}_1) \, , \, L^2_{\mathrm{eq}} (\R^3 , \mathfrak{h}_2) \right ) 
	. 
\end{align*}
Similarly, the Moyal product $\sharp$ which is implicitly defined through 
\begin{align*}
	\Op_{\lambda} (f \sharp g) := \Op_{\lambda}(f) \, \Op_{\lambda}(g) 
\end{align*}
extends as a bilinear, continuous map which respects equivariance, 
\begin{align}
	\sharp : \Hoereq{m_1} \bigl ( \mathcal{B}(\mathfrak{h}_1,\mathfrak{h}_2) \bigr ) \times \Hoereq{m_2} \bigl ( \mathcal{B}(\mathfrak{h}_2,\mathfrak{h}_3) \bigr ) \longrightarrow \Hoereq{m_1 + m_2} \bigl ( \mathcal{B}(\mathfrak{h}_1,\mathfrak{h}_3) \bigr ) 
	. 
	\label{pseudo_Maxwell:eqn:Weyl_composition_symbols}
\end{align}
%
% subsection Operator-valued equivariant $\Psi$DOs (end)

\subsection{Extension to weighted $L^2$-spaces} % (fold)
\label{psuedo_Maxwell:weighted_L2}
We have seen that certain equivariant operator-valued functions define bounded $\Psi$DOs mapping between Hilbert spaces of equivariant $L^2$-functions. The fact $\mathcal{B}(\mathfrak{h}_1,\mathfrak{h}_2)$ only depends on the Banach space structure of $\mathfrak{h}_1$ and $\mathfrak{h}_2$ immediately implies 
\begin{align*}
	\mathcal{B} \bigl ( \Zak \domain \, , \, \Zak \Hil_{\lambda} \bigr ) = \mathcal{B} \Bigl ( L^2_{\mathrm{eq}}(\R^3,\domainT) \, , \, L^2_{\mathrm{eq}} \bigl ( \R^3 , L^2(\T^3,\C^6) \bigr ) \Bigr )
	\, , 
\end{align*}
for instance, and hence any $f \in \Hoereq{0} \bigl ( \mathcal{B} \bigl ( L^2(\T^3,\C^6) \bigr ) \bigr )$ uniquely defines a $\Psi$DO 
\begin{align}
	\Op_{\lambda}(f) : \Zak \Hil_{\lambda} \longrightarrow \Zak \Hil_{\lambda} 
	. 
	\label{pseudo_Maxwell:eqn:Op_lambda-dependent_spaces}
\end{align}
One only needs to be careful about taking adjoints: the adjoint operator crucially depends on the scalar product (see \eg the discussion of selfadjointness of $\Mphys_w$ in Section~\ref{Maxwell:generic}), but in our applications, properties such as selfadjointness are checked “by hand”. 
% subsection Extension to weighted $L^2$-spaces (end)

\subsection{Proof of Theorem~\ref{intro:thm:perturbed_Maxwell_psuedo}} % (fold)
\label{pseudo_Maxwell:proof}
Assumption~\ref{periodic:assumption:periodic_eps_mu} on the material weights $\eps$ and $\mu$ as well as Assumption~\ref{Maxwell:assumption:modulation_functions} placed on the modulation functions imply $\Hil_{\lambda}$ and $\Hper$ coincide with $L^2(\R^3,\C^6)$ as Banach spaces. Similarly, we have $\HperT = L^2(\T^3,\C^6)$ on the level of Banach spaces. This means, $\Zak \Hil_{\lambda}$ and $\Zak \Hper$ agree with $L^2_{\mathrm{eq}} \bigl ( \R^3 , L^2(\T^3,\C^6) \bigr )$ as normed vector spaces. 

Seeing as we can write $\Mphys_{\lambda}^{\Zak} = S(\ii \lambda \nabla_k)^{-2} \, \Mper^{\Zak}$, Theorem~\ref{intro:thm:perturbed_Maxwell_psuedo} follows from the following 
\begin{lem}
	Under the assumptions of Theorem~\ref{intro:thm:perturbed_Maxwell_psuedo}, the following two operators are semiclassical pseudodifferential operators: 
	\begin{enumerate}[(i)]
		\item $S(\ii \lambda \nabla_k)^{\pm 1} = \Op_{\lambda} \bigl ( S^{\pm 1} \bigr )$ where $S , S^{-1} \in \Hoermeq{0}{1} \left ( \mathcal{B} \bigl ( L^2(\T^3,\C^6) \bigr ) \right )$ 
		\item $\Mper^{\Zak} = \Op_{\lambda} \bigl ( \Mper(\, \cdot \,) \bigr )$ where $\Mper(\, \cdot \,) \in \Hoermeq{1}{1} \left ( \mathcal{B} \bigl ( \domainT , L^2(\T^3,\C^6) \bigr ) \right )$ 
	\end{enumerate}
\end{lem}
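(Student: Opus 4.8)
The plan is to verify, for each of the two operators, the three requirements of Definition~\ref{pseudo_Maxwell:defn:equivariant_symbols} --- equivariance, membership in the stated Hörmander class, and the semiclassical expansion --- and then to identify the quantization with the given operator. In both cases the symbol happens to be independent of $\lambda$, so the semiclassical expansion is trivial: one takes $f_0$ equal to the symbol and $f_n = 0$ for $n \geq 1$, and condition (ii) of the definition holds with all constants equal to zero. For part (i), the position formula~\eqref{periodic:eqn:Zak_position} shows that in Zak representation the scaling operator is $S(\ii \lambda \nabla_k)$; since $S^{\pm 1}$ viewed as a symbol depends only on the macroscopic variable $r$ (and not on $k$), its Weyl quantization is a function of $\ii \lambda \nabla_k$ alone, which gives $\Op_{\lambda}(S^{\pm 1}) = S(\ii \lambda \nabla_k)^{\pm 1}$. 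The equivariance condition~(i) of Definition~\ref{pseudo_Maxwell:defn:equivariant_symbols} is immediate because $S^{\pm 1}(r)$ acts as block-scalar multiplication in the fibre and therefore commutes with the phase $\e^{\pm \ii \gamma^* \cdot \hat{y}}$. Assumption~\ref{Maxwell:assumption:modulation_functions} guarantees that $\tau_{\eps}, \tau_{\mu} \in \Cont^{\infty}_{\mathrm{b}}(\R^3)$ are bounded away from $0$ and $+\infty$, so $S^{\pm 1}(r)$ and all its $r$-derivatives are uniformly bounded operators on $L^2(\T^3,\C^6)$; being constant in $k$, all $k$-derivatives vanish, which places $S^{\pm 1}$ in $\Hoermeq{0}{1} \bigl ( \mathcal{B}(L^2(\T^3,\C^6)) \bigr )$.

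Part (ii) runs in parallel, now with a symbol depending only on $k$. The identification $\Op_{\lambda} \bigl ( \Mper(\, \cdot \,) \bigr ) = \Mper^{\Zak} = \int_{\BZ}^{\oplus} \dd k \, \Mper(k)$ holds because the Weyl quantization of a $k$-only symbol is the corresponding function of $\hat{k}$, i.e. the fibre-decomposed operator. The equivariance condition matches the relation~\eqref{periodic:eqn:equivariance_condition} satisfied by every operator that fibre-decomposes in Zak representation. The structural input is Proposition~\ref{periodic:prop:analyticity_Mper}: the domain $\domainT$ is $k$-independent and $k \mapsto \Mper(k)$ is a linear, hence entire, analytic map into $\mathcal{B}(\domainT,\HperT)$. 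Writing $\Mper(k) = \Mper(0) + k \cdot \mathbf{A}$ as in~\eqref{periodic:eqn:definition_k_cdot_bfA}, with $\Mper(0) \in \mathcal{B}(\domainT,\HperT)$ and each $\mathbf{A}_j \in \mathcal{B}(\HperT)$, one sees that no $r$-derivatives occur and that only the $k$-derivatives of order $\abs{\beta} \leq 1$ are non-zero. Hence $\norm{\Mper(k)}_{\mathcal{B}(\domainT,\HperT)} \lesssim \sqrt{1 + k^2}$ while $\partial_{k_j} \Mper(k) = \mathbf{A}_j$ is a bounded constant and $\partial_k^{\beta} \Mper(k) = 0$ for $\abs{\beta} \geq 2$, which are exactly the seminorm bounds of order $1$ and type $\rho = 1$ for the class $\Hoermeq{1}{1} \bigl ( \mathcal{B}(\domainT,L^2(\T^3,\C^6)) \bigr )$.

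The one step I would treat most carefully --- the only place with any content --- is the order-one estimate in part (ii): one must regard $\Mper(k)$ as a bounded operator between the fixed Banach spaces $\domainT$ and $L^2(\T^3,\C^6)$, which is legitimate only because $\domainT$ is $k$-independent by Proposition~\ref{periodic:prop:analyticity_Mper}, and then check that the linear-in-$k$ growth is correctly absorbed by the weight $\sqrt{1 + k^2}^{\, -1 + \abs{\beta}}$ appearing in the seminorms. With the Lemma established, Theorem~\ref{intro:thm:perturbed_Maxwell_psuedo} follows by writing $\Mphys_{\lambda}^{\Zak} = S(\ii \lambda \nabla_k)^{-2} \, \Mper^{\Zak} = \Op_{\lambda}(S^{-2}) \, \Op_{\lambda} \bigl ( \Mper(\, \cdot \,) \bigr )$, where $S^{-2}$ is a symbol by the same reasoning as in part (i), and invoking the Moyal product~\eqref{pseudo_Maxwell:eqn:Weyl_composition_symbols}, whose leading terms reproduce the symbol $\pmb{\mathcal{M}}_{\lambda}$ of~\eqref{intro:eqn:symbol_perturbed_Maxwell}.
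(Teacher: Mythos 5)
Your proposal is correct and follows essentially the same route as the paper: equivariance from the commutation with $\e^{\pm \ii \gamma^* \cdot \hat{y}}$ (respectively from \eqref{periodic:eqn:equivariance_condition}), Hörmander class membership from Assumption~\ref{Maxwell:assumption:modulation_functions} for $S^{\pm 1}$ and from the linearity of $k \mapsto \Mper(k)$ on the $k$-independent domain $\domainT$ (Proposition~\ref{periodic:prop:analyticity_Mper}) for part (ii). You merely spell out the seminorm estimates for the order-$1$, type-$\rho = 1$ class via the decomposition $\Mper(k) = \Mper(0) + k \cdot \mathbf{A}$, which the paper leaves implicit; this is a welcome clarification but not a different argument.
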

\begin{proof}
	\begin{enumerate}[(i)]
		\item The matrix $S(r)$ is block-diagonal with respect to $L^2(\T^3,\C^6) \cong L^2(\T^3,\C^3) \directsum L^2(\T^3,\C^3)$ and each block is proportional to the identity in $L^2(\T^3,\C^3)$. Due to the assumption on the modulation functions, we conclude 
		\begin{align*}
			S \in \Cont^{\infty}_{\mathrm{b}} \left ( \R^3 , \mathcal{B} \bigl ( L^2(\T^3,\C^6) \bigr ) \right ) \subset \Hoerm{0}{1} \left ( \mathcal{B} \bigl ( L^2(\T^3,\C^6) \bigr ) \right ) 
			. 
		\end{align*}
		Equivariance is trivial, because $S(\ii \lambda \nabla_k)$ commutes with $\e^{- \ii \gamma^* \cdot \hat{y}}$ and hence 
		\begin{align*}
			S(r) = \e^{+ \ii \gamma^* \cdot \hat{y}} \, S(r) \, \e^{- \ii \gamma^* \cdot \hat{y}}
		\end{align*}
		holds. Lastly, $S^{-1}$ has the same properties as $S$ since $\tau_{\eps}^{-1}$ and $\tau_{\mu}^{-1}$ also satisfy Assumption~\ref{Maxwell:assumption:modulation_functions}. This concludes the proof of (i). 
		\item By Proposition~\ref{periodic:prop:analyticity_Mper}, the map $k \mapsto \Mper(k)$ is linear (the domain is independent of $k$), and thus $\Hoerm{1}{1} \left ( \mathcal{B} \bigl ( \domainT , L^2(\T^3,\C^6) \bigr ) \right )$. Equivariance follows from equation~\eqref{periodic:eqn:equivariance_condition}, and thus we have shown (ii). 
	\end{enumerate}
\end{proof}
Seeing as $\Mper(\, \cdot \,)$ is linear, the asymptotic expansion of $\sharp$ terminates after two terms and the symbols of the Maxwell operators in the physical representation can be computed from 
\begin{align*}
	\Mphys^{\Zak} &= \Op_{\lambda} \bigl ( S^{-2} \sharp \Mper(\, \cdot \,) \bigr ) 
	=: \Op_{\lambda}(\pmb{\mathcal{M}}_{\lambda}) 
	. 
\end{align*}
That $\pmb{\mathcal{M}}_{\lambda}$ is an element of $\SemiHoermeq{1}{1} \left ( \mathcal{B} \bigl ( \domainT , L^2(\T^3,\C^6) \bigr ) \right )$ is implied by the composition properties of equivariant symbols \eqref{pseudo_Maxwell:eqn:Weyl_composition_symbols} and the preceding Lemma. This concludes the proof of Theorem~\ref{intro:thm:perturbed_Maxwell_psuedo}. 

Consequently, also the Maxwell operator in the auxiliary representation is a semiclassical $\Psi$DO, 
\begin{align*}
	M_{\lambda}^{\Zak} &= \Op_{\lambda} \bigl ( S \sharp \pmb{\mathcal{M}}_{\lambda} \sharp S^{-1} \bigr ) 
	= \Op_{\lambda} \bigl ( S^{-1} \sharp \Mper(\, \cdot \,) \sharp S^{-1} \bigr ) 
	=: \Op_{\lambda}(\Msymb_{\lambda}) 
	\, , 
\end{align*}
whose semiclassical symbol $\Msymb_{\lambda}$ is in the same symbol class. 
\begin{cor}\label{intro:cor:perturbed_Maxwell_psuedo}
	Under the assumptions of Theorem~\ref{intro:thm:perturbed_Maxwell_psuedo}, the Maxwell operator $M_{\lambda}^{\Zak} = \Op_{\lambda}(\Msymb_{\lambda})$ in the rescaled representation is the semiclassical pseudodifferential operator associated to 
	\begin{align*}
		\Msymb_{\lambda}(r,k) &= \tau(r) \, \Mper(k)
		% + \\
		% &\; \quad 
		- \lambda \, \tau(r) \, W \, 
		\left (
		\begin{matrix}
			0 & \tfrac{\ii}{2} \, \bigl ( \nabla_r \ln \nicefrac{\tau_{\eps}}{\tau_{\mu}} \bigr )^{\times}(r) \\
			\tfrac{\ii}{2} \, \bigl ( \nabla_r \ln \nicefrac{\tau_{\eps}}{\tau_{\mu}} \bigr )^{\times}(r) & 0 \\
		\end{matrix}
		\right )
	\end{align*}
	where $\tau(r) := \tau_{\eps}(r) \, \tau_{\mu}(r)$. 
	The function $\mathcal{M}_{\lambda} \in \SemiHoermeq{1}{1} \left ( \mathcal{B} \bigl ( \domainT , L^2(\T^3,\C^6) \bigr ) \right )$ is an equivariant semiclassical symbol in the sense of Definition~\ref{pseudo_Maxwell:defn:equivariant_symbols}.
\end{cor}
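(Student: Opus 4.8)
The plan is to read off $\Msymb_\lambda$ from the identity $M_\lambda^{\Zak} = \Op_\lambda\bigl(S^{-1} \sharp \Mper(\,\cdot\,) \sharp S^{-1}\bigr)$ recorded just before the statement, and then to evaluate the two Moyal products explicitly. First I would justify that identity: conjugating $\Mphys_\lambda^{\Zak} = \Op_\lambda(\pmb{\mathcal{M}}_\lambda)$ (Theorem~\ref{intro:thm:perturbed_Maxwell_psuedo}) by $S(\ii\lambda\nabla_k)^{\pm1} = \Op_\lambda(S^{\pm1})$ (preceding Lemma, part (i)) and using $\Op_\lambda(f)\,\Op_\lambda(g) = \Op_\lambda(f\sharp g)$ gives $M_\lambda^{\Zak} = \Op_\lambda\bigl(S \sharp \pmb{\mathcal{M}}_\lambda \sharp S^{-1}\bigr)$. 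Since $\pmb{\mathcal{M}}_\lambda = S^{-2} \sharp \Mper(\,\cdot\,)$ and the Moyal product of two $k$-independent symbols reduces to their pointwise product (so that $S \sharp S^{-2} = S \cdot S^{-2} = S^{-1}$, the diagonal blocks commuting), this collapses to $S^{-1} \sharp \Mper(\,\cdot\,) \sharp S^{-1}$.

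For the membership $\Msymb_\lambda \in \SemiHoermeq{1}{1}\bigl(\mathcal{B}(\domainT, L^2(\T^3,\C^6))\bigr)$ I would invoke the composition calculus~\eqref{pseudo_Maxwell:eqn:Weyl_composition_symbols} applied to the three factors $S^{-1} \in \Hoermeq{0}{1}$, $\Mper(\,\cdot\,) \in \Hoermeq{1}{1}$, and $S^{-1} \in \Hoermeq{0}{1}$, so that the orders add to $m = 1$. The one point worth a word is the bookkeeping of Banach spaces: for fixed $r$ the matrix $S^{-1}(r)$ is $y$-independent, hence a bounded map of $\domainT$ into itself and of $L^2(\T^3,\C^6)$ into itself, so the threefold product lands in $\mathcal{B}(\domainT, L^2(\T^3,\C^6))$ and inherits equivariance from the factors.

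The heart of the proof is the explicit evaluation, which is short because the expansion \emph{terminates}. As $k \mapsto \Mper(k)$ is affine-linear and $S^{\pm1}$ depend on $r$ only, every Moyal term carrying two or more $k$-derivatives of $\Mper(\,\cdot\,)$, or any $k$-derivative of $S^{-1}$, vanishes; hence $\Msymb_\lambda$ is \emph{exactly} a polynomial of degree one in $\lambda$, which already gives the terminating asymptotic expansion required by Definition~\ref{pseudo_Maxwell:defn:equivariant_symbols}. At order $\lambda^0$ the product is the pointwise product $S^{-1}(r)\,\Mper(k)\,S^{-1}(r)$; writing $S^{-1} = \mathrm{diag}(\tau_{\eps},\tau_{\mu})$ and noting that $\Mper(k)$ is block-off-diagonal, the two scalar blocks combine to $\tau_{\eps}\tau_{\mu} = \tau(r)$, yielding the leading term $\tau(r)\,\Mper(k)$. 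At order $\lambda^1$ exactly two contributions survive, one from each of the two Moyal products, and in components they combine into $\tfrac{\ii}{2}\sum_j \bigl[ (\partial_{r_j} S^{-1})(\partial_{k_j}\Mper)\,S^{-1} - S^{-1}(\partial_{k_j}\Mper)(\partial_{r_j} S^{-1}) \bigr]$.

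Finally I would evaluate this block matrix using $\partial_{k_j}\Mper = W \bigl( \begin{smallmatrix} 0 & -(e_j)^{\times} \\ +(e_j)^{\times} & 0 \end{smallmatrix} \bigr)$ together with the diagonal form of $S^{-1}$ and $\partial_{r_j} S^{-1}$. Each block then acquires the antisymmetric combination $(\partial_{r_j}\tau_{\eps})\,\tau_{\mu} - \tau_{\eps}\,(\partial_{r_j}\tau_{\mu}) = \tau(r)\,\partial_{r_j}\ln\nicefrac{\tau_{\eps}}{\tau_{\mu}}$ (and its negative in the opposite block); summing over $j$ turns $\sum_j (e_j)^{\times}\,\partial_{r_j}\ln\nicefrac{\tau_{\eps}}{\tau_{\mu}}$ into $\bigl(\nabla_r \ln\nicefrac{\tau_{\eps}}{\tau_{\mu}}\bigr)^{\times}$, and factoring out $\tau(r)\,W$ reproduces precisely the claimed off-diagonal $\order(\lambda)$ term. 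I expect the only genuinely fiddly step to be this last recombination, keeping the signs and the order of the scalar factors straight so that the logarithmic derivative of $\nicefrac{\tau_{\eps}}{\tau_{\mu}}$ emerges; the termination of the expansion and the symbol-class statement are then immediate.
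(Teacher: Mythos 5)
Your proposal is correct and follows exactly the route the paper takes: read off $\Msymb_{\lambda}$ from $M_{\lambda}^{\Zak} = \Op_{\lambda}\bigl(S^{-1} \sharp \Mper(\,\cdot\,) \sharp S^{-1}\bigr)$, use that the Moyal expansion terminates because $k \mapsto \Mper(k)$ is affine and $S^{\pm 1}$ are $k$-independent, and get the symbol class from the composition calculus \eqref{pseudo_Maxwell:eqn:Weyl_composition_symbols}. Your explicit first-order term reproduces the stated off-diagonal blocks with the correct signs (it is consistent with the convention the paper implicitly uses to obtain \eqref{intro:eqn:symbol_perturbed_Maxwell}), so nothing is missing.
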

% subsection Proof of Theorem~\ref{intro:thm:perturbed_Maxwell_psuedo} (end)
% section The slowly modulated periodic Maxwell operator as $\Psi$DO (end)

\renewcommand{\curl}{\mathbf{curl}}
\begin{appendix}
	\section{The $\curl$ operator and the $\Rot$ operator} % (fold)
	\label{appendix:curl}
	The aim of this Appendix is to clarify the meaning of the relation
	$\domain(\Rot) = \domain(\curl) \directsum \domain(\curl)$ used in Section~\ref{Maxwell:generic} in order to define the domain of the Maxwell operator. So to conclude our arguments from Section~\ref{Maxwell:generic}, we give a brief overview on the theory of the operators $\curl := \nabla_x^{\times}$ and $\Rot$. Many works have been devoted to the rigorous study of $\curl$ on $L^2(\Omega,\C^3)$ where $\Omega \subseteq \R^3$ can be a bounded \cite{Yoshida_Giga:spectra_rot:1990,Amrouche_Bernardi_Dauge_Girault:vector_potentials_3d_domains:1998,Hiptmair_Kotiuga_Tordeux:selfadjointness_curl:2012} or unbounded domain \cite{Picard:selfadjointness_curl:1998} whose boundary satisfies various regularity properties. A lot of related results are contained in standard texts on the Navier-Stokes equation \cite{Duvaut_Lions:inequalities_mechanics:1972,Foias_Temam:remarks_Navier_Stokes:1978,Girault_Raviart:finite_elements_Navier_Stokes:1986,Galdo:steady_state_Navier_Stokes:2011}. In this Appendix, we enumerate some elementary results for the special case $\Omega = \R^3$. The crucial result is the so-called \emph{Helm\-holtz-Hodge-Weyl-Leray decomposition} which leads to a decomposition of any $\psi \in L^2(\R^3,\C^3)$ into divergence and rotation-free component.

	\subsection{The gradient operator} % (fold)
	\label{appendix:curl:grad}
	The gradient operator is initially defined on the smooth functions with compact support by
	\begin{align}
		\grad : \Cont^{\infty}_{\mathrm{c}}(\R^3) \longrightarrow \Cont^{\infty}_{\mathrm{c}}(\R^3,\C^3) 
		, 
		\qquad 
		\grad \varphi := \left (
		\begin{matrix}
			\partial_{x_1} \varphi \\
			\partial_{x_2} \varphi \\
			\partial_{x_3} \varphi \\
		\end{matrix}
		\right ) 
		. 
		\label{eq:grad}
	\end{align}
	The operator $\grad$ is closable (any component $\partial_{x_j}$ is anti-symmetric) and its closure, still denoted with $\grad$, has domain $\domain(\grad) = H^1(\R^3)$ and trivial null space, $\ker \grad = \{ 0 \}$.
	% subsection The gradient operator $\grad$ (end)

	\subsection{The divergence operator} % (fold)
	\label{appendix:curl:div}
	The second operator of relevance, the divergence 
	\begin{align}
		\div : \Cont^{\infty}_{\mathrm{c}}(\R^3,\C^3) \longrightarrow \Cont^{\infty}_{\mathrm{c}}(\R^3)
		, \qquad
		\div \, \psi := \sum_{j=1}^3\partial_{x_j} \psi_j 
		,
		\label{appendix:curl:eqn:div}
	\end{align}
	is also closable and its closure, still denoted with $\div$, has domain \cite[Section~1.2 and Theorem~1.1]{Temam:theory_Navier_Stokes:2001}
	\begin{align*}
		\domain(\div) := \overline{\Cont^{\infty}_{\mathrm{c}}(\R^3,\C^3)}^{\norm{\cdot}_{\div}} 
		= \bigl \{ \psi \in L^2(\R^3,\C^3) \; \; \vert \; \; \div \, \psi \in L^2(\R^3) \bigr \} 
		. 
	\end{align*}
	A relevant result is the \emph{Stokes formula} \cite[Theorem~1.2]{Temam:theory_Navier_Stokes:2001}, \ie we have 
	\begin{align*}
		X_{\psi}(\varphi) := \bscpro{\psi}{\grad \varphi}_{L^2(\R^3,\C^3)} + \bscpro{\div \, \psi}{ \varphi}_{L^2(\R^3)} = 0
	\end{align*}
	for all $\psi \in \domain(\div)$ and $\varphi \in H^1(\R^3)$. This follows mainly from the Cauchy-Schwarz inequality $ \babs{X_\psi(\phi)} \leqslant 2 \, \snorm{\psi}_{\div} \, \snorm{\phi}_{\grad}$. The above relation shows that $\div$ is the \emph{adjoint} of $-\grad$ and vice versa (\cf \cite{Picard:selfadjointness_curl:1998}). In this sense $\domain(\div)$ can be seen as the space of vector fields with weak divergence.
	% subsection The divergence operator $\div$ (end)

	\subsection{The rotor operator} % (fold)
	\label{appendix:curl:curl}
	Lastly, the 
	\begin{align}
		\curl : \Cont^{\infty}_{\mathrm{c}}(\R^3,\C^3) \longrightarrow \Cont^{\infty}_{\mathrm{c}}(\R^3,\C^3) 
		, \quad \quad 
		\curl \, \psi := \left (
		\begin{matrix}
			\partial_{x_2} \psi_3 - \partial_{x_3} \psi_2 \\
			\partial_{x_3} \psi_1 - \partial_{x_1} \psi_3 \\
			\partial_{x_1} \psi_2 - \partial_{x_2} \psi_1 \\
		\end{matrix}
		\right )
	\end{align}
	is essentially selfadjoint, and thus, uniquely extends to a selfadjoint operator whose domain 
	\begin{align}
		\domain(\curl) := \overline{\Cont^{\infty}_{\mathrm{c}}(\R^3,\C^3)}^{\norm{\cdot}_{\curl}} 
		= \left \{ \psi \in L^2(\R^3,\C^3) \; \; \vert \; \; \curl \, \psi \in L^2(\R^3,\C^3) \right \} 
		\label{appendix:curl:eqn:domain_curl} 
	\end{align}
	is the closure of the core with respect to the graph norm. The characterization of $\domain(\curl)$ by the second equality in \eqref{appendix:curl:eqn:domain_curl} is proven in a slightly more general context in \cite[Chapter~7, Lemma~4.1]{Duvaut_Lions:inequalities_mechanics:1972} (\cf also \cite[Definition~2.2]{Amrouche_Bernardi_Dauge_Girault:vector_potentials_3d_domains:1998} and \cite{Urban:wavelet_bases:2001}). By showing that the deficiency indices of $\curl$ are both $0$, \ie $\curl \, \psi = \pm \ii \, \psi$ has no non-trivial solutions, one deduces $\curl$ is indeed selfadjoint (\cf \cite{Chandrasekhar_Kendall:force_free_magnetic_fields:1957,Picard:selfadjointness_curl:1998}). A very interesting fact relates the domains of $\curl$ and $\div$, and the space $H^1(\R^3,\C^3)$: Theorem~2.5 of \cite{Amrouche_Bernardi_Dauge_Girault:vector_potentials_3d_domains:1998} states 
	\begin{align}
		\domain(\curl) \cap \domain(\div) = H^1(\R^3,\C^3) 
		\label{appendix:curl:eqn:intersection_domain_curl_div_equal_H1}
	\end{align}
	which follows from the identity 
	\begin{align}
		\norm{\psi}_{H^1(\R^3,\C^3)}^2 = \norm{\psi}_{L^2(\R^3,\C^3)}^2 + \norm{\curl \, \psi}_{L^2(\R^3,\C^3)}^2 + \norm{\div \, \psi}_{L^2(\R^3)}^2 
		. 
		\label{appendix:curl:eqn:H1_norm_in_terms_of_curl_div}
	\end{align}
	This decomposition of the $H^1(\R^3,\C^3)$-norm follows from integration by parts and the identity 
	\begin{align*}
		\bigl ( \curl \bigr )^2 = \grad \; \div - \Delta_x 
	\end{align*}
	on $\Cont^{\infty}_{\mathrm{c}}(\R^3,\C^3)$, and a simple density argument. Note that \eqref{appendix:curl:eqn:intersection_domain_curl_div_equal_H1} implies $\Cont^{\infty}_{\mathrm{c}}(\R^3,\C^3)$ and $H^1(\R^3,\C^3)$ are cores for both, $\div$ and $\curl$. 
	% subsection The rotor operator $\curl$ (end)

	\subsection{The Helmholtz-Hodge-Weyl-Leray decomposition} % (fold)
	\label{appendix:curl:Leray_decomposition}
	For a more precise characterization of the domain $\domain(\curl)$ we need the \emph{Helmholtz-Hodge-Weyl-Leray decomposition} (see \cite[Chapter~I, Section~1.4]{Temam:theory_Navier_Stokes:2001},  \cite[Section~1.1]{Foias_Temam:remarks_Navier_Stokes:1978} and \cite[Section~III.1]{Galdo:steady_state_Navier_Stokes:2011}). Let us introduce the subspaces 
	\begin{align*}
		\mathbf{C}_{\sigma} :& \negmedspace= \left \{ \psi \in \Cont^{\infty}_{\mathrm{c}}(\R^3,\C^3) \; \; \vert \; \; \div \, \psi = 0 \right \}
		, 
		\qquad \quad
		\Jphys := \overline{\mathbf{C}_\sigma}^{\norm{\cdot}_{L^2(\R^3,\C^3)}} 
		. 
	\end{align*}
	\begin{thm}[Helmholtz-Hodge-Weyl-Leray decomposition]
	\label{appendix:curl:thm:Leray_decomposition}
		The space $ L^2(\R^3,\C^3)$ admits the following orthogonal decomposition
		\begin{align}
			L^2(\R^3,\C^3) = \Jphys \orthsum \Gphys 
			\label{appendix:curl:eqn:Leray_decomposition}
		\end{align}
		where $\Jphys \subset \domain(\div)$ is defined by
		\begin{align}
			\Jphys = \left \{ \psi \in L^2(\R^3,\C^3) \; \; \vert \; \; \div \, \psi = 0 \right \} 
			= \ker \div 
			\label{appendix:curl:eqn:definition_J}
		\end{align}
		and 
		\begin{align}
			\Gphys := \bigl \{ \psi \in L^2(\R^3,\C^3) \; \; \vert \; \; \psi = \grad \varphi , \; \varphi \in L^2_{\mathrm{loc}}(\R^3) \bigr \} 
			= \ran \grad 
			\, .
			\label{appendix:curl:eqn:definition_G}
		\end{align}
		Moreover, one has also the following characterization:
		\begin{align}
			\Jphys = \ker \div = \ran \curl 
			, \qquad \quad
			\Gphys = \ker \curl = \ran \grad
			\, .
			\label{appendix:curl:eqn:characterization_J_G_ker_ran}
		\end{align}
	\end{thm}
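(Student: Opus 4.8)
The plan is to reduce the entire statement to pointwise linear algebra in $\C^3$ by conjugating with the Fourier transform $\Fourier$, which is unitary on $L^2(\R^3,\C^3)$. Under $\Fourier$ the three first-order operators become multiplication operators acting on $\hat\psi = \Fourier\psi$, namely $\Fourier\,\grad\,\Fourier^{-1} = \ii\xi$, $\Fourier\,\div\,\Fourier^{-1} = \ii\xi\cdot$ and $\Fourier\,\curl\,\Fourier^{-1} = \ii\xi\times$. Since the singular set $\{\xi = 0\}$ has Lebesgue measure zero, every statement below may be read off pointwise in $\xi \neq 0$, where the relevant structure is the orthogonal splitting of $\C^3$.

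First I would fix $\xi \neq 0$ and split $\C^3 = \C\xi \oplus \xi^{\perp}$ orthogonally, with the two complementary orthogonal projections
\[
	P_L(\xi) v := \frac{\xi \cdot v}{\abs{\xi}^2}\,\xi , \qquad P_T(\xi) := \id - P_L(\xi) .
\]
These induce bounded multiplication operators on $L^2(\R^3,\C^3)$ which are orthogonal projections summing to $\id$. The condition $\ii\xi \cdot \hat\psi = 0$ a.e.\ is exactly $\hat\psi(\xi) \in \xi^{\perp}$ a.e., so $\Fourier\,\Jphys = \Fourier(\ker\div)$ is the transversal subspace $\{ P_T \hat\psi = \hat\psi \}$, while I identify the longitudinal subspace $\{ P_L \hat\psi = \hat\psi \}$ with $\Fourier\,\Gphys$. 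As transversal and longitudinal fields are pointwise orthogonal, they are orthogonal in $L^2$ and span the whole space, which gives the decomposition $L^2(\R^3,\C^3) = \Jphys \orthsum \Gphys$.

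For the sharper characterization \eqref{appendix:curl:eqn:characterization_J_G_ker_ran} I would record two elementary facts about the cross product at $\xi \neq 0$: one has $\ker(\xi \times \,\cdot\,) = \C\xi$ and $\ran(\xi \times \,\cdot\,) = \xi^{\perp}$. The kernel statement yields at once $\ker\curl = \{ \hat\psi \parallel \xi \} = \Gphys$. For the range, note $\ii\xi \times \hat\psi$ always lies in $\xi^{\perp}$, so $\mathrm{ran}_0\,\curl \subseteq \ker\div$; conversely any transversal $\hat g \in \xi^{\perp}$ supported away from the origin is attained, since $\hat g = \xi \times \bigl( \abs{\xi}^{-2}\, \hat g \times \xi \bigr)$ with $\abs{\xi}^{-2}\,\hat g \times \xi \in L^2$. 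Such fields are dense in the transversal subspace, whence $\ran\curl = \overline{\mathrm{ran}_0\,\curl} = \ker\div = \Jphys$.

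The two steps requiring genuine care — and the place I expect the main obstacle — are precisely the closures. First, the identity $\ran\curl = \ker\div$ is a density statement rather than surjectivity: the raw range $\mathrm{ran}_0\,\curl$ fails to be closed because of the factor $\abs{\xi}^{-2}$ near $\xi = 0$, so one must approximate a general transversal field by ones vanishing near the origin. Second, recovering the potential in $\Gphys = \ran\grad$ is delicate: a longitudinal $L^2$ field has $\hat\psi = \lambda\,\xi$ with $\xi\lambda \in L^2$ but with $\lambda$ possibly not square-integrable near $\xi = 0$, so the potential $\varphi$ determined by $\hat\varphi = -\ii\lambda$ is in general only locally square-integrable — which is exactly why the definition of $\Gphys$ admits $\varphi \in L^2_{\mathrm{loc}}(\R^3)$ rather than $\varphi \in L^2(\R^3)$. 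Making this identification precise, together with checking $\Jphys \subset \domain(\div)$, is the technical heart; the orthogonality and the pointwise cross-product algebra are then routine.
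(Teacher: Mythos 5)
Your argument is correct, but it takes a genuinely different route from the paper. The paper does not prove the theorem from scratch: it assembles the statement from the literature, citing Temam for the identification \eqref{appendix:curl:eqn:definition_J}, Temam and Picard for the closedness of $\Gphys$ and the two equalities in \eqref{appendix:curl:eqn:characterization_J_G_ker_ran}, and disposing of the harmonic fields by noting that $\Delta \psi = 0$ has no non-trivial $L^2$-solutions on $\R^3$. You instead give a self-contained proof by conjugating with $\Fourier$ and reducing everything to the pointwise splitting $\C^3 = \C \xi \oplus \xi^{\perp}$, i.e.\ you construct the Leray projector directly as the Fourier multiplier $P_T(\xi) = \id - \sabs{\xi}^{-2} \, \xi \xi^{\ast}$. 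This is the classical multiplier proof; it is more elementary and makes the two delicate points (the non-closedness of $\mathrm{ran}_0 \, \curl$ caused by the factor $\sabs{\xi}^{-2}$ near $\xi = 0$, and the fact that the potential of a longitudinal field is only in $L^2_{\mathrm{loc}}$) completely transparent, which is exactly where you correctly locate the technical heart. What it gives up is generality: it works only because $\Omega = \R^3$ has no boundary, whereas the references the paper leans on cover general domains, where the decomposition acquires boundary conditions and a non-trivial space of harmonic fields.

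One small omission relative to the theorem as it sits in the paper: in the text preceding the statement, $\Jphys$ is \emph{defined} as $\overline{\mathbf{C}_{\sigma}}^{\norm{\cdot}_{L^2(\R^3,\C^3)}}$, the $L^2$-closure of the divergence-free test fields, so \eqref{appendix:curl:eqn:definition_J} silently asserts the density of $\mathbf{C}_{\sigma}$ in $\ker \div$. You never prove this, only the identification of $\ker \div$ with the transversal subspace. The gap is easily closed inside your own framework: $\curl \, \Cont^{\infty}_{\mathrm{c}}(\R^3,\C^3) \subset \mathbf{C}_{\sigma}$, and since $\Cont^{\infty}_{\mathrm{c}}$ is a core for $\curl$, its image is dense in $\ran \curl$, which by your own argument equals $\ker \div$. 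I would add this one line; with it, the proof is complete.
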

	\begin{proof}[Sketch]
		Equation \eqref{appendix:curl:eqn:definition_J} is proven in \cite[Chapter~I, Theorem~1.4, eq. (1.34)]{Temam:theory_Navier_Stokes:2001}. The inclusion $\Jphys \subset \domain(\div)$ follows from the observation that the norms $\norm{\cdot}_{L^2(\R^3,\C^3)}$ and $\norm{\cdot}_{\div}$ coincide on $\mathbf{C}_{\sigma}$. 
		
		The definition of $\Gphys$ as gradient fields (first equality) has been shown in \cite[Chapter I, Theorem 1.4, eq. (1.33) and Remark 1.5]{Temam:theory_Navier_Stokes:2001}. The closedness of $\Gphys$, and thus, the second equality is discussed in the proof of \cite[Lemma~2.5]{Picard:selfadjointness_curl:1998}. (According to our choice of convention in Section~\ref{intro:notation}, $\ran \grad$ is the closure of $\mathrm{ran}_0 \, \grad = \grad \, H^1(\R^3)$, and for an example of $\varphi \in L^2_{\mathrm{loc}}(\R^3)\setminus H^1(\R^3)$ such that $\grad \varphi \in L^2(\R^3,\C^3)$ we refer to \cite[Note~2, pg.~156]{Galdo:steady_state_Navier_Stokes:2011}.) 
		
		The proofs of the two remaining equalities in \eqref{appendix:curl:eqn:characterization_J_G_ker_ran} can be found in \cite[Theorem~1.1]{Picard:selfadjointness_curl:1998}. 
		
		We remark that in case of the vector fields on all of $\R^3$, the space of harmonic vector fields $H_N := \ker \div \cap \ker \curl = \{ 0 \}$ is the trivial vector space, because $\Delta \psi = 0$ has no non-trivial solutions on $L^2(\R^3,\C^3)$. This concludes the proof of \eqref{appendix:curl:eqn:Leray_decomposition}. 
	\end{proof}
	\begin{remark}\label{appendix:curl:remark:nomenclature}
		According to the standard nomenclature $\Jphys$ is known as the space of the \emph{solenoidal} or \emph{transversal} vector fields while $\Gphys$ is the space of the \emph{irrotational} or \emph{longitudinal} vector fields. The orthogonal projection $\Pphys : L^2(\R^3,\C^3) \longrightarrow \Jphys $ is called \emph{Leray projection}. The identification $\Jphys = \ran \curl$ implies that $\curl : \Jphys \longrightarrow \Jphys$ and this is enough for $[\Pphys , \curl] = 0$.
	\end{remark}
	Theorem \ref{appendix:curl:thm:Leray_decomposition} has two immediate consequences: The first is the \emph{Helmholtz splitting}, meaning each $\psi \in L^2(\R^3,\C^3)$ can be uniquely decomposed into a \emph{stream field} $\phi \in \domain(\curl)$ and the gradient of a \emph{potential function} $\varphi \in L^2_{\mathrm{loc}}(\R^3)$, 
	\begin{align*}
		\psi = \curl \, \phi + \grad \varphi 
		, 
	\end{align*}
	where $\curl \, \phi$ and $\grad \varphi$ are mutually orthogonal. The second is the content of the following 
	\begin{cor}[Domain of $\curl$]\label{appendix:curl:cor:domain_curl}
		The domain $\domain(\curl)$ of the operator $\curl$ admits the following splitting
		\begin{align}
			\domain(\curl) &= \bigl ( \Jphys \cap \domain(\curl) \bigr ) \orthsum \Gphys 
			\notag \\
			&
			= \bigl ( \Jphys \cap H^1(\R^3,\C^3) \bigr ) \orthsum \Gphys 
			\notag \\
			&
			= \bigl ( \ker \div \cap H^1(\R^3,\C^3) \bigr ) \orthsum \ker \curl 
			\notag \\
			&
			= \bigl ( \ker \div \cap H^1(\R^3,\C^3) \bigr ) \orthsum \ran \grad 
			.
			\label{appendix:curl:eqn:splitting_domain_curl}
		\end{align}
	\end{cor}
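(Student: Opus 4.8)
The plan is to prove the four equalities in \eqref{appendix:curl:eqn:splitting_domain_curl} successively, feeding in the orthogonal decomposition of Theorem~\ref{appendix:curl:thm:Leray_decomposition}, the $H^1$-identity \eqref{appendix:curl:eqn:intersection_domain_curl_div_equal_H1}, and the characterizations \eqref{appendix:curl:eqn:characterization_J_G_ker_ran} as the only ingredients. No new estimates should be required: the work has already been done in establishing those three facts, and the corollary is essentially an assembly of them.

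For the first equality I would restrict the orthogonal splitting $L^2(\R^3,\C^3) = \Jphys \orthsum \Gphys$ to $\domain(\curl)$. Both candidate summands sit inside the domain: $\Jphys \cap \domain(\curl)$ trivially, and $\Gphys = \ker \curl \subseteq \domain(\curl)$ by \eqref{appendix:curl:eqn:characterization_J_G_ker_ran}; since they remain mutually orthogonal, the right-hand side is contained in $\domain(\curl)$. For the reverse inclusion, take $\psi \in \domain(\curl)$ and write $\psi = \Pphys \psi + \Qphys \psi$ via the Leray projection, the two summands being orthogonal. The longitudinal part $\Qphys \psi \in \Gphys = \ker \curl$ lies in $\domain(\curl)$, hence so does $\Pphys \psi = \psi - \Qphys \psi$, which gives $\Pphys \psi \in \Jphys \cap \domain(\curl)$. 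This is the concrete manifestation of the commutation $[\Pphys , \curl] = 0$ recorded in Remark~\ref{appendix:curl:remark:nomenclature}, and it establishes the first equality.

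For the second equality it remains to see $\Jphys \cap \domain(\curl) = \Jphys \cap H^1(\R^3,\C^3)$. Because $\Jphys = \ker \div \subseteq \domain(\div)$, any $\psi \in \Jphys \cap \domain(\curl)$ automatically lies in $\domain(\curl) \cap \domain(\div)$, which equals $H^1(\R^3,\C^3)$ by \eqref{appendix:curl:eqn:intersection_domain_curl_div_equal_H1}; the opposite inclusion is immediate since $H^1(\R^3,\C^3) \subseteq \domain(\curl)$. The last two equalities are then a pure rewriting of the summands using $\Jphys = \ker \div$ and $\Gphys = \ker \curl = \ran \grad$, again from \eqref{appendix:curl:eqn:characterization_J_G_ker_ran}.

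The argument is short once Theorem~\ref{appendix:curl:thm:Leray_decomposition} and the $H^1$-identity are in hand. The only place that is not pure bookkeeping is the regularity upgrade in the second equality, where one genuinely relies on the nontrivial fact \eqref{appendix:curl:eqn:intersection_domain_curl_div_equal_H1} that controlling \emph{both} curl and divergence is what pins down membership in $H^1(\R^3,\C^3)$; curl-control alone would not suffice. I therefore expect that step to be the conceptual crux, while the remaining manipulations follow directly from the already-proven decomposition.
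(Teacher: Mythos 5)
Your proposal is correct and follows essentially the same route as the paper: both rest on the Helmholtz--Hodge--Weyl--Leray decomposition together with $\Gphys = \ker \curl \subset \domain(\curl)$ for the first equality, and on the identity $\domain(\curl) \cap \domain(\div) = H^1(\R^3,\C^3)$ combined with $\Jphys = \ker \div \subset \domain(\div)$ for the regularity upgrade in the second. You merely spell out the reverse inclusion via the Leray projection more explicitly than the paper does; the substance is identical.
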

	\begin{proof}
		Theorem \ref{appendix:curl:thm:Leray_decomposition} implies $\domain(\curl) = \bigl ( \Jphys \cap\domain(\curl) \bigr ) \orthsum \Gphys$ since $\Gphys \subset \domain(\curl)$. Moreover, relation \eqref{appendix:curl:eqn:intersection_domain_curl_div_equal_H1} and $\Jphys = \ker \div$ lead to $\Jphys \cap \domain(\curl) = \bigl ( \Jphys \cap \domain(\div) \bigr ) \cap \domain(\curl) = \Jphys \cap H^1(\R^3,\C^3)$.
	\end{proof}
	% 
	% subsection The Helmholtz-Hodge-Weyl-Leray decomposition (end)
	% section section name (end)

	\subsection{The $\Rot$ operator} % (fold)
	\label{appendix:ROT}
	The block structure displayed in equation \eqref{intro:eqn:Rot} implies $\Rot$ defines a selfadjoint operator on $\domain(\Rot) = \domain(\curl) \orthsum \domain(\curl)$ where $\domain(\curl)$ is the domain of the rotation operator $\curl$ as given in Corollary~\ref{appendix:curl:cor:domain_curl}. The splitting \eqref{appendix:curl:eqn:splitting_domain_curl} of $\domain(\curl)$ carries over to $\Rot$, namely 
	\begin{align}
		\domain := \domain(\Rot) = \bigl ( \ker \Div \cap H^1(\R^3,\C^6) \bigr ) \orthsum \ran \Grad
		, 
		\label{Maxwell:eqn:domain_Maxwell}
	\end{align}
	where $\Div := \div \directsum \div$ and $\Grad := \grad \oplus \grad$ consist of two copies of $\div$ and $\grad$ which are defined as in Appendix~\ref{appendix:curl}, and $\ran \Grad$ is the closure of $\mathrm{ran}_0 \, \Grad$. 

	The splitting of the domain \eqref{Maxwell:eqn:domain_Maxwell} is motivated by the orthogonal decomposition of 
	\begin{align*}
		L^2(\R^3,\C^6) = \Jphys \orthsum \mathbf{G} 
		:= \ker \Div \orthsum \ran \Grad 
		= \ran \Rot \orthsum \ker \Rot 
	\end{align*}
	into transversal and longitudinal vector fields provided by the Helmholtz-Hodge-Weyl-Leray theorem (\cf Section~\ref{appendix:curl:Leray_decomposition}); it extends the unique splitting 
	\begin{align*}
		\Psi = \Rot \, \Phi + \Grad \, \varphi 
		, 
		&&
		\Phi \in L^2(\R^3,\C^6) 
		, \; 
		\varphi \in L^2_{\mathrm{loc}}(\R^3,\C^2) 
		, 
	\end{align*}
	from $\Cont^{\infty}_{\mathrm{c}}(\R^3,\C^6)$ to all of $L^2(\R^3,\C^6)$. Note that the vectors $\Rot \, \Phi$ and $\Grad \, \varphi$ are orthogonal with respect to the scalar product $\scpro{\cdot \,}{\cdot}_{L^2(\R^3,\C^6)}$, and thus there exist orthogonal projections $\mathbf{P}$ and $\mathbf{Q}$ onto $\Jphys$ and $\mathbf{G}$. Moreover, Remark~\ref{appendix:curl:remark:nomenclature} implies $\Cont^{\infty}_{\mathrm{c}}(\R^3,\C^6)$ and $H^1(\R^3,\C^6)$ are cores of $\Rot$. 
	
	The free Maxwell operator $\Rot \cong \int_{\BZ}^{\oplus} \dd k \, \Rot(k)$ is periodic with respect to any lattice, and thus we can use the Zak transform to fiber decompose it. The eigenvectors to any eigenvalue of $\Rot(k)$ can be explicitly constructed in terms of plane waves. 
	\begin{lem}[Band spectrum of $\Rot^{\Zak}$]\label{appendix:Maxwell:lem:properties_fibration_Rot}
		~
		\begin{enumerate}[(i)]
			\item $\displaystyle \sigma \bigl ( \Rot(k) \bigr ) = \{ 0 \} \cup \bigcup_{\gamma^* \in \Gamma^*} \bigl \{ \pm \sabs{\gamma^* + k} \bigr \}$
			\item There exists a $k$-dependent family of linearly independent vectors 
			\begin{align*}
				\bigl \{ u_{j \, \pm \, \gamma^*}(k) \; \; \vert \; \; \gamma^* \in \Gamma^* , \; j = 1 , 2 , 3 \bigr \} 
			\end{align*}
			which spans all of $L^2(\T^3,\C^6)$ and has the following properties: 
			\begin{enumerate}[(1)]
				\item The $u_{j \, \pm \, \gamma^*}(k)$ are eigenfunctions to $\Rot(k)$ with eigenvalues $\pm \sabs{\gamma^* + k}$ or $0$ for all $k \in \R^3$. 
				\item Away from $\Gamma^* \subset \R^3$, all maps $k \mapsto u_{j \, \pm \, \gamma^*}(k) \in L^2(\T^3,\C^6)$ are locally analytic on a small neighborhood which can be chosen to be independent of $j$ and $\gamma^*$. 
				\item Near $\gamma_0^* \in \Gamma^*$, only those $u_{j \, \pm \, \gamma^*}(k)$ are locally analytic on a common neighborhood for which $\gamma^* \neq - \gamma_0^*$ holds. 
			\end{enumerate}
		\end{enumerate}
	\end{lem}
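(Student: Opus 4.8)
The plan is to diagonalize $\Rot(k)$ in the plane-wave basis of $L^2(\T^3,\C^6)$ and reduce the whole statement to the spectral analysis of an explicit family of constant $6\times 6$ matrices. Since $(-\ii\nabla_y+k)\,\e^{+\ii\gamma^*\cdot y} = (\gamma^*+k)\,\e^{+\ii\gamma^*\cdot y}$, the operator $\Rot(k) = \left ( \begin{smallmatrix} 0 & -(-\ii\nabla_y+k)^{\times} \\ +(-\ii\nabla_y+k)^{\times} & 0 \end{smallmatrix} \right )$ leaves each six-dimensional subspace $V_{\gamma^*} := \{ a\,\e^{+\ii\gamma^*\cdot y} \mid a\in\C^6 \}$ invariant and acts there as multiplication by
\begin{align*}
	R_{\gamma^*}(k) := \left ( \begin{matrix} 0 & - q^{\times} \\ + q^{\times} & 0 \end{matrix} \right ) , \qquad q := \gamma^* + k .
\end{align*}
First I would record that $\{ \e^{+\ii\gamma^*\cdot y} \}_{\gamma^*\in\Gamma^*}$ is an orthogonal basis of $L^2(\T^3)$, so that $L^2(\T^3,\C^6) = \bigoplus_{\gamma^*\in\Gamma^*} V_{\gamma^*}$ is an orthogonal decomposition and $\Rot(k)$ is block-diagonal with finite-dimensional blocks $R_{\gamma^*}(k)$; hence its spectrum is the union of the spectra of the blocks.

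Next I would compute $\sigma(R_{\gamma^*}(k))$. Using $(q^{\times})^2 = q\,q^{T} - \sabs{q}^2\,\id$ one gets $R_{\gamma^*}(k)^2 = \mathrm{diag}\bigl ( \sabs{q}^2\,\id - q\,q^{T} , \sabs{q}^2\,\id - q\,q^{T} \bigr )$, whose eigenvalues are $\sabs{q}^2$ (multiplicity four) and $0$ (multiplicity two). As $R_{\gamma^*}(k)$ is hermitian this yields eigenvalues $\pm\sabs{\gamma^*+k}$, each of multiplicity two, together with $0$ of multiplicity two; taking the union over $\gamma^*$ and noting $\sabs{\gamma^*+k}\to\infty$ (so the set is already closed) gives the spectrum claimed in (i). For (ii)(1) I would write the eigenvectors explicitly: with $\epsilon(q)\in\C^3$ a normalized eigenvector $q^{\times}\epsilon(q) = \ii\sabs{q}\,\epsilon(q)$, so that $\overline{\epsilon(q)}$ solves the $-\ii\sabs{q}$ equation and $\{ \epsilon(q),\overline{\epsilon(q)},q \}$ is a basis of $\C^3$, a direct check shows that
\begin{align*}
	\left ( \begin{matrix} \epsilon \\ +\ii\epsilon \end{matrix} \right ) , \; \left ( \begin{matrix} \overline{\epsilon} \\ -\ii\overline{\epsilon} \end{matrix} \right ) \; (\text{eigenvalue } +\sabs{q}) , \qquad \left ( \begin{matrix} \epsilon \\ -\ii\epsilon \end{matrix} \right ) , \; \left ( \begin{matrix} \overline{\epsilon} \\ +\ii\overline{\epsilon} \end{matrix} \right ) \; (\text{eigenvalue } -\sabs{q})
\end{align*}
together with $(q,0)$ and $(0,q)$ (eigenvalue $0$) are six linearly independent eigenvectors of $R_{\gamma^*}(k)$. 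Multiplying by $\e^{+\ii\gamma^*\cdot y}$ and labelling the first four by $j=1,2$ and $\pm$, the two longitudinal ones by $j=3,\pm$, produces the family $\{ u_{j\,\pm\,\gamma^*}(k) \}$; linear independence across different $\gamma^*$ is automatic from orthogonality of the plane waves, and spanning holds because the six vectors above form a basis of $\C^6$ whenever $q\neq 0$ (at the isolated points $q=0$ one simply fixes any basis of zero modes).

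The analyticity assertions (ii)(2)--(3) are where the real work lies, and I expect the uniformity in $\gamma^*$ to be the main obstacle. The eigenvalues $\pm\sabs{q}$ and the longitudinal vectors $(q,0),(0,q)$ are manifestly analytic for $q\neq0$, so analyticity of $k\mapsto u_{j\,\pm\,\gamma^*}(k)$ reduces to analyticity of $k\mapsto\epsilon(\gamma^*+k)$. I would observe that $\epsilon(q)$ is a section of the $+\ii\sabs{q}$-eigenline bundle of $q^{\times}$ over $\R^3\setminus\{0\}$; this bundle is non-trivial over $S^2$ (the topological obstruction to combing the sphere), so no global analytic choice exists, but over any contractible region avoiding the origin an analytic section does exist.

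For (2) I would fix $k_0\notin\Gamma^*$, set $d := \dist(k_0,\Gamma^*) > 0$ and take the ball $U := \{ \sabs{k-k_0} < d/2 \}$; for every $\gamma^*$ the translate $\gamma^*+U$ is a ball avoiding $0$, hence contractible, so $\epsilon(\gamma^*+\,\cdot\,)$ can be chosen analytic on the common neighborhood $U$, uniformly in $j$ and $\gamma^*$. For (3) I would take $k_0=\gamma_0^*$ and note that $q=\gamma^*+k$ stays bounded away from $0$ near $k_0$ exactly when $\gamma^*\neq-\gamma_0^*$, so the same argument gives analyticity of all $u_{j\,\pm\,\gamma^*}$ with $\gamma^*\neq-\gamma_0^*$ on a common neighborhood; for $\gamma^*=-\gamma_0^*$ one has $q\to0$ as $k\to\gamma_0^*$, the gap $\pm\sabs{q}$ closes and $\epsilon(q)$ admits no analytic continuation, which is precisely the excluded case.
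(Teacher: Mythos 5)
Your proposal is correct and follows essentially the same route as the paper: both block-diagonalize $\Rot(k)$ over the plane-wave basis and explicitly diagonalize the resulting constant $6 \times 6$ symbol matrices at $q = \gamma^* + k$ (the paper organizes this as the tensor factorization $\mathbf{curl} \otimes \sigma_2$, so its eigenvectors $v_j(\gamma^*+k) \otimes w_{\pm}$ coincide with your $(\epsilon, \pm \ii \epsilon)$, $(q,0)$, $(0,q)$), and both reduce the analyticity claims to the bound $\sabs{\gamma^* + k} \geq \dist \bigl ( k_0 , \Gamma^* \bigr ) > 0$. If anything, your discussion of the topological obstruction to a \emph{global} analytic choice of $\epsilon(q)$ on $\R^3 \setminus \{0\}$ is more careful than the paper's, which simply asserts that the $v_j(\xi)$ are analytic away from $\xi = 0$.
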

	\begin{proof}
		We begin by analyzing the original operator $\Rot = \curl \otimes \sigma_2$ which can be factorized into an operator acting on $L^2(\R^3,\C^3)$ and a $2 \times 2$ matrix. The Pauli matrix $\sigma_2$ has eigenvalues $\pm 1$ and eigenvectors $w_{\pm}$. $\curl$ fibers in $\xi$ after applying the usual Fourier transform $\Fourier : L^2(\R^3,\C^3) \longrightarrow L^2(\R^3,\C^3)$, 
		\begin{align*}
			\Fourier \, \nabla_x^{\times} \, \Fourier^{-1} &= \int_{\R^3}^{\oplus} \dd \xi \, (\ii \xi)^{\times} 
			=: \int_{\R^3}^{\oplus} \dd \xi \, \curl(\xi)
			, 
		\end{align*}
		and $\curl(\xi) = \ii \xi^{\times}$ (see equation~\eqref{Maxwell:eqn:Maxwell_physical}) can be diagonalized explicitly: it has eigenvalues $\{ 0 , \pm \abs{\xi} \}$. Moreover, it can be seen that the eigenvectors $v_j(\xi)$, $j = 1 , 2 , 3$, are analytic away from $\xi = 0$. For $\xi \neq 0$, we set $v_1(\xi)$, $v_2(\xi)$ and $v_3(\xi)$ to be the eigenvectors to $+ \sabs{\xi}$, $- \sabs{\xi}$ and $0$, respectively. At $\xi = 0$ neither the eigenvalues $\pm \abs{\xi}$ nor the eigenvectors are analytic. 
		
		Now to the proof of the Lemma: For $j = 1 , 2 , 3$ let us set 
		\begin{align*}
			u_{j \, \pm \, \gamma^*}(k) := \e^{+ \ii \gamma^* \cdot y} \, v_j(\gamma^* + k) \otimes w_{\pm} 
		\end{align*}
		where $v_j(\gamma^* + k)$ is defined as in the preceding paragraph for $\xi = \gamma^* + k$. The exponential functions $\{ \e^{+ \ii \gamma^* \cdot y} \}_{\gamma^* \in \Gamma^*}$ and the $\{ v_j(\xi) \otimes w_{\pm} \}_{j = 1 , 2 , 3}$ form a basis of $L^2(\T^3)$ and $\C^3 \otimes \C^2 \cong \C^6$, respectively, and hence, the set of all $u_{j \, \pm \, \gamma^*}$ forms a basis of $L^2(\T^3,\C^6)$. Moreover, these vectors are eigenfunctions to $\Rot(k)$ with eigenvalues $\pm \sabs{\gamma^* + k}$ ($j = 1 , 2$) or $0$ ($j = 3$), and thus we have shown (i), $\sigma \bigl ( \Rot(k) \bigr ) = \{ 0 \} \cup \bigcup_{\gamma^* \in \Gamma^*} \bigl \{ \pm \sabs{\gamma^* + k} \bigr \}$, and (ii)~(1). 
		
		If $k_0 \in \R^3 \setminus \Gamma^*$, then 
		\begin{align*}
			\sabs{\gamma^* + k} \geq \mathrm{dist} \bigl ( k_0 , \Gamma^* \bigr ) > 0 
		\end{align*}
		is bounded from below which implies the eigenvectors $u_{j \, \pm \, \gamma^*}$ are analytic in some neighborhood of $k_0$. These vectors $v_j(\gamma^* + k)$, $j = 1 , 2 , 3$, are analytic on an open ball around $k_0$ with radius $\mathrm{dist} \bigl ( k_0 , \Gamma^* \bigr )$, proving (ii)~(2). 
		
		If, on the other hand, $k_0 = \gamma_0^* \in \Gamma^*$, then the basis involves the vector 
		\begin{align*}
			u_{j \, \pm \, -\gamma_0^*}(\gamma_0^*) = \e^{- \ii \gamma_0^* \cdot y} \, v_j(0) \otimes w_{\pm}
		\end{align*}
		which cannot be extended analytically to a neighborhood of $k_0 = \gamma_0^*$, thus proving (ii)~(3). 
	\end{proof}
	%
	% section Auxiliary results concerning the periodic Maxwell operator (end)
\end{appendix}

\printbibliography

\end{document}